\newcommand{\polylog}{\mbox{\rm polylog}}
\DeclareMathOperator{\poly}{poly}
\newtheorem{theorem}{Theorem}
\newtheorem{fact}[theorem]{Fact}
\newtheorem{definition}[theorem]{Definition}
\newtheorem{remark}[theorem]{Remark}
\newtheorem{lemma}[theorem]{Lemma}
\newtheorem{claim}[theorem]{Claim}
\newtheorem{result}{Result}
\title{Do you know what q-means?}
\author{%
  Arjan Cornelissen\\ %\thanks{https://arriopolis.github.io/} \\
  Simons Institute \\
  UC Berkeley\\
  California, USA \\
  \texttt{ajcornelissen@outlook.com} \\
  \And
  Joao F. Doriguello\\ %\thanks{www.joaodoriguello.com} \\
  HUN-REN Alfr\'ed R\'enyi Institute of Mathematics\\
  Budapest, Hungary\\
  \texttt{doriguello@renyi.hu} \\
  \AND
  Alessandro Luongo\\
  Centre for Quantum Technologies\\
  National University of Singapore\\
  Singapore\\
  Inveriant Pte. Ltd.\\
  \texttt{ale@nus.edu.sg} \\
  \And
  Ewin Tang\\
  Department of EECS\\
  UC Berkeley\\
  California, USA \\
  \texttt{ewin@berkeley.edu}
}
\begin{document}

\maketitle

\begin{abstract}
Clustering is one of the most important tools for analysis of large datasets, and perhaps the most popular clustering algorithm is Lloyd's algorithm for $k$-means.
This algorithm takes $n$ vectors $V=[v_1,\dots,v_n]\in\mathbb{R}^{d\times n}$ and outputs $k$ centroids $c_1,\dots,c_k\in\mathbb{R}^d$; these partition the vectors into clusters based on which centroid is closest to a particular vector. We present a classical $\varepsilon$-$k$-means algorithm that performs an approximate version of one iteration of Lloyd's algorithm with time complexity $\widetilde{O}\big(\frac{\|V\|_F^2}{n}\frac{k^{2}d}{\varepsilon^2}(k + \log{n})\big)$, exponentially improving the dependence on the data size $n$ and matching that of the ``$q$-means'' quantum algorithm originally proposed by Kerenidis, Landman, Luongo, and Prakash (NeurIPS'19). Moreover, we propose an improved $q$-means quantum algorithm with time complexity $\widetilde{O}\big(\frac{\|V\|_F}{\sqrt{n}}\frac{k^{3/2}d}{\varepsilon}(\sqrt{k}+\sqrt{d})(\sqrt{k} + \log{n})\big)$ that quadratically improves the runtime of our classical $\varepsilon$-$k$-means algorithm in several parameters.
Our quantum algorithm does not rely on quantum linear algebra primitives of prior work, but instead only uses QRAM to prepare simple states based on the current iteration's clusters and multivariate quantum amplitude estimation. Finally, we provide classical and quantum query lower bounds, showing that our algorithms are optimal in most parameters.
\end{abstract}

\section{Introduction}

Among machine learning problems, data clustering and the $k$-means problem are of particular relevance and have attracted much attention in the past~\cite{hartigan1979algorithm,krishna1999genetic,likas2003global}. Here the task is to find an assignment of each vector from a dataset of size $n$ to one of $k$ labels (for a given $k$ assumed to be known) such that similar vectors are assigned to the same cluster. To be more precise, in the $k$-means problem we are given $n$ vectors $v_1,\dots,v_n\in\mathbb{R}^d$ as columns in a matrix $V\in\mathbb{R}^{d\times n}$, and a positive integer $k$, and the task is to find $k$ centroids $c_1,\dots, c_k\in\mathbb{R}^d$ such that the cost function $\sum_{i\in[n]} \min_{j\in[k]} \|v_i - c_j\|^2$, called \emph{residual sum of squares}, is minimized, where $\|v_i - c_j\|$ is the Euclidean distance between $v_i$ and $c_j$ and $[n] := \{1,\dots,n\}$ for $n\in \mathbb{N}$.

Since the $k$-means problem is known to be NP-hard~\cite{dasgupta2008hardness,vattani2009hardness,mahajan2012planar}, several classical polynomial-time algorithms~\cite{kanungo2002local,jaiswal2014simple,ahmadian2019better,bhattacharya_et_al:LIPIcs:2020:13254} have been developed to obtain \emph{approximate} solutions. One such algorithm is the $k$-means algorithm (also known as Lloyd's algorithm) introduced by Lloyd~\cite{lloyd1982least}, a heuristic algorithm that iteratively updates the centroids $c_1,\dots,c_k$ until some desired precision is reached. At each time step $t$, the algorithm clusters the data into $k$ clusters denoted by the sets $\mathcal{C}_j^t\subseteq[n]$, $j\in[k]$, each with centroid $c_j^t$, and then updates the centroids based on such clustering. More precisely, the $k$-means algorithm starts with initial centroids $c_1^0,\dots,c_k^0\in\mathbb{R}^d$, picked either randomly or through some pre-processing routine as in the $k$-means++ algorithm~\cite{arthur2007k}, and alternates between two steps:
\begin{enumerate}[wide]
    \item Each vector $v_i$, $i\in[n]$, is assigned a cluster $\mathcal{C}_{\ell_i^t}^t$ where $\ell^t_i = \arg\min_{j\in[k]}\|v_i-c_j^t\|$; 
    \item The new centroids $\{c_j^{t+1}\}_{j\in[k]}$ are updated based on the new clusters, $c_j^{t+1} = \frac{1}{|\mathcal{C}_j^t|}\sum_{i\in\mathcal{C}_j^t} v_i$.
\end{enumerate}
The above steps are repeated until $\frac{1}{k}\sum_{j\in[k]} \|c_j^t - c_j^{t+1}\| \leq T$ for a given threshold $T > 0$, in which case we say that the algorithm has converged. The naive runtime of a single iteration is $O(knd)$.

On the other hand, quantum computing is a fast-paced field that promises a new paradigm to solve certain tasks more efficiently~\cite{shor1994algorithms,shor1999polynomial,grover1996fast,grover1997quantum}. The subfield of quantum machine learning aims to offer computational advantages in machine learning, with many new proposed algorithms~\cite{lloyd2014quantum,kerenidis2016quantum,kerenidis2020quantum,chakraborty2018power,lloyd2013quantum,allcock2020quantum}. A notable line of work in this subfield is quantum versions of the $k$-means algorithm. Aimeur, Brassard, and Gambs~\cite{aimeur2013quantum} proposed a quantum algorithm based on quantum minimum finding~\cite{durr1996quantum} for executing a single Lloyd's iteration for the $k$-median clustering problem (which is related to the $k$-means problem), which runs in time superlinear in the input size $n$. Lloyd, Mohseni, and Rebentrost~\cite{lloyd2013quantum} gave a quantum algorithm for approximately executing a single Lloyd's iteration from the $k$-means problem in time linear in $n$ using an efficient subroutine for quantum distance estimation assuming quantum access to the dataset via a QRAM (Quantum Random Access Memory)~\cite{giovannetti2008architectures,giovannetti2008quantum}. 

More recently, Kerenidis, Landman, Luongo, and Prakash~\cite{kerenidis2019q} proposed a quantum version of Lloyd's algorithm called $q$-means. They use quantum linear algebra subroutines and assume the input vectors are stored in QRAM and that all clusters are of roughly equal size. Their per-iteration runtime depends only polylogarithmically on the size $n$ of the dataset, an exponential improvement compared to prior work. However, their quantum algorithm only performs each iteration approximately. The authors show that $q$-means performs a robust version of $k$-means called $(\varepsilon,\tau)$-$k$-means, and then show through experiments that the approximation does not worsen the quality of the centroids. In the $(\varepsilon,\tau)$-$k$-means algorithm, two noise parameters $\varepsilon,\tau \geq 0$ are introduced in the distance estimation and centroid update steps. It alternates between the steps:
\begin{enumerate}[wide]
    \item A vector $v_i$ is assigned a cluster $\mathcal{C}_{\ell_i^t}^t$ where $\ell_i^t\in\{j\in[k]: \|v_i - c_j^t\|^2 \leq \min_{j'\in[k]}\|v_i - c_{j'}^t\|^2 + \tau\}$;
    \item The new centroids $\{c_j^{t+1}\}_{j\in[k]}$ are updated such that $\big\|c_j^{t+1} - \frac{1}{|\mathcal{C}_j^t|}\sum_{i\in\mathcal{C}_j^t}v_i\big\| \leq \varepsilon$.
\end{enumerate}
%
%Again, it is iterated until convergence.

The overall idea behind $q$-means from~\cite{kerenidis2019q} is to first create the quantum state $n^{-1/2}\sum_{i\in[n]} |i\rangle|\ell_i^t\rangle$ using distance estimation and quantum minimum finding~\cite{durr1996quantum}, followed by measuring the label register $|\ell_i^t\rangle$ to obtain $|\chi_j^t\rangle = |\mathcal{C}_j^t|^{-1/2}\sum_{i\in\mathcal{C}_j^t}|i\rangle$ for some random $j\in [k]$. %Here $|\chi_j^t\rangle$ is the quantum state of the characteristic vector $\chi_j^t\in\mathbb{R}^n$ defined as $(\chi_{j}^{t})_{i} = |\mathcal{C}_j^t|^{-1}$ if $i\in \mathcal{C}^t_{j}$ and $0$ if $i \not \in \mathcal{C}^t_{j}$. 
The $q$-means algorithm then proceeds to perform a matrix multiplication with matrix $V$ and the quantum state $|\chi_j^t\rangle$ by using quantum linear algebra techniques. This leads to $|c_j^{t+1}\rangle = \|c_j^{t+1}\|^{-1}\sum_{i\in[d]} (c_j^{t+1})_i |i\rangle $. %by the crucial fact that $c_j^{t+1} = V \chi_j^t$. 
Finally, quantum tomography~\cite{kerenidis2020quantuma} is employed on $|c_j^{t+1}\rangle$ in order to retrieve a classical description. Since quantum linear algebra techniques are used to obtain $|c_j^{t+1}\rangle$ from $|\chi_j^t\rangle$, the final runtime depends on quantities like the condition number $\kappa(V)$ of the matrix $V$ and a matrix dependent parameter $\mu(V)$ which is upper-bounded by the Frobenius norm $\|V\|_F$ of $V$.
\begin{fact}[{\cite[Theorem~3.1]{kerenidis2019q}}]
    For $\varepsilon>0$ and data matrix $V\in\mathbb{R}^{d\times n}$ with $1\leq \|v_i\|^2 \leq \nu$ and condition number $\kappa(V)$, the $q$-means algorithm outputs centroids consistent with the classical $(\varepsilon,\tau)$-$k$-means algorithm in $\widetilde{O}\big(kd\frac{\nu}{\varepsilon^2}\kappa(V)(\mu(V) + k\frac{\nu}{\tau}) + k^2\frac{\nu^{3/2}}{\varepsilon\tau}\kappa(V)\mu(V)\big)$ time per iteration.
\end{fact}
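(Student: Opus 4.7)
My plan is to implement one iteration of $(\varepsilon,\tau)$-$k$-means by chaining four quantum subroutines: coherent distance estimation via inner-product amplitude estimation on QRAM-prepared states, quantum minimum finding \`a la D\"urr--H\o yer, quantum matrix multiplication by $V$ through the block-encoding with parameter $\mu(V)$ provided by the QRAM data structure, and $\ell_\infty$ quantum state tomography. The per-iteration cost will naturally split into an \emph{assignment phase}, which produces the collapsed cluster states $|\chi_j^t\rangle$, and an \emph{update phase}, which applies $V$ to these states and reads off classical centroids within $\varepsilon$ in $\ell_2$-norm.

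For the assignment phase, I would build the uniform superposition $n^{-1/2}\sum_i |i\rangle$ and coherently estimate each $\|v_i - c_j^t\|^2$ for $j\in[k]$ to additive error $\tau$, where each distance estimate costs $\widetilde{O}(\mu(V)\nu/\tau)$ because the QRAM yields inner-product circuits whose acceptance amplitude is of order $1/\mu(V)$ and the vectors have squared norm at most $\nu$. Running quantum minimum finding coherently over the $k$ estimates produces labels $\ell_i^t$ satisfying the $(\varepsilon,\tau)$-assignment condition, and measuring the label register samples $j$ with probability $|\mathcal{C}_j^t|/n$ while leaving the index register in $|\chi_j^t\rangle$. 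This is where the $k\nu/\tau$ contribution to the runtime originates.

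For the update phase, I would feed copies of $|\chi_j^t\rangle$ through a quantum linear algebra routine that applies the block-encoded $V$, producing states $\delta$-close to the normalized centroid $|c_j^{t+1}\rangle$ at cost $\widetilde{O}(\kappa(V)\mu(V)/\delta)$ per copy; the $\kappa(V)$ factor arises because the success amplitude after matrix multiplication can be as small as $1/(\kappa(V)\mu(V))$ and amplitude amplification is needed. I would then apply $\ell_\infty$ tomography to reconstruct a classical vector within $\varepsilon$ of the true $c_j^{t+1}$ from $\widetilde{O}(d\nu/\varepsilon^2)$ such copies, with the $\nu$ factor absorbing the de-normalization of the centroid. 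Iterating over the $k$ centroids and multiplying the per-copy cost by the tomographic sample count yields the first bracketed term $kd\frac{\nu}{\varepsilon^2}\kappa(V)(\mu(V)+k\nu/\tau)$, once one also amortizes the assignment cost over each tomographic shot. A separate pass to estimate the norm $\|c_j^{t+1}\|$ to relative error $\varepsilon$ by amplitude estimation — needed to rescale the tomographic unit vector — produces the extra $k^2\frac{\nu^{3/2}}{\varepsilon\tau}\kappa(V)\mu(V)$ summand.

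The main obstacle is the error analysis. One has to calibrate the distance-estimation error $\tau$, the QLA precision $\delta$, and the tomographic precision so that the composed procedure outputs a centroid certifiably within $\varepsilon$ of $\frac{1}{|\mathcal{C}_j^t|}\sum_{i\in\mathcal{C}_j^t} v_i$, while simultaneously applying a union bound over the $k$ centroids and the $\widetilde{O}(d\nu/\varepsilon^2)$ tomographic shots and keeping $\polylog$ factors absorbed into $\widetilde{O}$. Handling the subnormalization $\mu(V)$ of the block-encoding jointly with the conditioning $\kappa(V)$, so that the post-selected amplitude in the matrix multiplication stays large enough to avoid blowing up the sample complexity, is the most delicate step and the place where the two regimes in the runtime — one driven by tomography, one driven by norm estimation — ultimately interact.
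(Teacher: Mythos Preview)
This statement is not proved in the paper at all: it is a \emph{Fact} quoted verbatim from \cite[Theorem~3.1]{kerenidis2019q} and serves only as background in the introduction. The present paper gives no proof of it, so there is nothing here to compare your attempt against. Your sketch does faithfully follow the architecture of the original $q$-means algorithm as summarized in the paragraph preceding the Fact (distance estimation plus D\"urr--H\o yer minimum finding to build $|\chi_j^t\rangle$, then quantum linear algebra to apply $V$, then tomography on $|c_j^{t+1}\rangle$), but if you want to verify the precise constants and the way the two summands arise you must consult \cite{kerenidis2019q} directly; the present paper neither reproduces nor refines that analysis.
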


\subsection{Our results}

In this work, we provide \emph{exponentially} improved classical $(\varepsilon,\tau)$-$k$-means algorithms that match the logarithmic dependence on $n$ of the $q$-means algorithm from Kerenidis et al.~\cite{kerenidis2019q}. Our classical algorithms are based on $\ell_1$-sampling and assume the existence of data structures that allow efficient $\ell_1$-sampling from the columns of $V$ and from the vector of its column norms. We propose two classical algorithms that differentiate between the procedure for computing the labels $\ell_i^t$. In our first classical algorithm, the distances $\|v_i - c_j^t\|$ are exactly computed in time $O(d)$ each, yielding the exact labeling $\ell_i^t = \arg\min_{j\in[k]}\|v_i - c_j^t\|$. In our second classical algorithm, the distances $\|v_i - c_j^t\|$ are approximated up to error $\frac{\tau}{2}$ via an $\ell_2$-sampling procedure, which yields the approximate labeling $\ell_i^t \in \{j\in[k]:\|v_i - c_j^t\|^2 \leq \min_{j'\in[k]}\|v_i - c_{j'}^{t}\|^2 + \tau\}$.

\begin{table}[t]

\def\arraystretch{2}
%\resizebox{\linewidth}{!}{
\begin{minipage}[b]{\linewidth}
\centering
\begin{tabular}{|c|c|c|}
\hline
Algorithm  & \makecell{Errors} & Query complexity  \\ \hline\hline

\makecell{Classical \\ \Cref{alg:k-means}}   & $(\varepsilon,0)$ & $\widetilde{O}\Big(\Big(\frac{\|V\|^2}{n} + \frac{\|V\|^2_{2,1}}{n^2} \Big)\frac{k^2d}{\varepsilon^2} \Big)$ \\ \hline

\makecell{Classical \\ \Cref{alg:k-means2}} & $(\varepsilon,\tau)$ & $\widetilde{O}\Big(\Big(\frac{\|V\|^2}{n} + \frac{\|V\|_{1,1}^2}{n^2} \Big)\frac{\|V\|_F^2\|V\|_{2,\infty}^2}{n}\frac{k^3}{\varepsilon^2\tau^2} \Big)$ \\ \hline

\makecell{Quantum \\ \Cref{alg:q-means}} & $(\varepsilon,0)$ & $\widetilde{O}\Big(\Big(\sqrt{k}\frac{\|V\|}{\sqrt{n}} + \sqrt{d}\frac{\|V\|_{2,1}}{n}\Big)\frac{k^{\frac{3}{2}}d}{\varepsilon}\Big)$ \\ \hline

\makecell{Quantum \\ \Cref{alg:q-means2}} & $(\varepsilon,\tau)$ & $\widetilde{O}\Big(\Big(\sqrt{k}\frac{\|V\|}{\sqrt{n}} + \sqrt{d}\frac{\|V\|_{1,1}}{n} \Big)\frac{\|V\|_F\|V\|_{2,\infty}}{\sqrt{n}}\frac{k^{\frac{3}{2}}}{\varepsilon\tau} \Big)$ \\ \hline
\end{tabular}
\end{minipage}
\vspace{0.4cm}
\newline
\centering
\begin{tabular}{|c|c|c|}
\hline
Algorithm  & \makecell{Errors} & Time complexity  \\ \hline\hline

\makecell{Classical \\ \Cref{alg:k-means}}   & $(\varepsilon,0)$ & $\widetilde{O}\Big(\Big(\frac{\|V\|^2}{n} + \frac{\|V\|_{2,1}^2}{n^2}\Big)\frac{k^2d}{\varepsilon^2}(\log{n} + k)\Big)$ \\ \hline

\makecell{Classical \\ \Cref{alg:k-means2}} & $(\varepsilon,\tau)$ & $\widetilde{O}\Big(\Big(\frac{\|V\|^2}{n} + \frac{\|V\|_{1,1}^2}{n^2} \Big)\frac{\|V\|_F^2\|V\|_{2,\infty}^2}{n}\frac{k^3}{\varepsilon^2\tau^2}\log{n} \Big)$ \\ \hline

\makecell{Quantum \\ \Cref{alg:q-means}} & $(\varepsilon,0)$ & $\widetilde{O}\Big(\Big(\sqrt{k}\frac{\|V\|}{\sqrt{n}} + \sqrt{d}\frac{\|V\|_{2,1}}{n}\Big)\frac{k^{\frac{3}{2}}d}{\varepsilon}(\log{n} + \sqrt{k})\Big)$ \\ \hline

\makecell{Quantum \\ \Cref{alg:q-means2}} & $(\varepsilon,\tau)$ & $\widetilde{O}\Big(\Big(\sqrt{k}\frac{\|V\|}{\sqrt{n}} + \sqrt{d}\frac{\|V\|_{1,1}}{n} \Big)\Big(\frac{\|V\|_F\|V\|_{2,\infty}}{\sqrt{n}}\frac{k^{\frac{3}{2}}}{\varepsilon\tau}\log{n} + \frac{k^2d}{\varepsilon}\Big) \Big)$ \\ \hline
\end{tabular}
\caption{Query and time complexities per iteration of our classical and quantum algorithms assuming $|\mathcal{C}_j^t| = \Omega(\frac{n}{k})$ for all $j\in[k]$. The errors $(\varepsilon,\tau)$ refer to the error $\tau$ in assigning a vector $v_i$ to a cluster $\mathcal{C}^t_{\ell_i^t}$ with $\ell_i^t \in \{j\in[k]:\|v_i - c_j^t\|^2 \leq \min_{j'\in[k]}\|v_i - c_{j'}^{t}\|^2 + \tau\}$, and the error $\varepsilon$ in computing the new centroids as $\big\|c_j^{t+1} - |\mathcal{C}_j^t|^{-1}\sum_{i\in\mathcal{C}_j^t}v_i\big\| \leq \varepsilon$. The matrix norms are $\|V\| = \max_{x\in\mathbb{R}^d:\|x\|=1}\|Vx\|$, $\|V\|_F = (\sum_{i\in[n],l\in[d]} V_{li}^2)^{\frac{1}{2}}$, $\|V\|_{1,1} = \sum_{i\in[n],l\in[d]} |V_{li}|$, $\|V\|_{2,1} = \sum_{i\in[n]}\|v_i\|$, $\|V\|_{2,\infty} = \max_{i\in[n]}\|v_i\|$. The quantum runtimes can be slightly improved if one has access to a special gate called QRAG~\cite{ambainis2007quantum,Allcock2024constantdepth}, see \Cref{footnote3,footnote4}. Here $\widetilde{O}(\cdot)$ hides $\polylog$ factors in $k$, $d$, $\frac{1}{\varepsilon}$, $\frac{1}{\tau}$, $\frac{\|V\|_F}{\sqrt{n}}$.}
\label{table:results}
\end{table}

Furthermore, we also propose improved \emph{quantum} algorithms that avoid the need for quantum linear algebra subroutines and still keep the logarithmic dependence on the size $n$ of the dataset, while \emph{improving} the complexity of our classical $(\varepsilon,\tau)$-$k$-means algorithms in several parameters. Similar to~\cite{kerenidis2019q}, our quantum algorithms construct states of the form $\sum_{i\in[n]}\frac{1}{\sqrt{n}} |i,\ell_i^t\rangle$ and $\sum_{i\in[n]}\sqrt{\frac{\|v_i\|}{\|V\|_{2,1}}}|i,\ell_i^t\rangle$, where $\|V\|_{2,1} = \sum_{i\in[n]}\|v_i\|$, but instead of performing quantum linear algebra subroutines and quantum tomography, we input such states (or more precisely the unitaries behind them) into a quantum amplitude estimation subroutine, in this case, the multivariate quantum Monte Carlo subroutine from Cornelissen, Hamoudi, and Jerbi~\cite{cornelissen2022near}. The outcome is several vectors that approximate $c_1^{t+1},\dots,c_k^{t+1}$ up to an additive error $\varepsilon$ in $\ell_\infty$-norm. In order to construct a superposition of labels $|\ell_i^t\rangle$, we require $V$ to be stored in a KP-tree~\cite{prakash2014quantum,kerenidis2016quantum} and to be accessible via a QRAM~\cite{giovannetti2008architectures,giovannetti2008quantum}. Once again, we propose two quantum algorithms that differentiate between the procedure for computing the labels $\ell_i^t$. In our first quantum algorithm, the distance $\|v_i - c_j^t\|$ are exactly computed in time $O(d)$, while in our second quantum algorithm, such distances are approximated using quantum amplitude estimation~\cite{brassard2002quantum}. Finally, we provide classical and quantum query lower bounds that show that our algorithms are optimal in most parameters. Our exact query and time complexities are described in \Cref{table:results}. In the next result, we provide a simplified version of our query complexities. Let $\|V\|_{2,\infty} = \max_{i\in[n]}\|v_i\|$ and $\|V\|_{1,1} = \sum_{i\in[n],l\in[d]}|V_{li}|$.

\begin{result}\label{thr:result}
    Let $\varepsilon,\tau > 0$ and $\delta\in(0,1)$. Assume sampling and QRAM access to $V\in\mathbb{R}^{d\times n}$ and $(\|v_i\|)_{i\in[n]}$. Assume all clusters satisfy $|\mathcal{C}_j^t| = \Omega(\frac{n}{k})$. There are classical and quantum algorithms that output centroids consistent with $(\varepsilon,\tau)$-$k$-means with probability $1-\delta$ and with per-iteration query complexity (up to $\polylog$ factors in $k$, $d$, $\frac{1}{\varepsilon}$, $\frac{1}{\tau}$, $\frac{1}{\delta}$, $\frac{\|V\|_F}{\sqrt{n}}$)
    \begin{align*}
        &\textbf{Classical:}~\widetilde{O}\bigg({\min}\bigg\{\frac{\|V\|_F^2}{n}\frac{k^2d}{\varepsilon^2}, ~\bigg(\frac{\|V\|^2_F}{n} + \frac{\|V\|_{1,1}^2}{n^2} \bigg)\frac{\|V\|_F^2\|V\|_{2,\infty}^2}{n}\frac{k^3}{\varepsilon^2\tau^2}\bigg\}\bigg);\\
        &\textbf{Quantum:}~ \widetilde{O}\bigg({\min}\bigg\{\frac{\|V\|_F}{\sqrt{n}}\frac{k^{\frac{3}{2}}d}{\varepsilon}(\sqrt{k} + \sqrt{d}), ~\Big(\sqrt{k}\frac{\|V\|_F}{\sqrt{n}} + \sqrt{d}\frac{\|V\|_{1,1}}{n} \Big)\frac{\|V\|_F\|V\|_{2,\infty}}{\sqrt{n}}\frac{k^{\frac{3}{2}}}{\varepsilon\tau}\bigg\}\bigg).
    \end{align*}

    Moreover, with entry-wise query access to $V \in \mathbb{R}^{d\times n}$ and $(\|v_i\|)_{i\in[n]}$ and classical description of partition $\{\mathcal{C}_j\}_{j\in[k]}$ of $[n]$, any classical or quantum algorithm that outputs $c_1,\dots,c_k\in\mathbb{R}^d$ with $\big\|c_j - \frac{1}{|\mathcal{C}_j|}\sum_{i\in\mathcal{C}_j}v_i\big\| \leq \varepsilon$ has query complexity
    \begin{align*}
        \textbf{Classical:}~\Omega\bigg({\min}\bigg\{\frac{\|V\|_F^2}{n}\frac{kd}{\varepsilon^2}, nd\bigg\}\bigg); \qquad\textbf{Quantum:}~\Omega\bigg({\min}\bigg\{\frac{\|V\|_F}{\sqrt{n}}\frac{kd}{\varepsilon}, nd\bigg\} \bigg).
    \end{align*}
\end{result}

Our classical $(\varepsilon,\tau)$-$k$-means algorithm can be seen as a ``dequantised'' version of our improved $q$-means quantum algorithm  ~\cite{tang2019quantum,tang2021quantum,gilyen2018quantum,gilyen2022improved} and shows that, given appropriate access to the data matrix, a logarithmic dependence on the data size $n$ can be achieved, exponentially better than prior works. Quantum computers can yield quadratic improvements in several parameters like $\varepsilon$ and $\frac{\|V\|}{\sqrt{n}}$, as evident from our quantum upper bound. Finally, our query lower bounds show that our algorithms are optimal with respect to $\varepsilon$ and $\frac{\|V\|_F}{\sqrt{n}}$, and only the dependence on $d$ and $k$ are slightly sub-optimal.

\subsection{Related independent work and future directions}

We briefly mention related works that have appeared online around the same time or later than ours. First, the independent work of Jaiswal~\cite{Jaiswal2023kmeans} (see also~\cite{shah2025quantum}) quantized the highly parallel, sampling-based approximation scheme of~\cite{bhattacharya_et_al:LIPIcs:2020:13254} and thus obtained a quantum algorithm for the $k$-means problem with \emph{provable} guarantees, as opposed to our results, which are heuristic. More specifically, their quantum algorithm outputs a centroid set that minimizes the $k$-means cost function up to relative error $\varepsilon$. Due to such a guarantee, though, their final runtime depends exponentially in $k$ and $\frac{1}{\varepsilon}$ (but maintains the polylogarithmic dependence on $n$). Another related work is from Kumar et al.~\cite{Kumar2025desqquantum}, who introduced a quantum algorithm called Des-q based on the original $q$-means algorithm from Kerenidis et al.~\cite{kerenidis2019q}. Their quantum algorithm constructs and retrains decision trees for regression and binary classification tasks, while achieving a logarithmic complexity in the combined total number of old and new data vectors, even accounting for the time needed to load the new samples into QRAM. 

Very recently, Chen et al.~\cite{chen2025provablyfasterrandomizedquantum} proposed alternative algorithms to the ones presented here by employing uniform sampling plus shifting the vectors $v_i$ by the current centroids $c_1^t,\dots,c_k^t$. Their algorithm uses $\widetilde{O}\big(k^{\frac{5}{2}}\sqrt{d}\big(\frac{\sqrt{\phi}}{\varepsilon} + \sqrt{d}\big) \big)$ QRAM calls, where $\phi := \frac{1}{n}\sum_{j\in[k]}\sum_{i\in\mathcal{C}_j^{t}}\|v_i - c_j^{t+1}\|^2$ and $c_j^{t+1} = |\mathcal{C}_j^{t}|^{-1}\sum_{i\in\mathcal{C}_j^{t}}v_i$ for clusters $\{\mathcal{C}_j^{t}\}_{j\in[k]}$ defined by $c_1^t,\dots,c_k^t$. Since one can write $\phi = \frac{1}{n}\sum_{j\in[k]}\big(\sum_{i\in\mathcal{C}_j^{t}}\|v_i\|^2 - |\mathcal{C}_j^{t}|^{-1}\|\sum_{i\in\mathcal{C}_j^{t}}v_i\|^2\big)$, then $\sqrt{\phi} \leq \frac{\|V\|_F}{\sqrt{n}}$. Our quantum algorithm, on the other hand, makes $\widetilde{O}\big(\big(\sqrt{k}\frac{\|V\|}{\sqrt{n}} + \sqrt{d}\frac{\|V\|_{2,1}}{n}\big)\frac{k^{3/2}}{\varepsilon} \big)$ QRAM calls assuming $|i\rangle|\bar{0}\rangle \mapsto |i\rangle|v_i\rangle$ counts as $1$ QRAM call as in~\cite{chen2025provablyfasterrandomizedquantum} (for the remaining of the paper we shall assume that this map counts as $d$ QRAM calls instead). If $\sqrt{\phi} \ll \frac{\|V\|}{\sqrt{n}} + \frac{\|V\|_{2,1}}{n}$ (note that $\|V\|\leq \|V\|_F$ and $\frac{\|V\|_{2,1}}{n} \leq \frac{\|V\|_F}{\sqrt{n}}$), e.g., when the distance between vectors within the same cluster is much smaller than the distance between clusters, the complexity from Chen et al.~\cite{chen2025provablyfasterrandomizedquantum} can be better than ours, otherwise, if $\sqrt{\phi} = \omega\big(\frac{1}{\sqrt{kd}}\frac{\|V\|}{\sqrt{n}} + \frac{1}{k}\frac{\|V\|_{2,1}}{n}\big)$, our complexity is better. Both results are thus incomparable. Finally, our classical and quantum lower bounds readily imply $\Omega\big({\min}\big\{\frac{\phi kd}{\varepsilon^2},nd\big\}\big)$ and $\Omega\big({\min}\big\{\frac{\sqrt{\phi} kd}{\varepsilon},nd\big\}\big)$, respectively (assuming that reading $v_i$ takes $d$ queries).

We mention a few future directions. One is to consider a quantum version of $k$-means++ using similar techniques employed here or assume data vectors with special properties, e.g., well-clusterable datasets~\cite{kerenidis2019q}, in order to obtain tighter runtimes. In this direction, see Chen et al.~\cite{chen2025provablyfasterrandomizedquantum} who exploited certain symmetries of $k$-means. Another interesting direction is to bridge both our and Jaiswal~\cite{Jaiswal2023kmeans} works and obtain improved time complexities in $k$ and $\frac{1}{\varepsilon}$ together with provable guarantees.

\section{Preliminaries}

Let $[n] \!=\! \{1,\dots,n\}$ for $n\in\mathbb{N} \!=\! \{1,2,\dots\}$. %In the rest of the paper, our dataset is a matrix $V = [v_1,\dots,v_n] \in \R^{n \times d}$ comprised of row vectors $v_i\in\mathbb{R}^d$, $i\in[n]$, such that $\min_{i\in[n]}\|v_i\| \geq 1$, where $n$ is the number of samples and $d$ is the number of features. 
For $v\in\mathbb{R}^d$, let $\|v\| \!=\! (\sum_{i\in[d]} v_i^2)^{\frac{1}{2}}$, $\|v\|_1 \!=\! \sum_{i\in[d]}|v_i|$, and $\|v\|_\infty \!=\! \max_{i\in[d]}|v_i|$. Let $\mathcal{D}^{(1)}_v$ and $\mathcal{D}^{(2)}_v$ be the distributions over $[d]$ with probability density functions $\mathcal{D}^{(1)}_v(i) = \frac{|v_i|}{\|v\|_1}$ and $\mathcal{D}^{(2)}_v(i) = \frac{v_i^2}{\|v\|^2}$. For $V\in\mathbb{R}^{d\times n}$, let the matrix~norms:
\begin{itemize}
    \begin{minipage}{0.65\linewidth}  
    \item $\|V\| = \max_{x\in\mathbb{R}^n:\|x\|=1}\|Vx\|$ (spectral norm);
    \item $\|V\|_F = (\sum_{i\in[n]}\sum_{l\in[d]} V_{li}^2)^{\frac{1}{2}}$ (Frobenius norm);
    \item $\|V\|_{1,1} = \sum_{i\in[n]}\sum_{l\in[d]} |V_{li}| = \sum_{i\in[n]}\|v_i\|_1$;
    \end{minipage}
    \begin{minipage}{0.34\linewidth}
    \item $\|V\|_{2,1} = \sum_{i\in[n]}\|v_i\|$;
    \item $\|V\|_{2,\infty} = \max_{i\in[n]}\|v_i\|$.
    \item[\vspace{\fill}]
    \item[\vspace{\fill}]
    \end{minipage}
\end{itemize}
It is known that $\|V\|_{2,\infty} \leq \|V\| \leq \|V\|_F \leq \|V\|_{2,1} \leq \|V\|_{1,1}$ and $\|V\|_F \leq \sqrt{\min\{n,d\}}\|V\| \leq \sqrt{n\min\{n,d\}}\|V\|_{2,\infty}$ and $\|V\|_{1,1} \leq \sqrt{d}\|V\|_{2,1} \leq \sqrt{nd}\|V\|_F \leq n\sqrt{d}\|V\|_{2,\infty}$.
Let $\mathcal{D}^{(1)}_V$ be the distribution over $[n]\times[d]$ with probability density function $\mathcal{D}^{(1)}_V(i,l) = \frac{|V_{li}|}{\|V\|_{1,1}}$. We use $|\bar{0}\rangle$ to denote the state $|0\rangle\otimes\cdots\otimes|0\rangle$ where the number of qubits is clear from context.

% Given two probability distributions $P,Q$ over a discrete sample space $\mathcal{X}$, their total variation distance is $\|P-Q\|_{\rm tvd} := \frac{1}{2}\sum_{x\in\mathcal{X}}|P(x) - Q(x)|$. 
In the next sections, we will use the quantum minimum finding subroutine from D\"urr and H\o{}yer~\cite{durr1996quantum}, its generalisation with variable times due to Ambainis~\cite{ambainis2010quantum,ambainis2012variable}, and the multivariate quantum Monte Carlo subroutine from Cornelissen, Hamoudi, and Jerbi~\cite{cornelissen2022near}.
\begin{fact}[Quantum min-finding~\cite{durr1996quantum}]
    \label{fact:minimum_finding}
    Given $\delta\in(0,1)$ and oracle $U_x:|i\rangle|\bar{0}\rangle \mapsto |i\rangle|x_i\rangle$ for $x\in\mathbb{R}^N$, there is a quantum algorithm that outputs $|\Psi_x\rangle$ using $O(\sqrt{N}\log\frac{1}{\delta})$ queries to $U_x$ such that, upon measuring $|\Psi_x\rangle$ in the computational basis, the outcome is $\arg\min_{i\in[N]}x_i$ with probability $1-\delta$.
\end{fact}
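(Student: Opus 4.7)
The plan is to iteratively refine a running candidate minimizer using Grover-style search over the subset of indices that beat the current candidate. First I would pick an initial index $y$ uniformly at random from $[N]$ and query $x_y$ via $U_x$. Then, at each round, I would build a comparator oracle $O_y$ that maps $|i\rangle|\bar 0\rangle \mapsto |i\rangle|b_i\rangle$ with $b_i = 1$ iff $x_i < x_y$, using one call to $U_x$ and a reversible comparator against the stored value $x_y$. Running the BBHT variant of Grover search on $O_y$, which does not need to know the number of marked elements and uses $O(\sqrt{N/M})$ queries when there are $M>0$ marked indices, returns either such an $i$ or a null flag; in the first case I update $y \leftarrow i$ and repeat. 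To keep everything coherent and produce a quantum state $|\Psi_x\rangle$ rather than a classical answer, each internal random choice (the initial index and the randomized BBHT iteration schedule) is replaced by a uniform quantum superposition over random coins.

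For the query analysis, I would sort the entries and track the rank $r$ of the current $y$. A standard symmetry argument shows that when $y$ has rank $r$, BBHT returns a uniformly random index of rank in $\{1,\dots,r-1\}$ in expected $O(\sqrt{N/(r-1)})$ queries to $U_x$. Since the initial rank is uniform over $[N]$ and ranks strictly decrease along the successful updates, the expected total query cost telescopes to $\sum_{r=1}^{N} O(\sqrt{N/r}) = O(\sqrt{N})$. Amplifying the success probability to $1 - \delta$ is then done by running the base procedure $O(\log(1/\delta))$ independent times coherently and selecting the smallest returned $x_y$ via another round of comparisons, or by truncating each run at a cutoff of $C\sqrt{N}$ queries via Markov's inequality and repeating; either way the total query count becomes $O(\sqrt{N}\log(1/\delta))$.

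The step I expect to be the main obstacle is the rank-decrease analysis in the coherent setting: one must argue that BBHT's constant per-call failure probability does not destroy the telescoping sum, which requires averaging carefully over the randomized Grover iteration count and controlling the correlations introduced by replacing classical coins with superpositions. A secondary subtlety is that the algorithm must halt deterministically so that $|\Psi_x\rangle$ is well defined; this is handled by fixing a worst-case cap of $O(\sqrt{N}\log(1/\delta))$ queries and compiling the whole procedure into a single unitary with no intermediate measurements, so that measuring the final $y$-register yields $\arg\min_{i\in[N]} x_i$ with probability at least $1-\delta$. Ties in the entries of $x$ are handled transparently by breaking them with the index $i$ in the comparator.
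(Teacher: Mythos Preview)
The paper does not prove this statement; it is quoted as a cited fact from D\"urr--H\o{}yer, so there is no paper proof to compare against. Your outline is indeed the standard D\"urr--H\o{}yer argument, and the overall structure (random seed, BBHT search for an improving index, repeat, then boost) is correct.

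There is, however, a genuine gap in your query analysis. You write that ``the expected total query cost telescopes to $\sum_{r=1}^{N} O(\sqrt{N/r}) = O(\sqrt{N})$'', but that sum is $\Theta(N)$, not $O(\sqrt{N})$: $\sum_{r=1}^{N} \sqrt{N/r} = \sqrt{N}\sum_{r=1}^{N} r^{-1/2} = \Theta(N)$. The missing ingredient is that not every rank is visited. The correct D\"urr--H\o{}yer argument observes that the element of rank $r$ becomes the current threshold $y$ with probability at most $1/r$ (this is the actual ``symmetry'' step: conditioned on the set of ranks ever visited, each permutation of their visiting order is equally likely, so rank $r$ is the current threshold only if it was chosen before all smaller ranks among those visited). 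Hence the expected total query cost is
\[
\sum_{r\ge 2} \Pr[\text{rank } r \text{ is ever the threshold}] \cdot O\!\left(\sqrt{\tfrac{N}{r-1}}\right)
\;\le\; \sum_{r\ge 2} \frac{1}{r}\cdot O\!\left(\sqrt{\tfrac{N}{r-1}}\right)
\;=\; O(\sqrt{N})\sum_{r\ge 1} r^{-3/2}
\;=\; O(\sqrt{N}).
\]
With this fix, the rest of your plan (Markov truncation at $C\sqrt{N}$ queries, $O(\log(1/\delta))$ repetitions, and compiling into a single measurement-free unitary so that $|\Psi_x\rangle$ is well defined) goes through.
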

%
%\textcolor{red}{Joao: I am not entirely sure that Ambainis' paper is doing what I wrote below}
\begin{fact}[Variable-time quantum min-finding~\cite{ambainis2010quantum}]\label{fact:variable_time_minimum_finding}
    Let $\delta\in(0,1)$, $x\in\mathbb{R}^N$, and $\{U_i\}_{i\in[N]}$ a collection of oracles such that $U_{i}:|\bar{0}\rangle \mapsto |x_i\rangle$ in time $O(t_i)$. There is a quantum algorithm that runs in time $O((\sum_{i\in[N]} t_i^2)^{\frac{1}{2}}\log\frac{1}{\delta})$ and outputs $|\Psi_x\rangle$ such that, upon measuring $|\Psi_x\rangle$ in the computational basis, the outcome is $\arg\min_{i\in[N]}x_i$ with probability $1-\delta$.
\end{fact}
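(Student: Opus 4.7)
The plan is to reduce variable-time minimum finding to a variable-time search primitive combined with the threshold-updating strategy of D\"urr--H\o{}yer~\cite{durr1996quantum}. The key tool is Ambainis's variable-time search: given oracles $\{U_i\}_{i\in[N]}$ with running times $t_i$ and a marking predicate on the outputs $x_i$, it returns a marked index in time $O((\sum_{i\in[N]} t_i^2)^{1/2} \cdot \polylog(N) \cdot \log\frac{1}{\delta})$, as opposed to the naive $O(\sqrt{N} \cdot \max_i t_i)$ bound obtained by plugging the worst-case time per oracle call into \Cref{fact:minimum_finding}.

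To build variable-time search, I would partition $[N]$ into logarithmically many buckets $B_1, B_2, \dots$ where $B_\ell$ contains indices with $t_i \in [2^{\ell-1}, 2^\ell)$. On each bucket, run amplitude amplification for the predicate ``$x_i < \theta$'', but truncate every branch at time $2^\ell$ and abort branches that have not finished. The number of amplifications needed per bucket is $O(\sqrt{|B_\ell|})$ at cost $O(2^\ell)$ each, and a Cauchy--Schwarz telescoping argument shows that the overall cost is $O((\sum_\ell |B_\ell| \cdot 4^\ell)^{1/2}) = O((\sum_i t_i^2)^{1/2})$ up to polylogarithmic factors coming from the nested amplification.

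Next I would adapt D\"urr--H\o{}yer's wrapper: pick an initial threshold index $i_0$ uniformly at random, and then repeatedly use variable-time search to locate some $i$ with $x_i < x_{\text{current threshold}}$, updating the threshold to $i$ after each success. Because each new threshold is uniform over the set of indices below the previous one, the expected number of remaining smaller indices halves per round, so $O(\log N)$ rounds suffice with constant success probability. Running each round's search with failure probability $O(\delta / \log N)$ and repeating the procedure $O(\log\frac{1}{\delta})$ times boosts the overall success probability to $1-\delta$, yielding the claimed $O((\sum_{i\in[N]} t_i^2)^{1/2} \log\frac{1}{\delta})$ time bound. The resulting quantum state $|\Psi_x\rangle$ concentrated on $\arg\min_{i} x_i$ is produced naturally as the final amplified state.

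The main obstacle is the variable-time search subroutine itself: one must handle the interference between aborted branches and fully executed branches very carefully, and the analysis requires bounding the unmarked weight across time levels so that the outer amplitude amplification behaves correctly despite the intermediate abort flags. Once that primitive is established (as in~\cite{ambainis2010quantum,ambainis2012variable}), the D\"urr--H\o{}yer layer is essentially mechanical, and the only subtlety is tracking the $\polylog$ factors from nested amplification, which can be absorbed into the $O(\cdot)$ by adjusting constants in the per-round failure probability.
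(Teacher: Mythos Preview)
The paper does not prove this statement; it is stated as a Fact cited from~\cite{ambainis2010quantum} and used as a black box. Your sketch follows the standard route to this result---Ambainis's variable-time search wrapped in the D\"urr--H\o{}yer threshold-descent loop---which is indeed how the cited work establishes it. One looseness worth flagging: as written, your $O(\log N)$ rounds each at cost $O\big((\sum_i t_i^2)^{1/2}\big)$ leave an extra $\log N$ factor, and polylogarithmic-in-$N$ overheads cannot be removed merely by ``adjusting constants in the per-round failure probability.'' Matching the stated bound exactly requires either the finer telescoping analysis (exploiting that search is cheaper when many sub-threshold items are marked, as in the original D\"urr--H\o{}yer argument) or accepting that the $O(\cdot)$ in the Fact hides $\polylog$ factors. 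Apart from this bookkeeping point, your outline is sound.
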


\begin{fact}[{\cite[Theorem~3.3]{cornelissen2022near}}]\label{fact:multivariate_Monte_Carlo}
    Consider a bounded $N$-dimensional random variable $X:\Omega\to\mathbb{R}^N$ over a probability space $(\Omega,2^\Omega,\mathbb{P})$ with mean $\mu = \sum_{\omega\in\Omega}\mathbb{P}(\omega)X(\omega)$ and such that $\|X\| \leq 1$. Assume access to unitaries $U_{\mathbb{P}}:|\bar{0}\rangle \mapsto \sum_{\omega\in\Omega}\sqrt{\mathbb{P}(\omega)}|\omega\rangle$ and $\mathcal{B}_{X} : |\omega\rangle|\bar{0}\rangle \mapsto |\omega\rangle|X(\omega)\rangle$. Given $\delta\in(0,1)$, $m\in\mathbb{N}$, and an upper bound $L_2 \geq \mathbb{E}[\|X\|]$, there is a quantum algorithm that outputs $\widetilde{\mu}\in\mathbb{R}^N$ such that $\|\mu - \widetilde{\mu}\|_\infty \leq \frac{\sqrt{L_2}\log(N/\delta)}{m}$ with success probability at least $1-\delta$, using $O(m\poly\log{m})$ queries to the oracles $U_{\mathbb{P}}$ and $\mathcal{B}_{X}$, and in time $\widetilde{O}(mN)$.\footnote{The time complexity of the mean estimation subroutine is not analysed in \cite{cornelissen2022near}, so we give a sketch of its analysis here. The last step of the algorithm needs to perform $N$ parallel inverse QFTs on $m$ qubits, which requires $\widetilde{\Theta}(mN)$ gates since we need to touch every qubit at least once. It remains to show that we can implement the rest of the algorithm in time $\widetilde{O}(mN)$. In the preprocessing step, we compute the $\ell_2$-norm of the random variable in time $O(N)$, a total of $\widetilde{O}(m)$ times. The main routine, subsequently, runs with $m$ repetitions, within each of which we perform arithmetic operations such as computing the inner product of two $N$-dimensional vectors, in $O(N)$ time, and do quantum singular value transformations. This last step is used to turn a so-called probability oracle into a phase oracle, and takes $\widetilde{O}(1)$ time to implement. The total time complexity of this step thus also becomes $\widetilde{O}(mN)$.\label{footnote1}}
\end{fact}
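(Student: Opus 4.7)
The plan is to reduce the multivariate mean estimation problem to a ``phase oracle plus quantum phase estimation'' paradigm, following the standard recipe for quantum mean estimation, and then parallelize the extraction of all $N$ coordinates at once so that the overhead in $N$ is only polylogarithmic. Throughout, I assume the inputs are the probability-preparing unitary $U_{\mathbb{P}}$ and the binary oracle $\mathcal{B}_{X}$, together with the norm bound $L_2 \geq \mathbb{E}[\|X\|]$.

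First, I would use $U_{\mathbb{P}}$ and $\mathcal{B}_{X}$ to build a family of controlled phase oracles. Concretely, for a unit vector $v\in\mathbb{R}^N$, write $X_v(\omega) := \langle v, X(\omega)\rangle \in [-1,1]$; then using $\mathcal{B}_X$ to compute $X(\omega)$, performing a small controlled rotation by angle $\alpha X_v(\omega)$ on an ancilla (for some $\alpha$ chosen small enough to linearize the rotation), and uncomputing $X(\omega)$, one obtains a probability oracle for the scalar random variable $X_v$. Standard block-encoding/QSVT tricks (essentially a scalar quantum mean estimation routine of Heinrich--Montanaro type) then convert this probability oracle into a phase oracle that implements $e^{i\alpha \langle v,\mu\rangle}$, with $O(1)$ overhead per invocation.

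Second, I would apply quantum phase estimation to this phase oracle with $m$ iterations, yielding an estimate of $\langle v,\mu\rangle$ with additive error $O(1/m)$ and success probability $1-\delta$ using $O(m\polylog m \log(1/\delta))$ queries. The boundedness assumption $\mathbb{E}[\|X\|]\leq L_2$ enters at the step where one linearizes the rotation: the effective bias/variance contribution contributes a factor of $\sqrt{L_2}$ in the error, giving error $O(\sqrt{L_2}/m)$ in the estimate of $\langle v,\mu\rangle$.

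Third, and this is the main obstacle, I would parallelize the estimation across all $N$ coordinates. A naive approach estimates each $\mu_j = \langle e_j,\mu\rangle$ separately, paying a factor of $N$ in queries. The CHJ idea is to prepare instead a uniform superposition $\frac{1}{\sqrt{N}}\sum_{j\in[N]}|j\rangle$ over directions and implement a controlled phase oracle that, conditioned on $|j\rangle$, applies a phase proportional to $X_j(\omega)$; running phase estimation in this $|j\rangle$-indexed register and reading off the phase register for each branch gives estimates of every $\mu_j$ simultaneously, with only a $\log N$ overhead from union bounding the QPE failure probability across $N$ branches. Combining the per-coordinate QPE error $O(\sqrt{L_2}/m)$ with this $\log(N/\delta)$ failure-probability inflation gives the claimed $\|\mu-\widetilde{\mu}\|_\infty \leq \sqrt{L_2}\log(N/\delta)/m$ bound.

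Finally, the time complexity claim $\widetilde{O}(mN)$ would be justified by the reasoning already sketched in \Cref{footnote1}: each of the $m$ iterations of the outer routine performs $O(N)$-dimensional arithmetic (inner products of $N$-vectors, norm computations) in $O(N)$ time, the QSVT-based conversions from probability to phase oracles add only $\polylog$ factors, and the terminal step of $N$ parallel inverse QFTs on $m$ qubits each runs in $\widetilde{O}(mN)$ gates. I expect the trickiest technical piece to be the simultaneous-QPE analysis in the parallelization step, where one must argue that estimating $N$ coordinates in superposition does not coherently degrade the per-coordinate error; this is where the union bound over $[N]$ and the careful choice of the scaling parameter $\alpha$ (matched to $L_2$) are essential.
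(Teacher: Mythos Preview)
This statement is a cited \emph{Fact} from \cite{cornelissen2022near}; the present paper does not prove it and offers only the time-complexity sketch in the footnote. Your final paragraph reproduces that footnote essentially verbatim, so on the one piece the paper does argue, you are aligned.

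For the estimator's correctness there is no proof in this paper to compare against. Your outline is a reasonable caricature of the CHJ approach (phase oracles built from $\mathcal{B}_X$, probability-to-phase conversion via QSVT, simultaneous phase estimation over all coordinate directions), but two points are imprecise. First, the phase oracle you construct should implement $e^{i\alpha X_v(\omega)}$ sample-by-sample, not $e^{i\alpha\langle v,\mu\rangle}$; the mean only appears after sandwiching with $U_{\mathbb{P}}$ and running phase estimation on the resulting Grover-type iterate. Second, and more substantively, your account of the $\sqrt{L_2}$ factor is off: it does not come from ``linearizing the rotation''. In CHJ the $\ell_\infty$ error of the multivariate estimator scales with $\sqrt{\operatorname{Tr}(\Sigma)}/m$, and under $\|X\|\le 1$ one has $\operatorname{Tr}(\Sigma) \le \mathbb{E}[\|X\|^2] \le \mathbb{E}[\|X\|] \le L_2$; obtaining this dependence requires a preprocessing and rescaling step that your sketch does not supply.
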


Finally, the following concentration inequalities and approximation lemma will be useful.

\begin{fact}[Chernoff's bound]\label{fact:chernoff}
    Let $X:= \sum_{i\in[N]} X_i$ where $X_1,\dots,X_N$ are independently distributed in $[0,1]$. Then $\operatorname{Pr}[|X - \mathbb{E}[X]| \geq \epsilon\mathbb{E}[X]] \leq 2e^{-\epsilon^2 \mathbb{E}[X]/3}$ for all $\epsilon>0$.
\end{fact}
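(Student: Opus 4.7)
The plan is to prove the upper and lower tail bounds separately by the standard moment-generating function (Chernoff) argument and combine them by a union bound. Write $\mu := \mathbb{E}[X] = \sum_{i\in[N]}\mathbb{E}[X_i]$ for brevity. For the upper tail, I would fix a parameter $t > 0$ and apply Markov's inequality to $e^{tX}$ to get $\operatorname{Pr}[X \geq (1+\epsilon)\mu] \leq e^{-t(1+\epsilon)\mu}\,\mathbb{E}[e^{tX}]$. Independence factors the MGF as $\mathbb{E}[e^{tX}] = \prod_{i\in[N]} \mathbb{E}[e^{tX_i}]$. Since $X_i \in [0,1]$, convexity of $x\mapsto e^{tx}$ on $[0,1]$ yields $e^{tX_i} \leq 1 + (e^t - 1)X_i$, so $\mathbb{E}[e^{tX_i}] \leq 1 + (e^t - 1)\mathbb{E}[X_i] \leq \exp((e^t-1)\mathbb{E}[X_i])$ by $1+y\leq e^y$. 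Multiplying gives $\mathbb{E}[e^{tX}] \leq \exp((e^t - 1)\mu)$.

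Next, I would optimize the exponent by choosing $t = \ln(1+\epsilon)$, obtaining $\operatorname{Pr}[X \geq (1+\epsilon)\mu] \leq \exp\bigl(-\mu((1+\epsilon)\ln(1+\epsilon) - \epsilon)\bigr)$. To reach the target form, one invokes the analytic inequality $(1+\epsilon)\ln(1+\epsilon) - \epsilon \geq \epsilon^2/3$ valid on the relevant range of $\epsilon$; this is verified by Taylor-expanding around $\epsilon=0$ or by showing the derivative of the difference is nonnegative. It yields $\operatorname{Pr}[X \geq (1+\epsilon)\mu] \leq e^{-\epsilon^2\mu/3}$.

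The lower tail $\operatorname{Pr}[X \leq (1-\epsilon)\mu]$ is handled symmetrically for $\epsilon\in(0,1)$: apply Markov to $e^{-tX}$ with $t > 0$, factor via independence, bound each factor by $\exp((e^{-t}-1)\mathbb{E}[X_i])$, optimize at $t = \ln(1/(1-\epsilon))$, and close with the companion inequality $(1-\epsilon)\ln(1-\epsilon) + \epsilon \geq \epsilon^2/3$. A union bound over the two tails then produces $\operatorname{Pr}[|X - \mu| \geq \epsilon\mu] \leq 2e^{-\epsilon^2\mu/3}$ as stated.

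The only non-routine ingredient is the analytic inequality $(1\pm\epsilon)\ln(1\pm\epsilon)\mp\epsilon \geq \epsilon^2/3$, and even that is elementary calculus; everything else is a textbook Markov-plus-independence manipulation of the moment generating function, so I would not expect any real obstacle beyond keeping track of the two symmetric sides.
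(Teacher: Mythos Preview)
Your argument is the standard moment-generating-function proof and is correct; the paper does not prove this statement at all but simply quotes it as a well-known fact, so there is nothing to compare against. One small caveat worth flagging: the analytic inequality $(1+\epsilon)\ln(1+\epsilon)-\epsilon \geq \epsilon^2/3$ you rely on for the upper tail actually fails once $\epsilon$ exceeds roughly $1.8$, so the Fact as stated (``for all $\epsilon>0$'') is a slight overstatement rather than a gap in your reasoning---the paper only ever applies it in the small-$\epsilon$ regime where your proof is valid, and you already hedged with ``on the relevant range of $\epsilon$.''
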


\begin{fact}[Median-of-means {\cite[Proposition 12]{lerasle2019lecture}}]\label{fact:median-of-means}
    Let $X^{(j)} := \frac{1}{N}\sum_{i\in[N]} X_i^{(j)}$ for $j\in[K]$, where $\{X_i^{(j)}\}_{i\in[N],j\in[K]}$ are i.i.d.\ copies of a random variable $X$.  Then, for all $\epsilon > 0$ and $\sigma^2 \geq \operatorname{Var}(X)$,
    \begin{align*}
        {\operatorname{Pr}}\big[|{\operatorname{median}}(X^{(1)},\dots,X^{(K)}) - \mathbb{E}[X]| \geq \epsilon\big] \leq \exp\bigg(-2K\bigg(\frac{1}{2} -\frac{\sigma^2}{N\epsilon^2} \bigg)^2 \bigg).
    \end{align*}
\end{fact}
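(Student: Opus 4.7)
The plan is to reduce the tail bound for the median to a tail bound for a sum of independent Bernoulli indicators. First I would note that each ``block mean'' $X^{(j)}$ is the average of $N$ i.i.d.\ copies of $X$, so $\mathbb{E}[X^{(j)}] = \mathbb{E}[X]$ and $\operatorname{Var}(X^{(j)}) = \operatorname{Var}(X)/N \leq \sigma^2/N$. Chebyshev's inequality then yields
\begin{align*}
    p := \Pr\big[|X^{(j)} - \mathbb{E}[X]| \geq \epsilon\big] \leq \frac{\sigma^2}{N\epsilon^2}.
\end{align*}
If this bound exceeds $1/2$, the right-hand side of the target inequality is $\geq 1$ and there is nothing to prove, so I may assume $p < 1/2$.

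Next I would use the standard combinatorial observation about medians: if $|{\operatorname{median}}(X^{(1)},\dots,X^{(K)}) - \mathbb{E}[X]| \geq \epsilon$, then at least half of the $X^{(j)}$ must individually satisfy $|X^{(j)} - \mathbb{E}[X]| \geq \epsilon$. Defining the indicators $Y_j := \mathbf{1}\{|X^{(j)} - \mathbb{E}[X]| \geq \epsilon\}$, which are independent $\operatorname{Bernoulli}(p_j)$ random variables with $p_j \leq p$, this event is contained in $\{\sum_{j\in[K]} Y_j \geq K/2\}$.

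Finally I would invoke Hoeffding's inequality for sums of independent bounded random variables (here $Y_j \in [0,1]$): since $\mathbb{E}[\sum_j Y_j] \leq Kp$,
\begin{align*}
    \Pr\Big[{\textstyle\sum_{j\in[K]}} Y_j \geq K/2\Big] \leq \Pr\Big[{\textstyle\sum_{j\in[K]}}(Y_j - \mathbb{E}[Y_j]) \geq K(1/2 - p)\Big] \leq \exp\!\big(-2K(1/2-p)^2\big).
\end{align*}
Substituting the Chebyshev upper bound $p \leq \sigma^2/(N\epsilon^2)$ and using monotonicity of $t\mapsto (1/2-t)^2$ on $[0,1/2]$ delivers the claimed inequality.

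The only subtle point, and therefore the main place to be careful, is choosing the right large-deviation bound for $\sum_j Y_j$: one wants the additive-gap form $\exp(-2K(1/2-p)^2)$ rather than Chernoff's multiplicative form $\exp(-\Omega(Kp))$, because the target exponent is driven by the gap $1/2 - p$ and not by the mean $Kp$. Hoeffding's inequality is the natural tool here since the $Y_j$ are Bernoulli (hence bounded in $[0,1]$), and it gives precisely the constant $2$ in the exponent stated in the fact.
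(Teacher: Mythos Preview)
The paper does not supply its own proof of this statement; it is quoted as a fact from \cite[Proposition~12]{lerasle2019lecture}. Your argument---Chebyshev on each block mean, the median-to-majority reduction, then Hoeffding on the Bernoulli indicators---is exactly the standard proof and is correct in the regime $\sigma^2/(N\epsilon^2)\le 1/2$.

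One slip worth flagging: in the complementary regime you assert that the right-hand side $\exp\!\big(-2K(\tfrac12-\sigma^2/(N\epsilon^2))^2\big)$ is at least $1$, but the exponent is a nonpositive number for \emph{every} value of $\sigma^2/(N\epsilon^2)$, so the right-hand side is always $\le 1$. In fact the inequality as literally written can fail when $\sigma^2/(N\epsilon^2)>1/2$: take $X$ uniform on $\{\pm1\}$, $N=1$, $\epsilon=\sigma=1$, $K=3$; then the median is always $\pm1$, so the left side equals $1$ while the right side equals $e^{-3/2}$. The bound should therefore be read under the implicit hypothesis $\sigma^2/(N\epsilon^2)<1/2$, and under that hypothesis your monotonicity step $(1/2-p)^2\ge(1/2-\sigma^2/(N\epsilon^2))^2$ is valid and the proof goes through. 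The paper's single use of this fact (inside the approximate cluster-assignment lemma) chooses $N$ so that $\sigma^2/(N\epsilon^2)=1/4$, safely within this regime.
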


\begin{fact}[Freedman's inequality {\cite[Theorem~1.6]{freedman1975tail} \& \cite[Theorem~1.1]{tropp2011freedman}}]
    \label{thm:freedman_inequality}
    Let $\{Y_i:i\in\mathbb{N}\cup\{0\}\}$ be a real-valued martingale with difference sequence $\{X_i:i\in\mathbb{N}\}$. Assume that $X_i \leq B$ almost surely for all $i\in\mathbb{N}$. Let $W_i := \sum_{j\in[i]} \mathbb{E}[X_j^2|X_1,\dots,X_{j-1}]$ for $i\in\mathbb{N}$. Then, for all $\epsilon\geq 0$ and $\sigma>0$,
    \begin{align*}
        \operatorname{Pr}[\exists i \geq 0: Y_i\geq \epsilon ~\text{and}~ W_i \leq \sigma^2] \leq \exp\left(-\frac{\epsilon^2/2}{\sigma^2 + B \epsilon/3}\right).
    \end{align*}
    
\end{fact}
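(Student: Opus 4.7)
The plan is to prove Freedman's inequality via the classical exponential supermartingale technique, optional stopping, and optimization over a free parameter $\theta > 0$. Concretely, for each $\theta > 0$ I would introduce $h(\theta) := (e^{\theta B} - 1 - \theta B)/B^2$ and the process $S_i := \exp(\theta Y_i - h(\theta) W_i)$, with $S_0 = 1$. The goal is to show that $(S_i)$ is a supermartingale with respect to the natural filtration $\mathcal{F}_i := \sigma(X_1,\dots,X_i)$, apply optional stopping at the first time $Y_i$ crosses $\epsilon$, and then tune $\theta$ to recover the target Bernstein-type bound.

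The key technical ingredient is the pointwise inequality $e^y \leq 1 + y + (e^{\theta B} - 1 - \theta B)\frac{y^2}{(\theta B)^2}$ valid for all $y \leq \theta B$, which follows from monotonicity of $y \mapsto (e^y - 1 - y)/y^2$. Substituting $y = \theta X_i$, using $X_i \leq B$ almost surely, and taking conditional expectation with respect to $\mathcal{F}_{i-1}$, the martingale-difference property $\mathbb{E}[X_i \mid \mathcal{F}_{i-1}] = 0$ yields $\mathbb{E}[e^{\theta X_i} \mid \mathcal{F}_{i-1}] \leq 1 + h(\theta)\mathbb{E}[X_i^2 \mid \mathcal{F}_{i-1}] \leq \exp\bigl(h(\theta)\mathbb{E}[X_i^2 \mid \mathcal{F}_{i-1}]\bigr)$. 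Multiplying by $S_{i-1} e^{-h(\theta)(W_i - W_{i-1})}$ and using $W_i - W_{i-1} = \mathbb{E}[X_i^2 \mid \mathcal{F}_{i-1}]$ shows $\mathbb{E}[S_i \mid \mathcal{F}_{i-1}] \leq S_{i-1}$, so $(S_i)$ is a nonnegative supermartingale with $\mathbb{E}[S_0] = 1$.

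To convert this into the uniform tail bound, I would define $\tau := \inf\{i \geq 0 : Y_i \geq \epsilon\}$ and observe that, since $W$ is non-decreasing, the event $\mathcal{E} := \{\exists i \geq 0 : Y_i \geq \epsilon \text{ and } W_i \leq \sigma^2\}$ equals $\{\tau < \infty \text{ and } W_\tau \leq \sigma^2\}$. On $\mathcal{E}$ one has $S_\tau \geq \exp(\theta \epsilon - h(\theta) \sigma^2)$. Combining Doob's optional stopping applied to the truncated time $\tau \wedge n$, passage to the limit via Fatou, and Markov's inequality then yields $\mathbb{P}[\mathcal{E}] \leq \exp\bigl(h(\theta)\sigma^2 - \theta \epsilon\bigr)$ for every $\theta > 0$.

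The final step is optimizing over $\theta$. Using the elementary bound $e^u - 1 - u \leq \frac{u^2/2}{1 - u/3}$ for $u \in [0,3)$, equivalent to $h(\theta) \leq \frac{\theta^2/2}{1 - \theta B/3}$, the exponent becomes $\frac{\theta^2\sigma^2/2}{1 - \theta B/3} - \theta\epsilon$. Choosing $\theta = \epsilon/(\sigma^2 + B\epsilon/3)$, which satisfies $\theta B/3 < 1$, and simplifying yields exactly the exponent $-\epsilon^2/(2(\sigma^2 + B\epsilon/3))$ claimed in the fact. The main obstacle I anticipate is not the algebra but the careful application of optional stopping to the unbounded-time event $\mathcal{E}$: since $S_{\tau \wedge n}$ need not be uniformly integrable, one must work with $\mathbb{E}[S_{\tau \wedge n}\mathbf{1}_{\mathcal{E}}]$, where the indicator forces $W_{\tau \wedge n} \leq \sigma^2$ so that the lower bound $S_{\tau \wedge n}\mathbf{1}_{\mathcal{E}} \geq e^{\theta Y_{\tau \wedge n} - h(\theta)\sigma^2}\mathbf{1}_{\mathcal{E}}$ remains controllable, before sending $n \to \infty$ via Fatou.
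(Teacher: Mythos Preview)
Your proof is correct and follows the classical exponential-supermartingale route that appears in the cited references (Freedman~1975, Tropp~2011). Note, however, that the paper does not give its own proof of this statement: it is recorded as a \emph{Fact} imported from the literature, so there is no in-paper argument to compare against. Your write-up is essentially the standard proof, with the supermartingale $S_i = \exp(\theta Y_i - h(\theta)W_i)$, optional stopping at $\tau = \inf\{i : Y_i \geq \epsilon\}$, Fatou to pass to $\tau < \infty$, and the Bernstein-type optimization in $\theta$; all the steps check out, including the identification $\mathcal{E} = \{\tau < \infty,\, W_\tau \leq \sigma^2\}$ via monotonicity of $W$, and the final algebra with $\theta = \epsilon/(\sigma^2 + B\epsilon/3)$. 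One implicit assumption worth making explicit is $Y_0 = 0$ (so that $S_0 = 1$); the paper's statement leaves this unsaid but uses the inequality only in that normalized form.
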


\begin{claim}\label{lem:error_propagation}
    Let $\widetilde{a},a\in\mathbb{R}_+$ be such that $|a - \widetilde{a}| \leq \epsilon$, where $\epsilon\in[0,\frac{a}{2}]$. Then $\left|\frac{1}{\widetilde{a}} - \frac{1}{a} \right| \leq \frac{2\epsilon}{a^2}$.
\end{claim}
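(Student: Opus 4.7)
The plan is to write the difference of reciprocals over a common denominator and then bound that denominator from below using the hypothesis $\epsilon \leq a/2$.

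First I would observe the identity
\begin{equation*}
\left|\frac{1}{\widetilde{a}} - \frac{1}{a}\right| = \frac{|a - \widetilde{a}|}{a\widetilde{a}},
\end{equation*}
which is valid as soon as $\widetilde{a}>0$. The numerator is immediately bounded by $\epsilon$ using the hypothesis $|a-\widetilde{a}|\leq\epsilon$, so the remaining task is to lower-bound the denominator $a\widetilde{a}$.

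Next I would use the hypothesis $\epsilon\leq a/2$ together with $|a-\widetilde{a}|\leq \epsilon$ to conclude $\widetilde{a}\geq a-\epsilon \geq a/2 > 0$. In particular, this justifies the division in the identity above and gives $a\widetilde{a}\geq a^2/2$. Substituting back yields
\begin{equation*}
\left|\frac{1}{\widetilde{a}} - \frac{1}{a}\right| \leq \frac{\epsilon}{a\widetilde{a}} \leq \frac{\epsilon}{a^2/2} = \frac{2\epsilon}{a^2},
\end{equation*}
which is the claim. There is no real obstacle here; the only subtlety is that the hypothesis $\epsilon\leq a/2$ is used precisely to guarantee the two-sided control $\widetilde{a}\in[a/2,3a/2]$ needed to turn $a\widetilde{a}$ into $a^2/2$.
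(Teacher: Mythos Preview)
Your proof is correct and is exactly the standard argument one would give; the paper itself states this claim without proof, treating it as elementary, so there is nothing to compare against.
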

% \begin{proof}
%     Since $a - \widetilde{a} \leq \epsilon \implies \frac{1}{\widetilde{a}} \leq \frac{1}{a - \epsilon} \leq \frac{2}{a}$, then $\left|\frac{1}{\widetilde{a}} - \frac{1}{a} \right| =  \left|\frac{\widetilde{a} - a}{a\widetilde{a}} \right| \leq \frac{2\epsilon}{a^2}$.
% \end{proof}

\subsection{Computational models and data structures} 

In our classical and quantum algorithms, we shall assume the existence of a low-overhead classical or quantum data structure storing the dataset matrix $V\in\mathbb{R}^{d\times n}$ and the centroid matrix $C\in\mathbb{R}^{d\times k}$.
\begin{definition}[Classical sampling access]
    \label{lem:matrix_structure}
    We say we have \emph{(classical) sampling access} to a matrix $V=[v_1,\dots,v_n]\in\mathbb{R}^{d\times n}$ if $V$ is stored in a data structure that supports the following operations:
    \begin{enumerate}[itemsep=0pt,topsep=1pt]
        \item Reading an entry of $V$ in $O(\log(nd))$ time;
        \item Finding $\|v_i\|$ and $\|V\|_{2,1}$ in $O(\log{n})$ and $O(1)$ time, respectively;
        \item Sampling from $\mathcal{D}^{(1)}_{(\|v_i\|)_{i=1}^n}(i) = \frac{\|v_i\|}{\|V\|_{2,1}}$ in $O(\log{n})$ time;
        \item Sampling from $\mathcal{D}^{(2)}_{v_i}(l) = \frac{V_{li}^2}{\|v_i\|^2}$ and $\mathcal{D}^{(1)}_V(i,l) = \frac{|V_{li}|}{\|V\|_{1,1}}$ in $O(\log(nd))$ time.
    \end{enumerate}
\end{definition}
\begin{definition}[Quantum query access]
    \label{lem:kp_tree}
    We say we have \emph{quantum query access} to a matrix $V=[v_1,\dots,v_n]\in\mathbb{R}^{d\times n}$ if $V$ is stored in a data structure that supports the following operations:
    \begin{enumerate}[itemsep=0pt,topsep=1pt]
        \item Reading an entry of $V$ in $O(\log(nd))$ time; \label{item:item0}
        \item Finding $\|v_i\|$ and $\|V\|_{2,1}$ in $O(\log{n})$ and $O(1)$ time, respectively; \label{item:item3}
        \item Mapping $|\bar{0}\rangle \mapsto \sum_{i\in[n]} \sqrt{\frac{\|v_i\|}{\|V\|_{2,1}}} |i\rangle$ in $O(\log{n})$ time; \label{item:item5}
        \item Mapping $|\bar{0}\rangle \mapsto \sum_{(i,l)\in[n]\times[d]} \sqrt{\frac{|V_{li}|}{\|V\|_{1,1}}} |i,l\rangle$ in $O(\log(nd))$ time; \label{item:item5a}
        \item Mapping $|i\rangle|\bar{0}\rangle \mapsto \sum_{l\in[d]} \frac{V_{li}}{\|v_i\|}|i,l\rangle$ in $O(\log(nd))$ time;\label{item:item5b}
        \item Mapping $|i,l\rangle|\bar{0}\rangle \mapsto |i,l\rangle|V_{li}\rangle$ in $O(\log(nd))$ time; \label{item:item1}
        \item Mapping $|i\rangle|\bar{0}\rangle \mapsto |i\rangle|v_i\rangle = |i\rangle|V_{1i},\dots,V_{di}\rangle$ in $O(d\log{n})$ time. \label{item:item2}
    \end{enumerate}
\end{definition}

In our classical computation model, all arithmetic operations require $O(1)$ time. We refer to any operation in \Cref{lem:matrix_structure} as a \emph{classical query}. We define a measure of (classical) \emph{time complexity} as the sum of the times of all arithmetic and classical queries comprising some given computation. In other words, if a computation is composed of $m$ arithmetic operations and $q$ classical queries, its time complexity is at most $O(m + q\log(nd))$.

In our quantum computation model, all quantum gates require $O(1)$ time, and more generally, all arithmetic operations require $O(1)$ time. We refer to any operation in \Cref{lem:kp_tree} as a \emph{quantum query}, except \Cref{item:item2} which comprises $d$ quantum queries. We define a measure of (quantum) \emph{time complexity} as the sum of the times of all arithmetic and quantum queries comprising some given computation. In other words, if a computation is composed of $m$ arithmetic operations and $q$ quantum queries, its time complexity is at most $O(m + q\log(nd))$.

In \Cref{lem:kp_tree}, the unitary from \Cref{item:item1,item:item2} is known as a quantum random access memory (QRAM)~\cite{giovannetti2008architectures,giovannetti2008quantum,hann2021practicality,phalak2023quantum,jaques2023qram,Allcock2024constantdepth} and can be seen as the quantum equivalent of a classical RAM. %Here by $|v_{i}\rangle$ we mean a \emph{binary encoding} of each entry of $v_i$, and not an \emph{amplitude encoding} as normally assumed in other works. 
%Several different QRAM architectures and circuits have been proposed~\cite{}. 
In this work we simply assume that a QRAM requires a running time proportional to its circuit depth, i.e., logarithmitically in the memory size it accesses. Finally, the classical data structure that allows the operations from \Cref{item:item0,item:item3,item:item5,item:item5a,item:item5b}, called KP-tree, was proposed by Kerenidis and Prakash~\cite{prakash2014quantum,kerenidis2016quantum}.

%in order to implement is a quantum circuit that performs $O(\log{d})$ queries to an oracle called QRAM (quantum random access memory). This oracle allows query of the form $\ket{i}\ket{0} \mapsto \ket{i}\ket{m_i}$ where $m_i \in \{0,1\}^m$, and $i \in [L]$ with a circuit depth of $O(\log L)$. This is similar to the memory of a classical computer. Importantly, working in the QRAM model with the oracles defined in \Cref{lem:kp_tree} assumes the existence of the mapping $ \ket{i,j}\mapsto \ket{i,j}\ket{V_{ij}},$ where $V_{ij}$ is the entry of the matrix $V$ at row $i$ and column $j$. The importance of this model consist in the existence of circuits of depth that is polylogarithmic in the size of the matrix. In a computational model where the run time of the quantum circuit is proportional to the depth, the QRAM model allows to obtain quantum algorithms whose runtime is sub-linear in the size of the matrix. For a precise (but verbose) definition of quantum computer with quantum access to classical memory, the interested reader is referred to \cite[Definition 8]{allcock2023constant}. 

\section{Classical algorithms}

We now present our classical $(\varepsilon,\tau)$-$k$-means algorithms, whose main idea is to employ the classical sampling access from~\Cref{lem:matrix_structure} to separately estimate the quantities $\sum_{i\in\mathcal{C}_j^t} v_i$ and $|\mathcal{C}_j^t|$ for $j\in[k]$, from which the new centroids $c_j^{t+1} \!\approx\! |\mathcal{C}_j^t|^{-1}\sum_{i\in\mathcal{C}_j^t} v_i$ are approximated. For our first algorithm, we sample columns of $V$ with probability $\frac{\|v_i\|}{\|V\|_{2,1}}$ and select those closer to $c_j^t$ to approximate $\sum_{i\in\mathcal{C}_j^t} v_i$. To approximate $|\mathcal{C}_j^t|$ we sample columns of $V$ uniformly at random instead.

\begin{theorem}[Classical $(\varepsilon,0)$-$k$-means algorithm]\label{thr:classical_algorithm}
    Let $\varepsilon>0$, $\delta\in(0,1)$, and assume sampling access to $V=[v_1,\dots,v_n]\in\mathbb{R}^{d\times n}$. If all clusters satisfy $|\mathcal{C}_j^t| = \Omega(\frac{n}{k})$, then {\rm \Cref{alg:k-means}} outputs centroids consistent with the $(\varepsilon,\tau=0)$-$k$-means algorithm with probability $1-\delta$. The complexities per iteration of {\rm \Cref{alg:k-means}} are
    \begin{align*}
        \textbf{Classical queries:}&~O\bigg(\bigg(\frac{\|V\|^2}{n} + \frac{\|V\|_{2,1}^2}{n^2}\bigg)\frac{k^2d}{\varepsilon^2}\log\frac{k}{\delta}\bigg), \\ % = O\left(\frac{\|V\|_F^2}{n}\frac{k^2}{\varepsilon^2}\log\frac{k}{\delta}\right), \\
        \textbf{Time:}&~O\bigg(\bigg(\frac{\|V\|^2}{n} + \frac{\|V\|_{2,1}^2}{n^2}\bigg)\frac{k^2d}{\varepsilon^2}(k + \log(nd))\log\frac{k}{\delta}\bigg).% = O\left(\frac{\|V\|_F^2}{n}\frac{k^2}{\varepsilon^2}(kd + \log{n})\log\frac{k}{\delta}\right).
    \end{align*}
\end{theorem}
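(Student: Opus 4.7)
The plan is to approximate, for each cluster $j\in[k]$, the numerator $s_j := \sum_{i\in\mathcal{C}_j^t} v_i$ and denominator $n_j := |\mathcal{C}_j^t|$ of the new centroid $c_j^{t+1} = s_j/n_j$ separately by classical Monte Carlo, and then combine them via \Cref{lem:error_propagation}. For $s_j$ I would $\ell_1$-sample $m_s$ indices $i_1,\dots,i_{m_s}$ with probability $p_i = \|v_i\|/\|V\|_{2,1}$ (item 3 of \Cref{lem:matrix_structure}), read each $v_{i_r}$ to compute its exact label $\ell_{i_r}^t = \arg\min_{j'\in[k]}\|v_{i_r}-c_{j'}^t\|$ in $O(kd)$ arithmetic, and form the importance-weighted estimator $\hat{s}_j = m_s^{-1}\sum_r (v_{i_r}/p_{i_r})\mathbf{1}[\ell_{i_r}^t=j]$. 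A direct calculation gives $\mathbb{E}[\hat{s}_j]=s_j$ and $\mathbb{E}[\|(v_i/p_i)\mathbf{1}[\ell_i^t=j]\|^2] = \|V\|_{2,1}\sum_{i\in\mathcal{C}_j^t}\|v_i\|\leq \|V\|_{2,1}^2$. For $n_j$ I would draw $m_n$ uniformly random indices and form $\hat{n}_j=(n/m_n)\sum_r \mathbf{1}[\ell_{i_r}^t=j]$, which is unbiased with $\operatorname{Var}(\hat{n}_j)\leq n\cdot n_j/m_n$.

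To turn these second-moment bounds into high-probability guarantees I would apply (i) \Cref{fact:chernoff} to the binomial-type $\hat{n}_j$, yielding $|\hat{n}_j-n_j|\leq \varepsilon n_j^{3/2}/(c\|V\|)$ with probability $1-\delta/(2k)$ at cost $m_n = O(k^2\|V\|^2 \log(k/\delta)/(n\varepsilon^2))$, and (ii) a geometric-median-of-means amplification for the vectorial $\hat{s}_j$: run $O(\log(k/\delta))$ independent copies of the base estimator and take their geometric median, which by the vector analogue of \Cref{fact:median-of-means} (Minsker's theorem) produces $\|\hat{s}_j - s_j\|\leq \varepsilon n_j/c$ with probability $1-\delta/(2k)$ at total sample cost $m_s = O(k^2\|V\|_{2,1}^2\log(k/\delta)/(n^2\varepsilon^2))$. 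All samples can be reused across $j\in[k]$, so the $k^2$ factor already absorbs the union bound over clusters.

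Plugging both concentrations into the decomposition
\begin{equation*}
  \hat{s}_j/\hat{n}_j - s_j/n_j = (\hat{s}_j - s_j)/\hat{n}_j + s_j\bigl(1/\hat{n}_j - 1/n_j\bigr)
\end{equation*}
and using \Cref{lem:error_propagation} to bound $|1/\hat{n}_j - 1/n_j|\leq 2|\hat{n}_j - n_j|/n_j^2$ (the hypothesis $|\hat{n}_j - n_j|\leq n_j/2$ being ensured by the first step for $\varepsilon$ small enough), together with $\|s_j\|/n_j = \|c_j^{t+1}\|\leq \|V\|/\sqrt{n_j}$ (since $s_j = V\mathbf{1}_{\mathcal{C}_j^t}$ and $\|\mathbf{1}_{\mathcal{C}_j^t}\| = \sqrt{n_j}$), yields $\|\hat{c}_j^{t+1}-c_j^{t+1}\|\leq \varepsilon$ after choosing constants. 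Each of the $m_s+m_n$ draws costs $O(\log n)$ time for the sample plus $O(d\log(nd))$ time to read the $d$ entries of $v_{i_r}$ and $O(kd)$ arithmetic for the $k$ distances to the current centroids, i.e.\ $O(d(k+\log(nd)))$ time and $d$ queries per sample; this produces the claimed time and query complexities directly.

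The main obstacle is the vector concentration for $\hat{s}_j$: a coordinate-wise application of the scalar \Cref{fact:median-of-means} combined with a union bound over the $d$ coordinates would inflate the sample complexity by an unwanted $\log d$ factor, and Chebyshev alone yields only constant success probability. The geometric-median amplification circumvents both issues and is essential for matching the stated bound. A secondary subtlety is being careful to reuse the same Monte Carlo samples across all $k$ clusters when combining the error guarantees, which is what allows a single union-bound factor of $\log(k/\delta)$ rather than one per cluster.
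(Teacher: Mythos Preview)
Your proposal is correct and follows the same high-level architecture as the paper: uniform sampling to estimate $|\mathcal{C}_j^t|$ (with Chernoff, \Cref{fact:chernoff}), $\ell_1$-importance sampling with weights $\|v_i\|/\|V\|_{2,1}$ to estimate $\sum_{i\in\mathcal{C}_j^t} v_i$, the same decomposition of $\hat{s}_j/\hat{n}_j - s_j/n_j$ combined with \Cref{lem:error_propagation}, and the bound $\|s_j\|/n_j \le \|V\|/\sqrt{n_j}$. Your parameter choices recover exactly the stated $p,q$ and hence the stated query and time complexities.

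The one genuine difference is the vector-concentration step for $\hat{s}_j$. You invoke a geometric-median-of-means amplification (Minsker's theorem), which is an external tool not listed among the paper's preliminaries. The paper instead avoids any median trick: it writes the scalar functional $f(X_1,\dots,X_q)=\|(\widetilde V - V)\chi_j^t\|$, observes that it is $2\|V\|_{2,1}/(q|\mathcal{C}_j^t|)$-Lipschitz in each coordinate, and applies Freedman's inequality (\Cref{thm:freedman_inequality}) to the associated Doob martingale to get exponential concentration in a single batch of $q$ samples. This is slightly cleaner in that it stays within the toolkit already declared in the preliminaries and needs only one pass over the samples rather than $O(\log(k/\delta))$ independent batches plus a geometric-median computation; on the other hand your route is conceptually modular and would generalize immediately to other unbounded-increment settings where Freedman's hypothesis $X_i\le B$ might be awkward to verify. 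Either way the final bounds coincide.
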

\begin{proof}
    Let $\chi_j^t\in\mathbb{R}^n$ be the characteristic vector for cluster $j \in [k]$ at iteration $t$ scaled to $\ell_1$-norm, i.e., $(\chi_{j}^{t})_{i} = \frac{1}{|\mathcal{C}_j^t|}$ if $i\in \mathcal{C}^t_{j}$ and $0$ if $i \not \in \mathcal{C}^t_{j}$. For $i\in[n]$, let $\ell_i^t := \arg\min_{j\in[k]}\|v_i - {c}_j^t\|$. Sample $p$ indices $P\subseteq[n]$ uniformly from $[n]$. Let $\lambda_j = \frac{p}{|P_j|}\frac{|\mathcal{C}_j^t|}{n}$, where $P_j := \{i\in P | \ell_i^t = j\}$. On the other hand, sample $q$ indices $Q\subseteq[n]$ from the distribution $\mathcal{D}^{(1)}_{(\|v_i\|)_{i=1}^n}(i) = \frac{\|v_i\|}{\|V\|_{2,1}}$. For each $i\in Q$, let $X_i\in\mathbb{R}^{d\times n}$ be the matrix formed by setting the $i$-th column of $X$ to $\|V\|_{2,1}\frac{v_i}{\|v_i\|}$ and the rest to zero. Define $\widetilde{V} := \frac{1}{q}\sum_{i\in Q} X_i$. Then $\mathbb{E}[\widetilde{V}] = V$. 
\begin{algorithm}[t]
    \caption{Classical $(\varepsilon,\tau=0)$-$k$-means algorithm}
    \begin{algorithmic}[1]
    \Require Sampling access to data matrix $V=[v_1,\dots,v_n]\in\mathbb{R}^{d\times n}$, parameters $\delta,\varepsilon$.
%    \Ensure Vectors $c_1,\dots,c_k\in\mathbb{R}^d$ corresponding to centroids.
    \State Select $k$ initial centroids ${c}_1^0,\dots,{c}_k^0$ %and build sampling access to $C^0=[c_1^0,\dots,c_k^0]\in\mathbb{R}^{k\times d}$
    \For{$t=0$ until convergence}
        \State Sample $p = O\big(\frac{\|V\|^2}{n}\frac{k^2}{\varepsilon^2}\log\frac{k}{\delta}\big)$ indices $P\subseteq[n]$ uniformly from $[n]$
        \State Sample $q = O\big(\frac{\|V\|_{2,1}^2}{n^2}\frac{k^2}{\varepsilon^2}\log\frac{k}{\delta}\big)$ indices $Q\subseteq[n]$ from $\mathcal{D}^{(1)}_{(\|v_i\|)_{i=1}^n}$
        \State For $i\in P\cup Q$, label $\ell_i^t = \arg\min_{j\in[k]}\|v_{i} - {c}_j^t\|$. 
        \State For $j\in[k]$, let $P_j := \{i\in P|\ell^t_i = j\}$ and $Q_j := \{i\in Q|\ell_i^t = j\}$
        \State For $j\in[k]$, let the new centroids ${c}_j^{t+1} = \frac{p}{n|P_j|}\sum_{i\in Q_j}\frac{\|V\|_{2,1}}{q}\frac{v_{i}}{\|v_{i}\|}$
    \EndFor
\end{algorithmic}\label{alg:k-means}
\end{algorithm}
    
    We start with the error analysis. We note that the outputs of the standard $k$-means and \Cref{alg:k-means} can be stated, respectively, as $c_j^{\ast\:t+1} = V\chi_j^t$ and $c_j^{t+1} = \lambda_j \widetilde{V}\chi_j^{t} = \frac{p}{n|P_j|}\sum_{i\in Q_j}\frac{\|V\|_{2,1}}{q}\frac{v_i}{\|v_i\|}$, where $Q_j := \{i\in Q|\ell_i^t = j\}$. In order to bound $\|c_j^{\ast\:t+1} - c_j^{t+1}\|$, first note that, by the triangle inequality, 
    \begin{align*}
        \|c_j^{\ast\:t+1} - c_j^{t+1}\| \leq |\lambda_j - 1|\|V\chi_j^t\| + |\lambda_j|\|(\widetilde{V} - V)\chi_j^t\|,    
    \end{align*}
    so we aim at bounding $|\lambda_j - 1| \leq \frac{\varepsilon}{2\|V\chi_j^t\|}$ and $\|(\widetilde{V} - V)\chi_j^t\| \leq \frac{\varepsilon}{2|\lambda_j|}$. Let us start with $|\lambda_j-1|$. Notice that $|P_j|$ is a binomial random variable with mean $p|\mathcal{C}_j^t|/n$. By a Chernoff bound (\Cref{fact:chernoff}),
    \begin{align}\label{eq:chernoffq-means}
        \operatorname{Pr}\left[\left|\frac{|P_j|}{p}\frac{n}{|\mathcal{C}_j^t|} - 1\right| \geq \frac{\varepsilon}{4\|V\|\|\chi_j^t\|}\right] \leq 2\exp\left(-\frac{\varepsilon^2 p |\mathcal{C}_j^t|}{48n\|V\|^2\|\chi_j^t\|^2}\right).
    \end{align}
    %
% \textcolor{red}{It suffices to take $p = O\big(\frac{\|V\|_F^2 n}{\varepsilon^2|\mathcal{C}_j^t|^2}\log\frac{k}{\delta}\big)$.}\\ 
    It suffices to take $p = \frac{48\|V\|^2\|\chi_j^t\|^2n}{\varepsilon^2|\mathcal{C}_j^t|}\ln\frac{2k}{\delta} =  O\big(\frac{\|V\|^2}{n} \frac{k^2}{\varepsilon^2}\log\frac{k}{\delta}\big)$ in order to estimate $|\lambda_j^{-1} - 1|\leq \frac{\varepsilon}{4\|V\|\|\chi_j^t\|}$ with probability at least $1-\frac{\delta}{2k}$ (using that $|\mathcal{C}_j^t| = \Omega(\frac{n}{k})$, $\|V\chi_j^t\| \leq \|V\|\|\chi_j^t\|$, and $\|\chi_j^t\|^2 = 1/|\mathcal{C}_j^t|$). The bound on $|\lambda_j^{-1} - 1|$ implies that $|\lambda_j - 1| \leq \frac{\varepsilon}{2\|V\|\|\chi_j^t\|}$, where we used~\Cref{lem:error_propagation} and that $|\lambda_j| \leq 2$ with high probability --- which is already implied by the bound in \Cref{eq:chernoffq-means}. By the union bound, the bound on $\lambda_j$ holds for all $j\in[k]$ with probability at least $1-\frac{\delta}{2}$.

    Regarding the bound on $\widetilde{V}$, we use Freedman's inequality to prove $\|(\widetilde{V} - V)\chi_j^t\| \leq \frac{\varepsilon}{4} \leq \frac{\varepsilon}{2|\lambda_j|}$ (again using that $|\lambda_j| \leq 2$). For such, let $f(X_1,\dots,X_q) = \|(\widetilde{V} - V)\chi_j^t\|$. First, for all $i\in[q]$,
    \begin{align*}
        |f(X_1,\dots, X_i,\dots,X_q) - f(X_1,\dots,X'_i,\dots,X_q)| 
        \leq \frac{1}{q}\|(X_i - X'_i)\chi_j^t\|
        \leq \frac{2\|V\|_{2,1}}{q|\mathcal{C}_j^t|}.
    \end{align*}
    Second, we bound the variance: for all $i\in[q]$,
    \begin{align*}
        \operatorname*{\mathbb{E}}_{X_i,X'_i}\big[f(X_1,\dots,X_i,\dots,X_q) -  f(X_1,\dots,X'_i,\dots,X_q)\big]^2
        \leq \frac{1}{q^2}\operatorname*{\mathbb{E}}_{X_i,X'_i}\big[\|(X_i - X'_i)\chi_j^t\|\big]^2 \\ 
        \leq \frac{4}{q^2}{\operatorname*{\mathbb{E}}_{X_i}}[\|X_i\chi_j^t\|^2] = \frac{4}{q^2}(\chi_j^t)^\top \operatorname*{\mathbb{E}}_{X}[X^\top X]\chi_j^t = \frac{4}{q^2}(\chi_j^t)^\top \|V\|_{2,1}\operatorname{diag}((\|v_i\|)_{i\in[n]})\chi_j^t 
        \leq \frac{4\|V\|_{2,1}^2}{q^2|\mathcal{C}_j^t|^2}.
    \end{align*}
    We employ Freedman's inequality with the Doob martingale $Y_i := \mathbb{E}[f(X_1,\dots,X_q)|X_1,\dots,X_i]$ for $i\in[q]$. Then \Cref{thm:freedman_inequality} with $B = \frac{2\|V\|_{2,1}}{q|\mathcal{C}_j^t|}$ and $\sigma^2 = q \cdot \frac{4\|V\|_{2,1}^2}{q^2|\mathcal{C}_j^t|^2}$ leads to
    \begin{align*}
        \operatorname{Pr}\left[\|(\widetilde{V} - V)\chi_j^t\| \geq \frac{\varepsilon}{4} \right] \leq \exp\left(-\frac{\varepsilon^2/32}{\frac{4\|V\|_{2,1}^2}{q|\mathcal{C}_j^t|^2} + \frac{\varepsilon\|V\|_{2,1}}{6q|\mathcal{C}_j^t|}}\right).% \leq \exp\left(-\frac{q\varepsilon^2|\mathcal{C}_j^t|}{(128 + 6\varepsilon)\|V\|_F^2}\right).
    \end{align*}
    It suffices to take $q = O\big({\max}\big\{\frac{\|V\|^2_{2,1}}{n^2}\frac{k^2}{\varepsilon^2},\frac{\|V\|_{2,1}}{n}\frac{k}{\varepsilon}\big\}\log\frac{k}{\delta}\big)$ to approximate $\|(\widetilde{V} - V)\chi_j^t\| \leq \frac{\varepsilon}{2}$ with probability at least $1-\frac{\delta}{2k}$ (already using that $|\mathcal{C}_j^t| = \Omega(\frac{n}{k})$). %By a union bound, $\|(\widetilde{V} - V)\chi_j^t\| \leq \frac{\varepsilon}{2}$ for all $j\in[k]$ with probability at least $1-\frac{\delta}{2}$. 
    All in all, we have $\|c_j^{\ast\:t+1} - c_j^{t+1}\| \leq \varepsilon$ with probability at least $1-\delta$ for all the centroids (using a union bound).

    We now turn our attention to the query and time complexities. In order to compute the clusters $\{\mathcal{C}_j^t\}_{j\in[k]}$, for each $i\in P\cup Q$ and $j\in[k]$, we exactly compute the distance $\|v_i - {c}_j^{t}\|$, which requires $O(d\log(nd))$ time: $O(d\log(nd))$ time to read a vector of $d$ components $v_i$ and $O(d)$ time to compute the distance. In total, we need access to at most $p+q$ vectors, so $O((p+q)d)$ classical queries, while the time complexity is $O((p+q)(kd + d\log(nd)))$, accounting for the $O(kd)$ cost in computing all distances $\|v_i - c_j^t\|$ between $v_i$ and the $k$ centroids stored in memory. Finally, the $k$ new centroids are obtained by summing $q$ $d$-dimensional vectors in $O(qd)$ time. In summary,
    \begin{itemize}
        \item Sampling $P,Q\subseteq[n]$ and querying the corresponding vectors $\{v_i\}_{i\in P\cup Q}$ takes $O((p+q)d)$ queries and $O((p+q)d\log(nd))$ time; 
        \item Obtaining the labels $\{\ell^t_i\}_{i\in P\cup Q}$ requires $O((p+q)kd)$ time; 
        \item Computing new centroids $\{{c}_j^{t+1}\}_{j\in[k]}$ by adding $q$ $d$-dimensional vectors takes $O(qd)$ time.
    \end{itemize}
    This means the overall query complexity is $O((p+q)d)$ and the time complexity is
    \[
        O\big((p+q)d(k + \log(nd))\big) = O\bigg(\bigg(\frac{\|V\|^2}{n} + \frac{\|V\|_{2,1}^2}{n^2}\bigg)\frac{k^2d}{\varepsilon^2}(k + \log(nd))\log\frac{k}{\delta}\bigg).    \qedhere
    \]
\end{proof}

\Cref{alg:k-means} computes the distances $\|v_i - {c}_j^t\|$, and thus the labels $\{\ell_i^t\}_{i\in P\cup Q}$, in an exact way via classical arithmetic circuits in $O((p+q)kd)$ time ($O(d)$ time for each pair $(v_i,{c}_j^t)$). Similarly, the new centroids ${c}_j^{t+1}$ are computed by adding $q$ $d$-dimensional vectors, $\sum_{i\in Q_j}\frac{v_i}{\|v_i\|}$. It is possible, however, to approximate $\|v_i - {c}_j^t\|$ via a sampling procedure, which allows to trade the dependence on $d$ with some norm of $V$. \Cref{alg:k-means2} describes how this can be performed and the next theorem analyses its query and time complexities.

\begin{algorithm}[t]
    \caption{Classical $(\varepsilon,\tau)$-$k$-means algorithm}
    \begin{algorithmic}[1]
    \Require Sampling access to data matrix $V=[v_1,\dots,v_n]\in\mathbb{R}^{d\times n}$, parameters $\delta,\varepsilon,\tau$.
%    \Ensure Vectors $c_1,\dots,c_k\in\mathbb{R}^d$ corresponding to centroids.
    \State Select $k$ initial centroids ${c}_1^0,\dots,{c}_k^0$ %and build sampling access to $C^0=[c_1^0,\dots,c_k^0]\in\mathbb{R}^{k\times d}$
    \For{$t=0$ until convergence}
        \State Sample $p = O\big(\frac{\|V\|^2}{n}\frac{k^2}{\varepsilon^2}\log\frac{k}{\delta}\big)$ rows $P\subseteq[n]$ uniformly from $[n]$
        \State Sample $q = O\big(\frac{\|V\|_{1,1}^2}{n^2}\frac{k^2}{\varepsilon^2}\log\frac{k}{\delta}\big)$ rows $Q\subseteq[n]\times [d]$ from $\mathcal{D}^{(1)}_{V}$
        \State \parbox[t]{\dimexpr\linewidth-\algorithmicindent}{For $i\in P$ and $(i,\cdot)\in Q$, compute $\ell_i^t\in \{j\in[k]: \|v_i - {c}_j^t\|^2 \leq \min_{j'\in[k]}\|v_i - {c}_{j'}^t\|^2 + \tau \}$ using \Cref{lem:classical_inner_product}}
        \State For $(j,l)\in[k]\times[d]$, let $P_j := \{i\in P|\ell_i = j\}$ and $Q_{jl} := \{(i,l')\in Q|(\ell_i^t,l') = (j,l)\}$
        \State For $(j,l)\in[k]\times[d]$, let the new centroids $({c}_j^{t+1})_l = \frac{p}{n|P_j|}\sum_{(i,l')\in Q_{jl}}\frac{\|V\|_{1,1}}{q}\operatorname{sgn}(V_{l'i})$
    \EndFor
\end{algorithmic}\label{alg:k-means2}
\end{algorithm}

\begin{theorem}[Classical $(\varepsilon,\tau)$-$k$-means algorithm]\label{thr:classical_algorithm2}
    Let $\varepsilon,\tau >0$, $\delta\in(0,1)$, and assume sampling access to $V=[v_1,\dots,v_n]\in\mathbb{R}^{d\times n}$. If all clusters satisfy $|\mathcal{C}_j^t| = \Omega(\frac{n}{k})$, then {\rm \Cref{alg:k-means2}} outputs centroids consistent with the $(\varepsilon,\tau)$-$k$-means algorithm with probability $1-\delta$. The complexities per iteration of {\rm \Cref{alg:k-means2}} are (up to $\polylog$ factors in $k$, $d$, $\frac{1}{\delta}$, $\frac{1}{\tau}$, $\frac{1}{\varepsilon}$, $\frac{\|V\|_F}{\sqrt{n}}$)
    \begin{align*}
        \textbf{Classical queries:}&~\widetilde{O}\bigg(\bigg(\frac{\|V\|^2}{n} + \frac{\|V\|_{1,1}^2}{n^2} \bigg)\frac{\|V\|_F^2\|V\|_{2,\infty}^2}{n}\frac{k^3}{\varepsilon^2\tau^2} \bigg), \\ 
        \textbf{Time:}&~\widetilde{O}\bigg(\bigg(\frac{\|V\|^2}{n} + \frac{\|V\|_{1,1}^2}{n^2} \bigg)\frac{\|V\|_F^2\|V\|_{2,\infty}^2}{n}\frac{k^3}{\varepsilon^2\tau^2}\log{n} \bigg).
    \end{align*}
\end{theorem}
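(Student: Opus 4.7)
The plan is to mirror the proof of Theorem~\ref{thr:classical_algorithm}, making two substantive changes: $\widetilde{V}$ is now built from entry-wise $\ell_1$-samples rather than column-wise $\ell_2$-samples, and the labels $\ell_i^t$ come from the approximate distance routine of Lemma~\ref{lem:classical_inner_product} so as to satisfy the $\tau$-slack condition rather than being exact argmins.

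\emph{Setup.} For the labels $\ell_i^t$ produced by the algorithm, let $\mathcal{C}_j^t = \{i : \ell_i^t = j\}$ and let $\chi_j^t\in\mathbb{R}^n$ be the rescaled characteristic vector with $(\chi_j^t)_i = 1/|\mathcal{C}_j^t|$ for $i\in\mathcal{C}_j^t$. Write the ideal centroid as $c_j^{\ast,t+1} = V\chi_j^t$ and the algorithm's output as $c_j^{t+1} = \lambda_j \widetilde{V}\chi_j^t$, where $\lambda_j = (p/|P_j|)(|\mathcal{C}_j^t|/n)$ and
\begin{equation*}
\widetilde{V} = \frac{1}{q}\sum_{(i,l)\in Q} X_{(i,l)}, \qquad X_{(i,l)} = \|V\|_{1,1}\operatorname{sgn}(V_{li})\, e_l e_i^\top,
\end{equation*}
so $\mathbb{E}[\widetilde{V}]=V$. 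The triangle inequality splits the error as $\|c_j^{\ast,t+1}-c_j^{t+1}\| \leq |\lambda_j-1|\|V\chi_j^t\| + |\lambda_j|\|(\widetilde{V} - V)\chi_j^t\|$, and I would bound each summand by $\varepsilon/2$.

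\emph{Accuracy.} The $\lambda_j$ argument is verbatim that of Theorem~\ref{thr:classical_algorithm}: Fact~\ref{fact:chernoff} applied to the binomial $|P_j|$, combined with Claim~\ref{lem:error_propagation}, fixes $p = O((\|V\|^2/n)(k^2/\varepsilon^2)\log(k/\delta))$ and gives $|\lambda_j|\leq 2$ with high probability. For the $\widetilde{V}$ term I would apply Freedman's inequality (Fact~\ref{thm:freedman_inequality}) to the Doob martingale of $f(X_1,\dots,X_q) = \|(\widetilde{V} - V)\chi_j^t\|$; with entry-wise sampling the bounded-differences parameter becomes $B = 2\|V\|_{1,1}/(q|\mathcal{C}_j^t|)$ and the per-step variance is at most $\|V\|_{1,1}^2/|\mathcal{C}_j^t|^2$, yielding $\|(\widetilde{V} - V)\chi_j^t\| \leq \varepsilon/4$ with probability $1-\delta/(2k)$ as soon as $q = O((\|V\|_{1,1}^2/n^2)(k^2/\varepsilon^2)\log(k/\delta))$. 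A union bound over $j\in[k]$ closes the accuracy part.

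\emph{Labeling and total complexity.} Since $\|v_i-c_j^t\|^2 = \|v_i\|^2 - 2\langle v_i, c_j^t\rangle + \|c_j^t\|^2$, approximating each inner product $\langle v_i, c_j^t\rangle$ to additive accuracy $\Theta(\tau)$ via Lemma~\ref{lem:classical_inner_product} guarantees that the empirical argmin lies in the $\tau$-admissible set. The cost of one such estimate scales as $\|v_i\|^2\|c_j^t\|^2/\tau^2$ up to $\polylog$ factors; using $\|v_i\| \leq \|V\|_{2,\infty}$ together with a bound $\|c_j^t\|^2 = O(\|V\|_F^2/n)$ (from AM--QM applied to the cluster mean that $c_j^t$ $\varepsilon$-approximates, the balance hypothesis $|\mathcal{C}_j^{t-1}| = \Omega(n/k)$, and sharing coordinate samples across the $k$ centroids of each fixed $v_i$) gives a per-estimate cost of $\widetilde{O}(\|V\|_{2,\infty}^2\|V\|_F^2/(n\tau^2))$. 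Multiplying by $k$ centroids and the $p+q$ sampled rows, and adding the $O(qd)$ time to assemble the new centroids, yields the stated query and time complexities. The main obstacle is the bookkeeping: carrying the right constants through Lemma~\ref{lem:classical_inner_product} and the union bound over all $(p+q)k$ inner-product estimates, and invoking the balance hypothesis in the right way (shared coordinate samples plus median-of-means amplification) so that the final exponent of $k$ comes out as $k^3$ rather than larger.
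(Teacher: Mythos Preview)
Your proposal is correct and follows the same approach as the paper—the same triangle-inequality split, the same Chernoff/Freedman bounds with $\|V\|_{1,1}$ replacing $\|V\|_{2,1}$ in the entry-wise sampling, and the same appeal to Lemma~\ref{lem:classical_inner_product} for the $\tau$-approximate labels. Two small slips to tighten: the pointwise bound $\|c_j^t\|^2 = O(\|V\|_F^2/n)$ is off by a factor of $k$ (what you actually need, and what Lemma~\ref{lem:classical_inner_product} uses, is the \emph{sum} $\|C^t\|_F^2 = O(k\|V\|_F^2/n)$, which still gives per-label cost $\widetilde{O}(k\|V\|_{2,\infty}^2\|V\|_F^2/(n\tau^2))$), and assembling the new centroids costs $O(q)$ rather than $O(qd)$ since each entry-wise sample contributes to a single coordinate.
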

\begin{proof}
    The proof is similar to \Cref{thr:classical_algorithm}. Let $\chi_j^t\in\mathbb{R}^n$ be such that $(\chi_{j}^{t})_{i} = \frac{1}{|\mathcal{C}_j^t|}$ if $i\in \mathcal{C}^t_{j}$ and $0$ if $i \not \in \mathcal{C}^t_{j}$. For $i\in[n]$, let $\ell_i^t \in \{j\in[k]:\|v_i - c_j^t\|^2 \leq \min_{j'\in[k]}\|v_i - c_{j'}^t\|^2 + \tau\}$. Again we sample $p$ indices $P\subseteq[n]$ uniformly from $[n]$ and let $\lambda_j = \frac{p}{|P_j|}\frac{|\mathcal{C}_j^t|}{n}$, where $P_j := \{i\in P | \ell_i^t = j\}$. On the other hand, we now sample $q$ indices $Q\subseteq[n]\times[d]$ from the distribution $\mathcal{D}^{(1)}_{V}(i,l) = \frac{|V_{li}|}{\|V\|_{1,1}}$. For each $(i,l)\in Q$, let $X_{li}\in\mathbb{R}^{d\times n}$ be the matrix formed by setting the $(l,i)$-th entry of $X$ to $\|V\|_{1,1}\operatorname{sgn}(V_{li})$ and the rest to zero. Define $\widetilde{V} := \frac{1}{q}\sum_{(i,l)\in Q} X_{li}$. Then $\mathbb{E}[\widetilde{V}] = V$. Let also $\overline{Q}:= \{i|(i,l)\in Q ~\text{for some}~l\in[d]\}$ for convenience.

    The outputs of the standard $k$-means and \Cref{alg:k-means} can be stated, respectively, as $c_j^{\ast\:t+1} = V\chi_j^t$ and $(c_j^{t+1})_l = \lambda_j (\widetilde{V}\chi_j^{t})_l = \frac{p}{n|P_j|}\sum_{(i,l')\in Q_{jl}}\frac{\|V\|_{1,1}}{q}\operatorname{sgn}(V_{l'i})$, where $Q_{jl} := \{(i,l')\in Q|(\ell_i^t,l') = (j,l)\}$. In order to bound $\|c_j^{\ast\:t+1} - c_j^{t+1}\|$, once again, by the triangle inequality, 
    \begin{align*}
        \|c_j^{\ast\:t+1} - c_j^{t+1}\| \leq |\lambda_j - 1|\|V\chi_j^t\| + |\lambda_j|\|(\widetilde{V} - V)\chi_j^t\|,    
    \end{align*}
    and we just need to show that $|\lambda_j - 1| \leq \frac{\varepsilon}{2\|V\chi_j^t\|}$ and $\|(\widetilde{V} - V)\chi_j^t\| \leq \frac{\varepsilon}{2|\lambda_j|}$. Exactly as in \Cref{thr:classical_algorithm}, it suffices to take $p=O\big(\frac{\|V\|^2}{n}\frac{k^2}{\varepsilon^2}\log\frac{k}{\delta}\big)$ in order to bound $|\lambda_j - 1| \leq \frac{\varepsilon}{2\|V\chi_j^t\|}$ with probability $1-\frac{\delta}{4}$.

    Regarding the bound on $\widetilde{V}$, we use Freedman's inequality to prove $\|(\widetilde{V} - V)\chi_j^t\| \leq \frac{\varepsilon}{4} \leq \frac{\varepsilon}{2|\lambda_j|}$ (using that $|\lambda_j| \leq 2$). For such, let $f(X_1,\dots,X_q) = \|(\widetilde{V} - V)\chi_j^t\|$. First, for all $i\in[q]$,
    \begin{align*}
        |f(X_1,\dots, X_i,\dots,X_q) - f(X_1,\dots,X'_i,\dots,X_q)| 
        \leq \frac{1}{q}\|(X_i - X'_i)\chi_j^t\|
        \leq \frac{2\|V\|_{1,1}}{q|\mathcal{C}_j^t|}.
    \end{align*}
    Second, we bound the variance: for all $i\in[q]$,
    \begin{align*}
        \operatorname*{\mathbb{E}}_{X_i,X'_i}\big[f(X_1,\dots,X_i,\dots,X_q) -  f(X_1,\dots,X'_i,\dots,X_q)\big]^2
        \leq \frac{1}{q^2}\operatorname*{\mathbb{E}}_{X_i,X'_i}\big[\|(X_i - X'_i)\chi_j^t\|\big]^2 \\ 
        \leq \frac{4}{q^2}{\operatorname*{\mathbb{E}}_{X_i}}[\|X_i\chi_j^t\|^2] = \frac{4}{q^2}(\chi_j^t)^\top \operatorname*{\mathbb{E}}_{X}[X^\top X]\chi_j^t = \frac{4}{q^2}(\chi_j^t)^\top \|V\|_{1,1}\operatorname{diag}((\|v_i\|_1)_{i\in[n]})\chi_j^t 
        \leq \frac{4\|V\|_{1,1}^2}{q^2|\mathcal{C}_j^t|^2}.
    \end{align*}
    We employ Freedman's inequality with the Doob martingale $Y_i := \mathbb{E}[f(X_1,\dots,X_q)|X_1,\dots,X_i]$ for $i\in[q]$. Then \Cref{thm:freedman_inequality} with $B = \frac{2\|V\|_{1,1}}{q|\mathcal{C}_j^t|}$ and $\sigma^2 = q \cdot \frac{4\|V\|_{1,1}^2}{q^2|\mathcal{C}_j^t|^2}$ leads to
    \begin{align*}
        \operatorname{Pr}\left[\|(\widetilde{V} - V)\chi_j^t\| \geq \frac{\varepsilon}{4} \right] \leq \exp\left(-\frac{\varepsilon^2/32}{\frac{4\|V\|_{1,1}^2}{q|\mathcal{C}_j^t|^2} + \frac{\varepsilon\|V\|_{1,1}}{6q|\mathcal{C}_j^t|}}\right).% \leq \exp\left(-\frac{q\varepsilon^2|\mathcal{C}_j^t|}{(128 + 6\varepsilon)\|V\|_F^2}\right).
    \end{align*}
    It suffices to take $q = O\big({\max}\big\{\frac{\|V\|^2_{1,1}}{n^2}\frac{k^2}{\varepsilon^2},\frac{\|V\|_{1,1}}{n}\frac{k}{\varepsilon}\big\}\log\frac{k}{\delta}\big)$ to approximate $\|(\widetilde{V} - V)\chi_j^t\| \leq \frac{\varepsilon}{2}$ with probability at least $1-\frac{\delta}{4k}$ (already using that $|\mathcal{C}_j^t| = \Omega(\frac{n}{k})$). %By a union bound, $\|(\widetilde{V} - V)\chi_j^t\| \leq \frac{\varepsilon}{2}$ for all $j\in[k]$ with probability at least $1-\frac{\delta}{2}$. 
    All in all, we have $\|c_j^{\ast\:t+1} - c_j^{t+1}\| \leq \varepsilon$ with probability at least $1-\frac{\delta}{2}$ for all the centroids (using a union bound).

    We now turn our attention to the query and time complexities. Another main difference to \Cref{thr:classical_algorithm} is that the clusters $\{\mathcal{C}_j^t\}_{j\in[k]}$ are computed by approximating the distances $\|v_i - c_j^t\|$ using an $\ell_2$-sampling procedure explained in \Cref{lem:classical_inner_product}. More precisely, for any $i\in[n]$ we can compute 
    \begin{align*}
        \ell_i^t\in \bigg\{j\in[k]: \|v_i - c_j^t\|^2 \leq \min_{j'\in[k]}\|v_i - c_{j'}^t\|^2 + \tau \bigg\}
    \end{align*}
    with probability $1-\frac{\delta}{4(p+q)}$ in $\widetilde{O}\big(\frac{\|V\|_F^2}{n}\frac{k\|v_i\|^2}{\tau^2}\log{n} \big)$ time and using $\widetilde{O}\big(\frac{\|V\|_F^2}{n}\frac{k\|v_i\|^2}{\tau^2} \big)$ queries, where $\widetilde{O}(\cdot)$ hides $\poly\log$ factors in $k$, $d$, $\frac{1}{\delta}$, $\frac{1}{\tau}$, $\frac{\|V\|_F}{\sqrt{n}}$. The classical query complexity in computing $\{\ell_i^t\}_{i\in P\cup\overline{Q}}$ is thus $\widetilde{O}\big(\frac{\|V\|_F^2}{n}\frac{k}{\tau^2}\sum_{i\in P\cup \overline{Q}}\|v_i\|^2 \big) = \widetilde{O}\big((p+q)\frac{\|V\|_F^2\|V\|_{2,\infty}^2}{n}\frac{k}{\tau^2}\big)$, while the time complexity has an extra $O(\log{n})$ factor.\footnote{It is possible to do slightly better than $\sum_{i\in P\cup \overline{Q}}\! \|v_i\|^2 \!\leq\! (p+q)\|V\|_{2,\infty}^2$ via concentration bounds. Let the random variable ${\operatorname{Pr}}\big[Y(i) \!=\! \frac{\|v_i\|^2}{\|V\|_{2,\infty}^2}\big] \!=\! \frac{1}{n}$ with mean $\mathbb{E}[Y] = \frac{\|V\|_F^2}{n\|V\|^2_{2,\infty}}$. By a Chernoff's bound, ${\operatorname{Pr}}\big[\!\sum_{i\in P} \!\|v_i\|^2 \!\geq\! 2p \frac{\|V\|_F^2}{n}\big] \!\leq\! {\exp}\big({-}\frac{p}{3}\frac{\|V\|_F^2}{n\|V\|_{2,\infty}^2} \big) \!\leq\! \frac{\delta}{2}$ if $p\!\geq\! 3\frac{n\|V\|_{2,\infty}^2}{\|V\|_F^2}\ln\!\frac{2}{\delta}$. On the other hand, let the random variable ${\operatorname{Pr}}\big[Y(i) \!=\! \frac{\|v_i\|^2}{\|V\|^2_{2,\infty}}\big] \!=\! \frac{\|v_i\|}{\|V\|_{2,1}}$ with mean $\mathbb{E}[Y] \!=\! \frac{1}{\|V\|_{2,1}}\sum_{i\in[n]}\!\frac{\|v_i\|^3}{\|V\|_{2,\infty}^2} \!\leq\! \frac{\|V\|_F^2}{\|V\|_{2,1}\|V\|_{2,\infty}}$. By a Chernoff's bound, ${\operatorname{Pr}}\big[\!\sum_{i\in \overline{Q}} \!\|v_i\|^2 \!\geq\! 2q \frac{\|V\|_F^2\|V\|_{2,\infty}}{\|V\|_{2,1}} \big] \!\leq\! {\exp}\big({-}\frac{q}{3}\frac{\|V\|_F^2}{\|V\|_{2,1}\|V\|_{2,\infty}} \big) \!\leq\! \frac{\delta}{2}$ if $q \!\geq\! 3\frac{\|V\|_{2,1}\|V\|_{2,\infty}}{\|V\|_F^2}\ln\!\frac{2}{\delta}$. Thus $\sum_{i\in P\cup \overline{Q}} \!\|v_i\|^2 \leq 2\big(p + q\frac{n\|V\|_{2,\infty}}{\|V\|_{2,1}} \big)\frac{\|V\|_F^2}{n}$ with probability $1-\delta$ for large enough $p$ and $q$.}
    %
    % To bound $\sum_{i\in P}\|v_i\|^2$, consider the random variable
    % %
    % \begin{align*}
    %     \forall i\in[n]: \operatorname{Pr}\left[Y(i) = \frac{\|v_i\|^2}{\|V\|_{2,\infty}^2}\right] = \frac{1}{n}, \qquad \mathbb{E}[Y] = \frac{\|V\|_F^2}{n\|V\|^2_{2,\infty}}.
    % \end{align*}
    % %
    % By Chernoff's bound (\Cref{fact:chernoff}),
    % %
    % \begin{align*}
    %     \operatorname{Pr}\left[\sum_{i\in P} \|v_i\|^2 \geq 2p \frac{\|V\|_F^2}{n}\right] = \operatorname{Pr}\left[\sum_{i\in P} \frac{\|v_i\|^2}{\|V\|_{2,\infty}^2} \geq 2p \frac{\|V\|_F^2}{n\|V\|_{2,\infty}^2}\right] \leq \exp\left(-\frac{p}{3}\frac{\|V\|_F^2}{n\|V\|_{2,\infty}^2} \right) \leq \frac{\delta}{4}.
    % \end{align*}
    % %
    % On the other hand, regarding $\sum_{i\in Q}\|v_i\|^2$, consider the random variable
    % %
    % \begin{align*}
    %     \forall i\in[n] : \operatorname{Pr}\left[Y(i) = \frac{\|v_i\|^2}{\|V\|^2_{2,\infty}}\right] = \frac{\|v_i\|}{\|V\|_{2,1}}, \quad \mathbb{E}[Y] = \frac{1}{\|V\|_{2,1}}\sum_{i\in[n]}\frac{\|v_i\|^3}{\|V\|_{2,\infty}^2} \leq \frac{\|V\|_F^2}{\|V\|_{2,1}\|V\|_{2,\infty}}.
    % \end{align*}
    % %
    % By Chernoff's bound (\Cref{fact:chernoff}),
    % %
    % \begin{align*}
    %     \operatorname{Pr}\left[\sum_{i\in Q} \|v_i\|^2 \geq 2q \frac{\|V\|_F^2\|V\|_{2,\infty}}{\|V\|_{2,1}} \right] \leq \exp\left(-\frac{q}{3}\frac{\|V\|_F^2}{\|V\|_{2,1}\|V\|_{2,\infty}} \right) \leq \frac{\delta}{4}.
    % \end{align*}
    % %
    % Hence, $\sum_{i\in P\cup Q}\|v_i\|^2 \leq 2\big(p\frac{\|V\|_F^2}{n} + q\frac{\|V\|_F^2\|V\|_{2,\infty}}{\|V\|_{2,1}} \big)$ with probability $1-\frac{\delta}{2}$.
    %
    In summary, the time and query complexities are:
    \begin{itemize}
        \item Sampling $P\subseteq[n],Q\subseteq[n]\times[d]$ takes $O(p+q)$ queries and $O((p+q)\log(nd))$ time;
        \item Computing the labels $\{\ell_i^t\}_{i\in P\cup \overline{Q}}$ takes $\widetilde{O}\big((p + q)\frac{\|V\|_F^2\|V\|_{2,\infty}^2}{n}\frac{k}{\tau^2}\big)$ classical queries and $\widetilde{O}\big((p + q )\frac{\|V\|_F^2\|V\|_{2,\infty}^2}{n}\frac{k}{\tau^2}\log{n}\big)$ time; 
        \item Querying the entries $\{V_{li}\}_{(i,l)\in Q}$ and computing all new centroids $\{{c}_j^{t+1}\}_{j\in[k]}$ takes $O(q)$ queries and $O(q\log(nd))$ time.
%        \item Registering and reading all $k$ new centroids takes $O(kd)$ time.
    \end{itemize}
    The total query complexity is thus $\widetilde{O}\big((p + q)\frac{\|V\|_F^2\|V\|_{2,\infty}^2}{n}\frac{k}{\tau^2}\big)$ and the time complexity is
    \[
        \widetilde{O}\bigg((p + q)\frac{\|V\|_F^2\|V\|_{2,\infty}^2}{n}\frac{k}{\tau^2}\log{n}\bigg) = \widetilde{O}\bigg(\bigg(\frac{\|V\|^2}{n} + \frac{\|V\|_{1,1}^2}{n^2} \bigg)\frac{\|V\|_F^2\|V\|_{2,\infty}^2}{n}\frac{k^3}{\varepsilon^2\tau^2}\log{n} \bigg).  \qedhere
    \]
\end{proof}

\begin{lemma}[Approximate classical cluster assignment]\label{lem:classical_inner_product}
    Assume classical sampling access to matrix $V = [v_1,\dots,v_n]\in\mathbb{R}^{d\times n}$. Let $\delta\in(0,1)$, $\tau>0$, and $0 < \varepsilon \leq \frac{\|V\|_F}{\sqrt{n}}$. Consider the centroid matrix $C^t = [c_1^t,\dots,c_k^t] \in\mathbb{R}^{d\times k}$ such that $\big\|c_j^{t} - |\mathcal{C}_j^{t-1}|^{-1}\sum_{i\in\mathcal{C}_j^{t-1}}v_i\big\| \leq \varepsilon$ with $|\mathcal{C}_j^{t-1}| = \Omega(\frac{n}{k})$ for all $j\in[k]$. For any $i\in[n]$, there is a classical algorithm that outputs $\ell_i^t\in\{j\in[k]: \|v_i - c_j^t\|^2 \leq \min_{j'\in[k]}\|v_i - c_{j'}^t\|^2 + \tau\}$ with probability $1-\delta$ in $O\big(\frac{\|V\|_F^2}{n}\frac{k\|v_i\|^2}{\tau^2}\log\frac{k}{\delta}\log(nd) \big)$ time and using $O\big(\frac{\|V\|_F^2}{n}\frac{k\|v_i\|^2}{\tau^2}\log\frac{k}{\delta} \big)$ classical queries.
\end{lemma}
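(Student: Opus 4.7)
The plan is to reduce the approximate cluster assignment to estimating the $k$ inner products $\langle v_i, c_j^t\rangle$ to additive error $\tau/4$. Indeed, $\|v_i - c_j^t\|^2 = \|v_i\|^2 + \|c_j^t\|^2 - 2\langle v_i, c_j^t\rangle$, and $\|v_i\|^2$ is common to all $j$, so the relative ordering of the distances only depends on the quantities $g_j := \|c_j^t\|^2 - 2\langle v_i, c_j^t\rangle$. The norms $\|c_j^t\|^2$ depend only on $C^t$ and can be precomputed once outside the per-$i$ routine; within the routine they are treated as known. If for each $j$ we compute $\widetilde{g}_j$ with $|\widetilde{g}_j - g_j| \leq \tau/2$, then outputting $\ell_i^t := \arg\min_{j\in[k]}\widetilde{g}_j$ gives a valid $\tau$-approximate label, since for any $j'\in[k]$, $\|v_i - c_{\ell_i^t}^t\|^2 \leq \widetilde{g}_{\ell_i^t} + \|v_i\|^2 + \tau/2 \leq \widetilde{g}_{j'} + \|v_i\|^2 + \tau/2 \leq \|v_i - c_{j'}^t\|^2 + \tau$.

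For each $j\in[k]$, the inner product $\langle v_i, c_j^t\rangle$ is estimated by $\ell_2$-sampling from $v_i$: draw $l\sim\mathcal{D}^{(2)}_{v_i}$ and form the one-sample estimator $X_{ij} := \|v_i\|^2 (c_j^t)_l / (v_i)_l$, whose mean is $\langle v_i, c_j^t\rangle$ and whose variance is bounded by $\mathbb{E}[X_{ij}^2] = \|v_i\|^2 \|c_j^t\|^2$. Applying \Cref{fact:median-of-means} (median-of-means) with $N = \Theta(\|v_i\|^2\|c_j^t\|^2/\tau^2)$ samples per batch and $K = \Theta(\log(k/\delta))$ batches yields an estimate of $\langle v_i, c_j^t\rangle$ with additive error $\tau/4$ and failure probability at most $\delta/k$. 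A union bound over the $k$ inner products gives overall failure probability $\delta$.

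The main obstacle is bounding $\sum_{j\in[k]}\|c_j^t\|^2$ so that the total sample count matches the stated complexity. By the triangle inequality and the hypothesis $\|c_j^t - \bar{v}_j\|\leq\varepsilon$ where $\bar{v}_j := |\mathcal{C}_j^{t-1}|^{-1}\sum_{i\in\mathcal{C}_j^{t-1}} v_i$, we have $\|c_j^t\|^2 \leq 2\varepsilon^2 + 2\|\bar{v}_j\|^2$. Jensen's inequality gives $\|\bar{v}_j\|^2 \leq |\mathcal{C}_j^{t-1}|^{-1}\sum_{i\in\mathcal{C}_j^{t-1}}\|v_i\|^2$, and summing over $j$ while using $|\mathcal{C}_j^{t-1}| = \Omega(n/k)$ yields $\sum_{j\in[k]}\|c_j^t\|^2 \leq 2k\varepsilon^2 + O(k/n)\|V\|_F^2 = O(k\|V\|_F^2/n)$, where the last step uses the hypothesis $\varepsilon \leq \|V\|_F/\sqrt{n}$.

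Plugging this into the per-$j$ sample count, the total number of samples across all $k$ inner products is $O\big(\frac{\|v_i\|^2}{\tau^2}\log(k/\delta)\sum_j\|c_j^t\|^2\big) = O\big(\frac{\|V\|_F^2}{n}\frac{k\|v_i\|^2}{\tau^2}\log(k/\delta)\big)$. Each sample requires one call to $\mathcal{D}^{(2)}_{v_i}$ together with reads of the corresponding entries of $v_i$ and $c_j^t$, costing one classical query in $O(\log(nd))$ time. The final aggregation over $k$ estimates is a lower-order term, yielding the claimed query and time bounds.
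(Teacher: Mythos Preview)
The proposal is correct and follows essentially the same approach as the paper: estimate each $\langle v_i, c_j^t\rangle$ to additive error $\tau/4$ via the $\ell_2$-sampling estimator $\|v_i\|^2(c_j^t)_l/(v_i)_l$ with median-of-means, take the $\arg\min$, and bound $\sum_{j}\|c_j^t\|^2 = O(k\|V\|_F^2/n)$ using the hypotheses on $\varepsilon$ and the cluster sizes. The only cosmetic differences are that you separate out the $\|c_j^t\|^2$ term explicitly and use Jensen's inequality on $\|\bar v_j\|^2$, whereas the paper bounds $\|c_j^t\|$ first by $\varepsilon + |\mathcal{C}_j^{t-1}|^{-1}\sum_{i}\|v_i\|$ and then applies Cauchy--Schwarz; the resulting bounds and complexities are identical.
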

\begin{proof}
    Fix $(i,j)\in[n]\times[k]$ and consider the random variable $X^{(ij)}$ such that
    \begin{align*}
        \operatorname{Pr}\left[X^{(ij)} = \|v_i\|^2\frac{(c_j^t)_l}{(v_i)_l}\right] = \frac{(v_i)_l^2}{\|v_i\|^2} \qquad \forall l\in[d].
    \end{align*}
    We can straightforwardly calculate
    \begin{align*}
        \mathbb{E}[X^{(ij)}] &= \sum_{l\in[d]}\|v_i\|^2\frac{(c_j^t)_l}{(v_i)_l} \frac{(v_i)_l^2}{\|v_i\|^2} = \sum_{l\in[d]} (c_j^t)_l (v_i)_l =\langle c_j^t, v_i\rangle, \\
        \operatorname{Var}[X^{(ij)}] &\leq \sum_{l\in[d]}\|v_i\|^4\frac{(c_j^t)_l^2}{(v_i)_l^2} \frac{(v_i)_l^2}{\|v_i\|^2} = \|v_i\|^2\sum_{l\in[d]} (c_j^t)^2_l =  \|v_i\|^2\|c_j^t\|^2.
    \end{align*}
    By taking the median of $K = 8\ln\frac{k}{\delta}$ copies of the mean of $\frac{64}{\tau^2}\|v_i\|^2\|c_j^t\|^2$ copies of $X^{(ij)}$, we obtain an estimate of $\langle c_j^t, v_i\rangle$ within additive error $\frac{\tau}{4}$ with probability $1-\frac{\delta}{k}$ (\Cref{fact:median-of-means}). From this, we can output $w_{ij}\in\mathbb{R}$ such that $|w_{ij} - \|v_i - c_j^t\|^2| \leq \frac{\tau}{2}$ with probability $1 - \frac{\delta}{k}$. Let $\ell_i^t = \arg\min_{j\in[k]}w_{ij}$. Then $\ell_i^t\in\{j\in[k] \!:\! \|v_i - c_j^t\|^2 \leq \min_{j'\in[k]}\!\|v_i - c_{j'}^t\|^2 + \tau\}$ with probability $1-\delta$ by a union bound.

    Regarding the sample and time complexities, the total amount of samples is
    \begin{align*}
        O\Bigg(\frac{\|v_i\|^2}{\tau^2}\log\frac{k}{\delta}\sum_{j\in[k]}\|c_j^t\|^2\Bigg) = O\Bigg(\frac{\|v_i\|^2\|C^t\|^2_F}{\tau^2}\log\frac{k}{\delta} \Bigg) = O\Bigg(\frac{\|v_i\|^2}{\tau^2}\frac{k\|V\|_F^2}{n}\log\frac{k}{\delta} \Bigg),
    \end{align*}
    where we used that $\|c_j^t\| \leq \varepsilon + |\mathcal{C}_j^{t-1}|^{-1}\sum_{i\in\mathcal{C}_j^{t-1}} \|v_i\|$ implies
    \begin{align*}
        \|C^t\|_F^2 \leq 2k\varepsilon^2 + \!\sum_{j\in[k]}\frac{2}{|\mathcal{C}_j^{t-1}|^2} \Bigg(\sum_{i\in\mathcal{C}_j^{t-1}} \!\!\|v_i\| \Bigg)^2 \!\!\leq 2k\varepsilon^2 + \!\sum_{j\in[k]}\frac{2}{|\mathcal{C}_j^{t-1}|} \sum_{i\in\mathcal{C}_j^{t-1}} \!\!\|v_i\|^2 = O\left(\frac{k\|V\|_F^2}{n} \right),
    \end{align*}
    using that $|\mathcal{C}_j^{t-1}| = \Omega(\frac{n}{k})$ for all $j\in[k]$ and $\varepsilon \leq \frac{\|V\|_F}{\sqrt{n}}$. The total time complexity is simply the sample complexity times $O(\log(nd))$.
\end{proof}

\section{Quantum algorithms}

We now describe our quantum $(\varepsilon,\tau)$-$k$-means algorithm. Similarly to our classical algorithm from the previous section, we approximate the quantities $\sum_{i\in\mathcal{C}_j^t} v_i$ and $|\mathcal{C}_j^t|$ for all $j\in[k]$ separately. This time, however, we employ quantum query access from \Cref{lem:kp_tree} to build quantum unitaries which are fed into the multivariate quantum Monte Carlo subroutine from Cornelissen, Hamoudi, and Jerbi~\cite{cornelissen2022near}. As an intermediary step, the quantities $\ell_i^t = \arg\min_{j\in[k]}\|v_i - c_j^t\|$ are computed in superposition as part of these unitaries (\Cref{lem:exact_quantum_inner_product}).

\begin{theorem}[Quantum $(\varepsilon,0)$-$q$-means algorithm]\label{thr:quantum_q_means_1}
    Let $\varepsilon>0$, $\delta\in(0,1)$, and assume quantum query access to $V=[v_1,\dots,v_n]\in\mathbb{R}^{d\times n}$. If all clusters satisfy $|\mathcal{C}_j^t| = \Omega(\frac{n}{k})$, then {\rm \Cref{alg:q-means}} outputs centroids consistent with the $(\varepsilon,\tau=0)$-$k$-means with probability $1-\delta$. The complexities per iteration of {\rm \Cref{alg:q-means}} are (up to $\polylog$ factors in $k$, $d$, $\frac{1}{\delta}$, $\frac{1}{\varepsilon}$, $\frac{\|V\|_F}{\sqrt{n}}$)
    \begin{align*}
        \textbf{Quantum queries:}&~\widetilde{O}\bigg(\bigg(\sqrt{k}\frac{\|V\|}{\sqrt{n}} + \sqrt{d}\frac{\|V\|_{2,1}}{n}\bigg)\frac{k^{\frac{3}{2}}d}{\varepsilon}\bigg),\\ % = \widetilde{O}\bigg(\frac{\|V\|_F}{\sqrt{n}}\frac{k^{\frac{3}{2}}d}{\varepsilon}(\sqrt{k} + \log{n})(\sqrt{k} + \sqrt{d})\bigg),
        \textbf{Time:}&~\widetilde{O}\bigg(\bigg(\sqrt{k}\frac{\|V\|}{\sqrt{n}} + \sqrt{d}\frac{\|V\|_{2,1}}{n}\bigg)\frac{k^{\frac{3}{2}}d}{\varepsilon}(\sqrt{k} + \log{n})\bigg).
    \end{align*}
\end{theorem}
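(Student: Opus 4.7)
The strategy mirrors the classical algorithm: estimate $m_j := |\mathcal{C}_j^t|$ and $s_j := \sum_{i\in\mathcal{C}_j^t}v_i$ separately for each $j\in[k]$, then return $\hat{c}_j = \hat{s}_j/\hat{m}_j$. By \Cref{lem:error_propagation} and the triangle inequality, $\|\hat{c}_j - c_j\|_2 \leq \|\hat{s}_j - s_j\|_2/\hat{m}_j + 2|m_j - \hat{m}_j|\|s_j\|/m_j^2$. Together with $m_j = \Omega(n/k)$ and the bound $\|s_j\| = \|V\chi_j^t\|_2 \leq \|V\|\sqrt{m_j} \leq \|V\|\sqrt{n}$, it suffices to guarantee $\|\hat{s}_j - s_j\|_2 = O(n\varepsilon/k)$ and $|\hat{m}_j - m_j|/n = O(\sqrt{n}\varepsilon/(k^2\|V\|))$, each with failure probability $O(\delta/k)$. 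Both quantities will be produced by the multivariate quantum Monte Carlo subroutine (\Cref{fact:multivariate_Monte_Carlo}). Both MC invocations rely on the unitary $|i\rangle|\bar{0}\rangle \mapsto |i\rangle|\ell_i^t\rangle$ with $\ell_i^t := \arg\min_{j\in[k]}\|v_i - c_j^t\|$; following \Cref{lem:exact_quantum_inner_product}, I would use item~7 of quantum query access to load $v_i$ in $O(d)$ queries, classically compute all $k$ distances, and extract the $\arg\min$ by quantum min-finding (\Cref{fact:minimum_finding}), giving $O(d)$ queries and $O(d(\sqrt{k} + \log n))$ time per invocation.

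For the cluster sizes, a single MC call on the uniform state $\frac{1}{\sqrt{n}}\sum_i|i,\ell_i^t\rangle$ with the random variable $Y(i) = e_{\ell_i^t} \in \mathbb{R}^k$ suffices: since $\|Y\|=1$ deterministically, one may take $L_2 = 1$, and the $\ell_\infty$ error on $(m_j/n)_{j\in[k]}$ is $O(\log(k/\delta)/m_{\rm size})$. Setting $m_{\rm size} = \widetilde{O}(k^2\|V\|/(\sqrt{n}\varepsilon))$ meets the target for all $j$ at per-sample cost $O(d)$ queries.

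For the cluster sums, I would run $k$ separate MC calls, one per cluster. First prepare $\sum_i\sqrt{\|v_i\|/\|V\|_{2,1}}|i,\ell_i^t\rangle$ (item~3 plus the label unitary), then amplify the $|\cdot,j\rangle$ subspace with $O(\sqrt{\|V\|_{2,1}/S_j})$ iterations of amplitude amplification, seeded by a constant-factor estimate of $S_j/\|V\|_{2,1}$ from a warm-up MC call (where $S_j := \sum_{i\in\mathcal{C}_j^t}\|v_i\|$), producing the conditional state $\frac{1}{\sqrt{S_j}}\sum_{i\in\mathcal{C}_j^t}\sqrt{\|v_i\|}|i\rangle$. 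Feed it to MC with the random variable $X(i) = S_j v_i/\|v_i\| \in \mathbb{R}^d$, whose mean is $s_j$ and whose norm is $S_j$; after rescaling to $L_2=1$, choosing $m_{{\rm sum},j} = \widetilde{O}(S_j k\sqrt{d}/(n\varepsilon))$ yields the $\ell_\infty$ accuracy required for $\|\hat{s}_j - s_j\|_2 \leq O(n\varepsilon/k)$ via the $\sqrt{d}$-factor $\ell_\infty\!\to\!\ell_2$ conversion. Each amplified state preparation costs $O(d\sqrt{\|V\|_{2,1}/S_j})$ queries.

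The total query cost is $m_{\rm size}\cdot d + \sum_{j} m_{{\rm sum},j}\cdot d\sqrt{\|V\|_{2,1}/S_j} = \widetilde{O}(k^2\|V\|d/(\sqrt{n}\varepsilon)) + \widetilde{O}(kd^{3/2}/(n\varepsilon)\cdot\sum_j\sqrt{S_j\|V\|_{2,1}})$. Applying Cauchy--Schwarz to $\sum_j S_j = \|V\|_{2,1}$ gives $\sum_j\sqrt{S_j} \leq \sqrt{k\|V\|_{2,1}}$, producing the second term $\widetilde{O}(k^{3/2}d^{3/2}\|V\|_{2,1}/(n\varepsilon))$; the time bound follows by charging the per-query overhead $O(\sqrt{k}+\log n)$ for label computation. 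The main obstacle is calibrating the amplification depth without knowing $S_j$ exactly, which the warm-up resolves at lower order. A secondary care is propagating coordinate-wise $\ell_\infty$ MC errors through the ratio $\hat{s}_j/\hat{m}_j$ into a per-centroid $\ell_2$ guarantee, which is handled by the decomposition above and a union bound over the $k$ centroids.
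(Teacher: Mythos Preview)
Your high-level split into size and sum estimation via multivariate Monte Carlo is exactly what the paper does, and your size estimation is essentially identical to the paper's (up to the slightly loose bound $\|s_j\|\le\|V\|\sqrt{n}$ where the paper uses $\|V\|\sqrt{|\mathcal{C}_j^t|}$, costing you a spurious $\sqrt{k}$ in $m_{\rm size}$). The real divergence is in how you handle the sums. The paper does \emph{not} run $k$ separate Monte Carlo calls with amplitude amplification. Instead it makes a \emph{single} call to \Cref{fact:multivariate_Monte_Carlo} with the $kd$-dimensional random variable $X_V(i,\ell_i^t)=(0,\dots,0,v_i/\|v_i\|,0,\dots,0)\in(\mathbb{R}^d)^k$ under the distribution $\mathbb{P}_V(i)=\|v_i\|/\|V\|_{2,1}$; since $\|X_V\|=1$ deterministically, one takes $L_2=1$ and gets all $k$ block-means $\sum_{i\in\mathcal{C}_j^t}v_i/\|V\|_{2,1}$ in one shot with $q=\widetilde{O}(\|V\|_{2,1}k\sqrt{d}/(n\varepsilon))$ oracle calls, each calling the label unitary once. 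No amplification, no warm-up, no per-cluster conditioning.

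Your route is not merely more elaborate; as written it loses a factor $\sqrt{k}$. The amplified state preparation for cluster $j$ nests the label unitary inside $\Theta(\sqrt{\|V\|_{2,1}/S_j})$ rounds of amplitude amplification, so the total number of label-unitary invocations for the sums is $\sum_j m_{{\rm sum},j}\sqrt{\|V\|_{2,1}/S_j}=\widetilde{O}(k^{3/2}\sqrt{d}\,\|V\|_{2,1}/(n\varepsilon))$, which is $\sqrt{k}$ more than the paper's $q$. You hide this by charging only $O(d)$ queries per label call, but that accounting is inconsistent: your description ``load $v_i$, classically compute all $k$ distances, then quantum min-find'' mixes two incompatible implementations. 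In the paper's model the label map of \Cref{lem:exact_quantum_inner_product} costs $O(\sqrt{k}d)$ queries and $O(\sqrt{k}d\log n)$ time (min-finding with QRAM access to the centroids), or alternatively $O(d)$ queries but $O(kd)$ time if the centroids are hardcoded (the Remark after \Cref{lem:exact_quantum_inner_product}). With either consistent choice, your amplitude-amplification scheme picks up the extra $\sqrt{k}$, in queries in the first case and in time in the second; you cannot simultaneously get $O(d)$ queries and $O(d(\sqrt{k}+\log n))$ time. A secondary gap is that a constant-factor estimate of $S_j$ does not let standard amplitude amplification produce the conditional state exactly, so the unitary you feed to Monte Carlo is only approximate; this is fixable (e.g.\ fixed-point amplification) but you do not address it.
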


\begin{algorithm}[t]
    \caption{Quantum $(\varepsilon,\tau=0)$-$k$-means algorithm}
    \begin{algorithmic}[1]
    \Require Quantum query access to data matrix $V=[v_1,\dots,v_n]\in\mathbb{R}^{d\times n}$, parameters $\delta,\varepsilon$.
%    \Ensure Vectors $c_1,\dots,c_k\in\mathbb{R}^d$ corresponding to centroids.
    \State Select $k$ initial centroids ${c}_1^0,\dots,{c}_k^0$ 
    \For{$t=0$ until convergence}
        \State Build quantum query access to $[{c}_1^t,\dots,{c}_k^t]\in\mathbb{R}^{d\times k}$
        \State \parbox[t]{\dimexpr\linewidth-\algorithmicindent}{Using \Cref{lem:exact_quantum_inner_product} to obtain $|i\rangle|\bar{0}\rangle \mapsto |i\rangle|\ell_i^t\rangle$ where $\ell_i^t = \arg\min_{j\in[k]}\|v_i - c_j^t\|$, construct the unitaries
        \begin{align*}
            U_{I}&:|\bar{0}\rangle \mapsto \sum_{i\in[n]}\frac{1}{\sqrt{n}}|i,\ell_i^t\rangle, \qquad\qquad 
            U_{V}:|\bar{0}\rangle \mapsto \sum_{i\in[n]} \sqrt{\frac{\|v_i\|}{\|V\|_{2,1}}} |i,\ell_i^t\rangle, \\
            \mathcal{B}_I &: |i,j\rangle|\bar{0}\rangle^{\otimes k} \mapsto |i,j\rangle|\bar{0}\rangle^{\otimes (j-1)}|1\rangle|\bar{0}\rangle^{\otimes (k-j-1)},\\
            \mathcal{B}_V &: |i,j\rangle|\bar{0}\rangle^{\otimes kd} \mapsto |i,j\rangle|\bar{0}\rangle^{\otimes (j-1)d}|v_i/\|v_i\|\rangle|\bar{0}\rangle^{\otimes (k-j-1)d}
        \end{align*}}

        \State  \parbox[t]{\dimexpr\linewidth-\algorithmicindent}{Apply the multivariate quantum Monte Carlo routine (\Cref{fact:multivariate_Monte_Carlo}) with $p = \widetilde{O}\big(\frac{\|V\|}{\sqrt{n}}\frac{k^{3/2}}{\varepsilon}\big)$ queries to the unitaries $U_{I}$ and $\mathcal{B}_I$ to obtain $P\in\mathbb{R}^{k}$}
        \State  \parbox[t]{\dimexpr\linewidth-\algorithmicindent}{Apply the multivariate quantum Monte Carlo routine (\Cref{fact:multivariate_Monte_Carlo}) with $q = \widetilde{O}\big(\frac{\|V\|_{2,1}}{\sqrt{n}}\frac{k\sqrt{d}}{\varepsilon}\big)$ queries to the unitaries $U_{V}$ and $\mathcal{B}_V$ to obtain $Q\in(\mathbb{R}^{d})^k$}
        \State For $j\in[k]$, record the new centroids ${c}_j^{t+1} = \frac{\|V\|_{2,1}}{n}\frac{Q_j}{P_j}$
    \EndFor
\end{algorithmic}\label{alg:q-means}
\end{algorithm}
\begin{proof}
    We start with the error analysis. Consider the unitaries
    \begin{align*}
        &U_{V}:|\bar{0}\rangle \mapsto \sum_{i\in[n]}\sqrt{\frac{\|v_i\|}{\|V\|_{2,1}}} |i,\ell_i^t\rangle, \quad &&\mathcal{B}_V : |i,j\rangle|\bar{0}\rangle^{\otimes kd} \mapsto |i,j\rangle|\bar{0}\rangle^{\otimes (j-1)d}|v_i/\|v_i\|\rangle|\bar{0}\rangle^{\otimes (k-j-1)d},\\
        &U_{I}:|\bar{0}\rangle \mapsto \sum_{i\in[n]}\frac{1}{\sqrt{n}}|i,\ell_i^t\rangle, \quad &&\mathcal{B}_I : |i,j\rangle|\bar{0}\rangle^{\otimes k} \mapsto |i,j\rangle|\bar{0}\rangle^{\otimes (j-1)}|1\rangle|\bar{0}\rangle^{\otimes (k-j-1)}.
    \end{align*}
    The unitaries $U_{V}$ and $U_{I}$ can be thought of as preparing a superposition over probability spaces with distributions $\mathbb{P}_V$ and $\mathbb{P}_I$, respectively, given by
    \begin{align*}
        \mathbb{P}_V(i,j) = \begin{cases}
            \frac{\|v_i\|}{\|V\|_{2,1}} &\text{if}~j = \ell_i^t,\\
            0 &\text{if}~j \neq \ell_i^t,
        \end{cases} \qquad\text{and}\qquad 
        \mathbb{P}_I(i,j) = \begin{cases}
            \frac{1}{n} &\text{if}~j = \ell_i^t,\\
            0 &\text{if}~j \neq \ell_i^t,
        \end{cases}
    \end{align*}
    while the unitaries $\mathcal{B}_V$ and $\mathcal{B}_I$ can be thought of as binary encoding the random variables $X_V:[n]\times[k]\to(\mathbb{R}^d)^k$ and $X_I:[n]\times[k]\to\mathbb{R}^k$, respectively, given by $X_V(i,j) = (0,\dots,0,\frac{v_i}{\|v_i\|},0,\dots,0)$ and $X_I(i,j) = (0,\dots,0,1,0,\dots,0)$, where the non-zero entry is the $j$-th entry. Note that 
    \begin{align*}
        \sum_{(i,j)\in[n]\times[k]} \mathbb{P}_V(i,j) X_V(i,j) &= \Bigg(\sum_{i\in \mathcal{C}_1^t} \frac{v_i}{\|V\|_{2,1}}, \dots, \sum_{i\in \mathcal{C}_k^t} \frac{v_i}{\|V\|_{2,1}}\Bigg),\\
        \sum_{(i,j)\in[n]\times[k]} \mathbb{P}_I(i,j) X_I(i,j) &= \left(\frac{|\mathcal{C}_1^t|}{n}, \dots, \frac{|\mathcal{C}_k^t|}{n}\right).
    \end{align*}
    Therefore, the multivariate quantum Monte Carlo subroutine (\Cref{fact:multivariate_Monte_Carlo}) returns $P\in\mathbb{R}^k$ and $Q\in(\mathbb{R}^d)^k$ such that, with probability at least $1-\delta$ and for some $\varepsilon_1,\varepsilon_2 >0$ to be determined,
    \begin{align*}
        \left|P_j - \frac{|\mathcal{C}_j^t|}{n} \right| \leq \varepsilon_1 \qquad\text{and}\qquad \Bigg\|Q_j - \sum_{i\in\mathcal{C}_j^t} \frac{v_i}{\|V\|_{2,1}} \Bigg\|_\infty \leq \varepsilon_2 \qquad\forall j\in[k].
    \end{align*}
    This means that, by a triangle inequality,
    \begin{align*}
        \| c_j^{\ast\:t+1} - c_j^{t+1} \| \leq \left|\frac{|\mathcal{C}_j^t|}{nP_j} - 1\right| \Bigg\|\frac{1}{|\mathcal{C}_j^t|}\sum_{i\in\mathcal{C}_j^t}v_i\Bigg\| + \frac{\|V\|_{2,1}}{nP_j}\Bigg\| Q_j - \sum_{i\in\mathcal{C}_j^t} \frac{v_i}{\|V\|_{2,1}} \Bigg\|.
    \end{align*}
    For $\varepsilon_1$ small enough such that $\varepsilon_1 \leq \min_{j\in[k]}\frac{|\mathcal{C}_j^t|}{2n}$, then $\big|P_j - \frac{|\mathcal{C}_j^t|}{n}\big| \leq \varepsilon_1 \implies \frac{1}{nP_j} \leq \frac{1}{|\mathcal{C}_j^t| - n\varepsilon_1} \leq \frac{2}{|\mathcal{C}_j^t|}$ and $\big|\frac{1}{P_j} - \frac{n}{|\mathcal{C}_j^t|}\big| \leq \frac{2n^2}{|\mathcal{C}_j^t|^2}\varepsilon_1$ according to \Cref{lem:error_propagation}. 
    Moreover, $\big\|\frac{1}{|\mathcal{C}_j^t|}\sum_{i\in\mathcal{C}_j^t}v_i\big\| = \|V \chi_j^t\| \leq \|V\|\|\chi_j^t\| = \|V\|/\sqrt{|\mathcal{C}_j^t|}$. Hence
    \begin{align*}
        \| c_j^{\ast\:t+1} - c_j^{t+1} \| \leq \frac{\|V\|}{\sqrt{|\mathcal{C}_j^t|}} \frac{2n}{|\mathcal{C}_j^t|}\varepsilon_1 + \frac{2\sqrt{d}\|V\|_{2,1}}{|\mathcal{C}_j^t|}\varepsilon_2.
    \end{align*}
    It suffices to take $\varepsilon_1 = O\big(\frac{\sqrt{n}}{\|V\|}\frac{\varepsilon}{k^{3/2}}\big)$ and $\varepsilon_2 = O\big(\frac{n}{\|V\|_{2,1}}\frac{\varepsilon}{k\sqrt{d}} \big)$ in order to obtain $\| c_j^{\ast\:t+1} - c_j^{t+1} \| \leq \varepsilon$, where we used that $|\mathcal{C}_j^t| = \Omega(\frac{n}{k})$. In order to obtain $\varepsilon_1 = O\big(\frac{\sqrt{n}}{\|V\|}\frac{\varepsilon}{k^{3/2}}\big)$, we must query the unitaries $U_I$ and $\mathcal{B}_I$ in the multivariate quantum Monte Carlo subroutine $p = \widetilde{O}\big(\frac{\|V\|}{\sqrt{n}}\frac{k^{3/2}}{\varepsilon}\big)$ times (since $\|X_I\| = 1$ and $\mathbb{E}[\|X_I\|] = 1$). On the other hand, in order to obtain $\varepsilon_2 = O\big(\frac{n}{\|V\|_{2,1}}\frac{\varepsilon}{k\sqrt{d}} \big)$, we must query the unitaries $U_V$ and $\mathcal{B}_V$ in the multivariate quantum Monte Carlo subroutine $q = \widetilde{O}\big(\frac{\|V\|_{2,1}}{n}\frac{k\sqrt{d}}{\varepsilon}\big)$ times (since $\|X_V\| = 1$ and $\mathbb{E}[\|X_V\|] = 1$).
    
    Finally, we must show how to perform the unitaries $U_V$, $U_I$, $\mathcal{B}_V$, $\mathcal{B}_I$. The binary-encoding unitary $\mathcal{B}_V$ is $d$ QRAM calls ($O(d\log{n})$ time), followed by a normalisation computation ($O(d)$ time), followed by $d$ controlled-SWAPs on $k$ qubits ($O(kd\log{k})$ time~\cite{simulating2015berry}), while $\mathcal{B}_I$ is simply $1$ controlled-SWAP on $k$ qubits. On the other hand, the probability-distribution-encoding unitaries $U_V$, $U_I$ can be performed via the initial state preparations $|\bar{0}\rangle \mapsto \sum_{i\in[n]} \sqrt{\frac{\|v_i\|}{\|V\|_{2,1}}} |i\rangle$ and $|\bar{0}\rangle \mapsto \frac{1}{\sqrt{n}}\sum_{i\in[n]}|i\rangle$, respectively, followed by the mapping $|i\rangle|\bar{0}\rangle \mapsto |i\rangle|\ell_i^t\rangle$. In \Cref{lem:exact_quantum_inner_product} we show how to implement the mapping $|i\rangle|\bar{0}\rangle \mapsto |i\rangle|\ell_i^t\rangle$ in $O(\sqrt{k}d\log\frac{1}{\delta}\log{n})$ time using $O(\sqrt{k}d\log\frac{1}{\delta})$ quantum queries. In summary,
    \begin{enumerate}
        \item $\mathcal{B}_V$ requires $O(d)$ quantum queries and $O(d\log{n} + kd\log{k})$ time;\footnote{If one has access to a quantum random access gate (QRAG)~\cite{ambainis2007quantum,Allcock2024constantdepth}, which is the unitary that performs $|i\rangle|b\rangle|x_1,\dots,x_N\rangle \mapsto |i\rangle|x_i\rangle|x_1,\dots,x_{i-1},b,x_{i+1},\dots,x_N\rangle$ in $O(\log{N})$ time for $i\in[N]$, then $\mathcal{B}_V$ requires only $O(d\log{n})$ time and the final runtime of \Cref{alg:q-means} becomes $\widetilde{O}\big(\big(\sqrt{k}\frac{\|V\|}{\sqrt{n}} + \sqrt{d}\frac{\|V\|_{2,1}}{n}\big)\frac{k^{3/2}d}{\varepsilon}\log{n}\big)$.\label{footnote3}}
        \item $U_V$ requires $O(\sqrt{k}d\log\frac{1}{\delta})$ quantum queries and $O(\sqrt{k}d\log\frac{1}{\delta}\log{n})$ time;
        \item $\mathcal{B}_I$ requires no quantum queries and $O(k\log{k})$ time;
        \item $U_I$ requires $O(\sqrt{k}d\log\frac{1}{\delta})$ quantum queries and $O(\sqrt{k}d\log\frac{1}{\delta}\log{n})$ time.
    \end{enumerate}
    Collecting all the terms, the total number of quantum queries is $\widetilde{O}((p+q)\sqrt{k}d)$, while the overall time complexity is
    \[
        \widetilde{O}\left((p+q)\sqrt{k}d(\sqrt{k} + \log{n})\right) = \widetilde{O}\bigg(\bigg(\sqrt{k}\frac{\|V\|}{\sqrt{n}} + \sqrt{d}\frac{\|V\|_{2,1}}{n}\bigg)\frac{k^{\frac{3}{2}}d}{\varepsilon}(\sqrt{k} + \log{n})\bigg).  \qedhere
    \]
\end{proof}

\begin{lemma}[Exact quantum cluster assignment]
    \label{lem:exact_quantum_inner_product}
    Let $\delta\in(0,1)$ and assume quantum query access to matrices $V=[v_1,\dots,v_n]\in\mathbb{R}^{d\times n}$ and $[c_1^t,\dots,c_k^t] \in \mathbb{R}^{d\times k}$.
    % be stored in the data structure from Lemma~{\rm \ref{lem:kp_tree}}. %Let be stored in a QRAM such that the unitary operation  $|j\rangle|0\rangle\mapsto|j\rangle|c_j\rangle$ can be performed in $O(\log^2(kd))$ time. 
    There is a quantum algorithm that performs the mapping $|i\rangle|\bar{0}\rangle \mapsto |i\rangle|L^t_i\rangle$ using $O(\sqrt{k}d\log\frac{1}{\delta})$ quantum queries and in $O(\sqrt{k}d\log\frac{1}{\delta}\log{n})$ time such that, upon measuring $|L^t_i\rangle$ on the computational basis, the outcome equals $\arg\min_{j\in[k]}\|v_i-c_j^t\|$ with probability at least $1-\delta$.
\end{lemma}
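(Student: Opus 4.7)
The plan is to reduce the problem to a single application of quantum minimum finding (\Cref{fact:minimum_finding}) over the index $j\in[k]$, using a distance oracle built from the QRAM accesses of \Cref{lem:kp_tree}. Concretely, for fixed $i\in[n]$ the label $\ell_i^t=\arg\min_{j\in[k]}\|v_i-c_j^t\|$ is the arg min of a length-$k$ vector whose entries we can evaluate quantumly, so \Cref{fact:minimum_finding} yields the desired state with $\widetilde{O}(\sqrt{k})$ oracle calls and can be executed coherently in the $|i\rangle$ register.

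First I would construct the unitary
\[
    W:|i\rangle|j\rangle|\bar 0\rangle \mapsto |i\rangle|j\rangle|\|v_i-c_j^t\|\rangle
\]
as follows. Using \Cref{item:item2} of \Cref{lem:kp_tree} applied to $V$, load $|i\rangle|\bar 0\rangle\mapsto|i\rangle|v_i\rangle$ in $d$ quantum queries and $O(d\log n)$ time; analogously, using quantum query access to the centroid matrix $C^t=[c_1^t,\dots,c_k^t]$, load $|j\rangle|\bar 0\rangle\mapsto|j\rangle|c_j^t\rangle$ in $d$ queries and $O(d\log n)$ time. Then compute the Euclidean distance $\|v_i-c_j^t\|$ into a fresh register via $O(d)$ arithmetic operations, and uncompute the two vector registers by reversing the loads. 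The whole routine $W$ therefore costs $O(d)$ quantum queries and $O(d\log n)$ time.

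Second, I would apply quantum minimum finding (\Cref{fact:minimum_finding}) with $N=k$, error parameter $\delta$, and oracle $W$ acting on the $|j\rangle$ register while keeping the $|i\rangle$ register as an untouched control. Since every operation of min-finding is unitary and $W$ is controlled on $|i\rangle$, the routine runs coherently in superposition over $i$ and produces a state $|i\rangle|L_i^t\rangle$ such that measuring $|L_i^t\rangle$ yields $\arg\min_{j\in[k]}\|v_i-c_j^t\|$ with probability at least $1-\delta$. Min-finding uses $O(\sqrt{k}\log\frac{1}{\delta})$ invocations of $W$, giving an overall cost of $O(\sqrt{k}\,d\log\frac{1}{\delta})$ quantum queries and $O(\sqrt{k}\,d\log\frac{1}{\delta}\log n)$ time, as required. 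The only mild subtlety is verifying that the coherent use of $W$ across different values of $i$ in superposition preserves the success guarantee of \Cref{fact:minimum_finding}; this follows because the min-finding circuit itself is independent of $i$ and only invokes the oracle in a black-box fashion, so its correctness on each branch $|i\rangle$ carries over by linearity.
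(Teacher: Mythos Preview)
Your proposal is correct and follows essentially the same approach as the paper: build the distance oracle $|i,j\rangle|\bar 0\rangle\mapsto|i,j\rangle|\|v_i-c_j^t\|\rangle$ by loading $v_i$ and $c_j^t$ via QRAM, computing the distance with an $O(d)$ arithmetic circuit, and uncomputing; then apply quantum minimum finding (\Cref{fact:minimum_finding}) over $j\in[k]$ controlled on $|i\rangle$. The cost accounting matches, and the paper likewise treats the coherence over $i$ as immediate.
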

\begin{proof}
    First we describe how to perform the mapping $|i,j\rangle|\bar{0}\rangle \mapsto |i,j\rangle|\|v_i - c_j^t\|\rangle$. Starting from $|i,j\rangle|\bar{0},\bar{0}\rangle|\bar{0}\rangle$, we query $2d$ times the QRAM oracles used in \Cref{lem:kp_tree} to map
    \begin{align*}
        |i,j\rangle|\bar{0},\bar{0}\rangle|\bar{0}\rangle \mapsto |i,j\rangle |v_i,c_j^t\rangle|\bar{0}\rangle.
    \end{align*}
    This operation is followed by computing the distance $\|v_i-c_j^t\|$ between the vectors $v_i$ and $c_j^t$ in $O(d)$ size and $O(\log d)$ depth by using a classical circuit, which leads to
    \begin{align*}
        |i,j\rangle |v_i,c_j^t\rangle|\bar{0}\rangle \mapsto |i,j\rangle |v_i,c_j^t\rangle|\|v_i-c_j^t\|\rangle.
    \end{align*}
    Uncomputing the first step leads to the desire state. Overall, the map $|i,j\rangle|\bar{0}\rangle \mapsto |i,j\rangle|\|v_i - c_j^t\|\rangle$ uses $O(d)$ queries to the matrices $V$ and $[c_1^t,\dots,c_k^t]$.
    
    Fix $i\in[n]$. The mapping $|j\rangle|\bar{0}\rangle \mapsto |j\rangle|\|v_i - c_j^t\|\rangle$ can be viewed as quantum access to the vector $(\|v_i - c_j^t\|)_{j\in[k]}$. Therefore, we can assign a cluster $\ell^t_i := \arg\min_{j\in[k]}\|v_i - c_j^t\|$ to the vector $v_i$ by using (controlled on $|i\rangle$) quantum minimum finding subroutine (\Cref{fact:minimum_finding}), which leads to the map $|i\rangle|\bar{0}\rangle \mapsto |i\rangle|L^t_i\rangle$, where upon measuring $|L^t_i\rangle$ on the computational basis, the outcome equals $\ell^t_i = \arg\min_{j\in[k]}\|v_i - c_j^t\|$ with probability at least $1-\delta$.
    The time cost of finding the minimum is $O(\sqrt{k}\log\frac{1}{\delta})$ queries to the unitary performing the mapping $|i,j\rangle|\bar{0}\rangle \mapsto |i,j\rangle|\|v_i-c_j^t\|\rangle$, to a total time complexity of $O(\sqrt{k}d\log\frac{1}{\delta}\log{n})$ and $O(\sqrt{k}d\log\frac{1}{\delta})$ quantum queries.
\end{proof}

\begin{remark}
    It is possible to avoid QRAM access to the centroids $[c_1^t,\dots,c_k^t]\in\mathbb{R}^{d\times k}$ by accessing them through the fixed registers $\bigotimes_{j\in[k]}|c_j^t\rangle$. This, however, hinders the use of quantum minimum finding. The index $\ell_i^t = \arg\min_{j\in[k]}\|v_i - c_j^t\|$ can be found, instead, through a classical circuit on the registers $\bigotimes_{j\in[k]}|\|v_i - c_j^t\|\rangle$, which modifies the time complexity of {\rm \Cref{lem:exact_quantum_inner_product}} to $O(d(k + \log{n}))$.
\end{remark}

Similarly to classical $(\varepsilon,\tau)$-$k$-means algorithm, it is possible to approximate the distances $\|v_i - c_j^t\|$ within quantum minimum finding using inherently quantum subroutines (\Cref{lem:quantum_approximate_inner_product}) instead of a classical arithmetic circuit as in \Cref{alg:q-means}. This allows us to replace the $O(d)$ time overhead with some norm of $V$. \Cref{alg:q-means2} describes how this can be performed and the next theorem analyses its query and time complexities.
\begin{algorithm}[t]
    \caption{Quantum $(\varepsilon,\tau)$-$k$-means algorithm}
    \begin{algorithmic}[1]
    \Require Quantum query access to data matrix $V=[v_1,\dots,v_n]\in\mathbb{R}^{d\times n}$, parameters $\delta,\varepsilon,\tau$.
%    \Ensure Vectors $c_1,\dots,c_k\in\mathbb{R}^d$ corresponding to centroids.
    \State Select $k$ initial centroids ${c}_1^0,\dots,{c}_k^0$ 
    \For{$t=0$ until convergence}
        \State Build quantum query access to $[{c}_1^t,\dots,{c}_k^t]\in\mathbb{R}^{d\times k}$
        \State \parbox[t]{\dimexpr\linewidth-\algorithmicindent}{Using  \Cref{lem:quantum_approximate_inner_product} to obtain $|i\rangle|\bar{0}\rangle \mapsto |i\rangle|\ell_i^t\rangle$ such that $\ell_i^t \in \{j\in[k]: \|v_i - c_j^t\|^2 \leq \min_{j'\in[k]}\|v_i - c_{j'}^t\|^2 + \tau\}$, construct the unitaries
        \begin{align*}
            U_{I}&:|\bar{0}\rangle \mapsto \sum_{i\in[n]}\frac{1}{\sqrt{n}}|i,\ell_i^t\rangle, \qquad\qquad U_{V}:|\bar{0}\rangle \mapsto \sum_{(i,l)\in[n]\times[d]} \sqrt{\frac{|V_{il}|}{\|V\|_{1,1}}} |i,\ell_i^t,l\rangle, \\
            \mathcal{B}_I &: |i,j\rangle|\bar{0}\rangle^{\otimes k} \mapsto |i,j\rangle|\bar{0}\rangle^{\otimes (j-1)}|1\rangle|\bar{0}\rangle^{\otimes (k-j)},\\
            \mathcal{B}_V &: |i,j,l\rangle|0\rangle^{\otimes kd} \mapsto |i,j,l\rangle|0\rangle^{\otimes ((j-1)d + (l-1))}|\operatorname{sgn}(V_{li})\rangle|0\rangle^{\otimes ((k-j-1)d + (d-l-1))}
        \end{align*}}

        \State  \parbox[t]{\dimexpr\linewidth-\algorithmicindent}{Apply the multivariate quantum Monte Carlo routine (\Cref{fact:multivariate_Monte_Carlo}) with $p = \widetilde{O}\big(\frac{\|V\|}{\sqrt{n}}\frac{k^{3/2}}{\varepsilon}\big)$ queries to the unitaries $U_{I}$ and $\mathcal{B}_I$ to obtain $P\in\mathbb{R}^{k}$}
        \State  \parbox[t]{\dimexpr\linewidth-\algorithmicindent}{Apply the multivariate quantum Monte Carlo routine (\Cref{fact:multivariate_Monte_Carlo}) with $q = \widetilde{O}\big(\frac{\|V\|_{1,1}}{\sqrt{n}}\frac{k\sqrt{d}}{\varepsilon}\big)$ queries to the unitaries $U_{V}$ and $\mathcal{B}_V$ to obtain $Q\in(\mathbb{R}^{d})^k$}
        \State For $j\in[k]$, record the new centroids ${c}_j^{t+1} = \frac{\|V\|_{1,1}}{n}\frac{Q_j}{P_j}$
    \EndFor
\end{algorithmic}\label{alg:q-means2}
\end{algorithm}

\begin{theorem}[Quantum $(\varepsilon,\tau)$-$q$-means algorithm]
    Let $\varepsilon,\tau>0$, $\delta\in(0,1)$, and assume quantum query access to $V=[v_1,\dots,v_n]\in\mathbb{R}^{d\times n}$. If all clusters satisfy $|\mathcal{C}_j^t| = \Omega(\frac{n}{k})$, then {\rm \Cref{alg:q-means2}} outputs centroids consistent with the $(\varepsilon,\tau)$-$k$-means with probability $1-\delta$. The complexities per iteration of {\rm \Cref{alg:q-means2}} are (up to $\polylog$ factors in $k$, $d$, $\frac{1}{\delta}$, $\frac{1}{\tau}$, $\frac{1}{\varepsilon}$, $\frac{\|V\|_F}{\sqrt{n}}$)
    \begin{align*}
        \textbf{Quantum queries:}&~\widetilde{O}\bigg(\bigg(\sqrt{k}\frac{\|V\|}{\sqrt{n}} + \sqrt{d}\frac{\|V\|_{1,1}}{n} \bigg)\frac{\|V\|_F\|V\|_{2,\infty}}{\sqrt{n}}\frac{k^{\frac{3}{2}}}{\varepsilon\tau} \bigg),\\ % = \widetilde{O}\bigg(\frac{\|V\|_F}{\sqrt{n}}\frac{k^{\frac{3}{2}}d}{\varepsilon}(\sqrt{k} + \log{n})(\sqrt{k} + \sqrt{d})\bigg),
        \textbf{Time:}&~\widetilde{O}\bigg(\bigg(\sqrt{k}\frac{\|V\|}{\sqrt{n}} + \sqrt{d}\frac{\|V\|_{1,1}}{n} \bigg)\bigg(\frac{\|V\|_F\|V\|_{2,\infty}}{\sqrt{n}}\frac{k^{\frac{3}{2}}}{\varepsilon\tau}\log{n} + \frac{k^2d}{\varepsilon}\bigg) \bigg).
    \end{align*}
\end{theorem}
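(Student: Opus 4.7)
The proof will adapt the argument for the $(\varepsilon,0)$-variant (the theorem for Algorithm \ref{alg:q-means}) by replacing the exact cluster-assignment subroutine with the approximate variant from \Cref{lem:quantum_approximate_inner_product}, and by switching from $\ell_2$-sampling of columns to $\ell_1$-sampling of entries. First, I set up the probability/random-variable pairs behind the unitaries. The unitary $U_I$ prepares $\mathbb{P}_I(i,j) = \tfrac{1}{n}$ supported on $j = \ell_i^t$, with $X_I(i,j) = e_j\in\mathbb{R}^k$, so $\mathbb{E}[X_I]_j = |\mathcal{C}_j^t|/n$. The unitary $U_V$ prepares $\mathbb{P}_V(i,j,l) = |V_{li}|/\|V\|_{1,1}$ supported on $j = \ell_i^t$, and $X_V(i,j,l)$ is the $kd$-dimensional unit vector with $\operatorname{sgn}(V_{li})$ at position $((j{-}1)d + l)$, so $\mathbb{E}[X_V]_j = \sum_{i\in\mathcal{C}_j^t} v_i/\|V\|_{1,1}$. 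Since $\ell_i^t$ is only approximately optimal (with the $\tau$-slack on squared distances guaranteed by \Cref{lem:quantum_approximate_inner_product}), these means are automatically consistent with the $(\varepsilon,\tau)$-$k$-means specification.

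Second, I invoke \Cref{fact:multivariate_Monte_Carlo} with $L_2 = 1$ for both random variables (since $\|X_I\|, \|X_V\| \leq 1$ almost surely). This yields $P\in\mathbb{R}^k$ and $Q\in(\mathbb{R}^d)^k$ within $\ell_\infty$-errors $\widetilde{O}(1/p)$ and $\widetilde{O}(1/q)$ of the respective means. Applying the same triangle-inequality decomposition as in the proof for Algorithm \ref{alg:q-means},
\[
\|c_j^{\ast\,t+1} - c_j^{t+1}\| \leq \Big|\tfrac{|\mathcal{C}_j^t|}{nP_j} - 1\Big|\Big\|\tfrac{1}{|\mathcal{C}_j^t|}\textstyle\sum_{i\in\mathcal{C}_j^t} v_i\Big\| + \tfrac{\|V\|_{1,1}}{nP_j}\Big\|Q_j - \textstyle\sum_{i\in\mathcal{C}_j^t} v_i/\|V\|_{1,1}\Big\|,
\]
combined with $\|\cdot\| \leq \sqrt{d}\|\cdot\|_\infty$, the cluster-size hypothesis $|\mathcal{C}_j^t| = \Omega(n/k)$, and $\|V\chi_j^t\| \leq \|V\|/\sqrt{|\mathcal{C}_j^t|}$, forces the choices $\varepsilon_1 = O(\sqrt{n}\varepsilon/(\|V\|k^{3/2}))$ and $\varepsilon_2 = O(n\varepsilon/(\|V\|_{1,1}\sqrt{d}\,k))$. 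Inverting these recovers $p = \widetilde{O}\big(\|V\|/\sqrt{n}\cdot k^{3/2}/\varepsilon\big)$ and $q = \widetilde{O}\big(\|V\|_{1,1}/n\cdot k\sqrt{d}/\varepsilon\big)$ as in Algorithm \ref{alg:q-means2}, and a union bound over clusters and the two Monte Carlo calls controls the overall failure probability by $\delta$.

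Third, I compute the per-query cost. The state-preparation parts of $U_I, U_V$ are cheap via \Cref{lem:kp_tree}; the bottleneck is appending the label register $|\ell_i^t\rangle$ via \Cref{lem:quantum_approximate_inner_product}. Upper-bounding $\|v_i\| \leq \|V\|_{2,\infty}$ in that lemma gives a uniform per-invocation cost of $\widetilde{O}\big(\sqrt{k}\|V\|_F\|V\|_{2,\infty}/(\sqrt{n}\tau)\big)$ queries, and multiplying by $p + q$ produces the stated query complexity. The unitary $\mathcal{B}_V$ reads one sign of $V$ and performs $O(kd)$ controlled-SWAPs to route it to the correct register, contributing an extra additive $(p+q)\cdot\widetilde{O}(kd)$ term in time, which is precisely the $\big(\sqrt{k}\|V\|/\sqrt{n} + \sqrt{d}\|V\|_{1,1}/n\big)\cdot k^2d/\varepsilon$ summand in the time complexity; $\mathcal{B}_I$ is cheaper still and is absorbed.

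The main subtlety is that the cost of \Cref{lem:quantum_approximate_inner_product} scales with $\|v_i\|$, while the Monte Carlo subroutine calls $U_V, U_I$ in superposition over $i$, so one could worry about the variable-time nature of the inner unitary. The uniform worst-case bound $\|v_i\| \leq \|V\|_{2,\infty}$ sidesteps this issue and already yields the stated complexity; a tighter expected-cost analysis via \Cref{fact:variable_time_minimum_finding} is conceivable but unnecessary. The remaining book-keeping — tracking the constant factors hidden in the $\varepsilon_1,\varepsilon_2$ bounds and in the $\polylog$-factors of \Cref{fact:multivariate_Monte_Carlo} — is a direct transcription of the argument for Algorithm \ref{alg:q-means}.
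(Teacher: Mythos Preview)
Your proposal is correct and follows essentially the same approach as the paper: the same $\ell_1$-entry-sampling random variable for $U_V$, the same triangle-inequality error decomposition leading to identical choices of $p$ and $q$, and the same accounting of the per-call cost of $U_I,U_V$ via \Cref{lem:quantum_approximate_inner_product} plus the $O(kd)$ routing cost of $\mathcal{B}_V$. The only minor slip is that the statement of \Cref{lem:quantum_approximate_inner_product} already incorporates the bound $\|v_i\|\leq \|V\|_{2,\infty}$, so your ``subtlety'' paragraph is really a comment on that lemma's proof rather than an additional step needed here.
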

\begin{proof}
    The proof is similar to \Cref{thr:quantum_q_means_1}. The unitaries $U_{I}$ and $\mathcal{B}_I$ are still the same,
    \begin{align*}
         U_{I}:|\bar{0}\rangle \mapsto \sum_{i\in[n]}\frac{1}{\sqrt{n}}|i,\ell_i^t\rangle, \qquad \mathcal{B}_I : |i,j\rangle|0\rangle^{\otimes k} \mapsto |i,j\rangle|0\rangle^{\otimes (j-1)}|1\rangle|0\rangle^{\otimes (k-j-1)},
    \end{align*}
    but the unitaries $U_{V}$ and $\mathcal{B}_V$ are now replaced with
    \begin{align*}
        U_{V} &: |\bar{0}\rangle \mapsto \sum_{(i,l)\in[n]\times[d]}\sqrt{\frac{|V_{li}|}{\|V\|_{1,1}}} |i,\ell_i^t,l\rangle, \\
        \mathcal{B}_V &: |i,j,l\rangle|0\rangle^{\otimes kd} \mapsto |i,j,l\rangle|0\rangle^{\otimes ((j-1)d + (l-1))}|\operatorname{sgn}(V_{li})\rangle|0\rangle^{\otimes ((k-j-1)d + (d-l-1))}.
    \end{align*}
    The new unitary $U_V$ can be thought of as preparing a superposition over the probability space with distribution $\mathbb{P}_V$ given by
    \begin{align*}
        \mathbb{P}_V(i,j,l) = \begin{cases}
            \frac{|V_{li}|}{\|V\|_{1,1}} &\text{if}~j = \ell_i^t,\\
            0 &\text{if}~j \neq \ell_i^t,
        \end{cases}
    \end{align*}
    while the unitary $\mathcal{B}_V$ can be thought of as binary encoding the random variables $X_V:[n]\times[k]\times[d]\to(\mathbb{R}^d)^k$ given by $X_V(i,j,l) = (0,\dots,0,\operatorname{sgn}(V_{li}),0,\dots,0)$, where the non-zero entry is the $((j-1)d + l)$-th entry. Note that 
    \begin{align*}
        \sum_{(i,j,l)\in[n]\times[k]\times[d]} \mathbb{P}_V(i,j,l) X_V(i,j,l) &= \Bigg(\sum_{i\in \mathcal{C}_1^t} \frac{v_i}{\|V\|_{1,1}}, \dots, \sum_{i\in \mathcal{C}_k^t} \frac{v_i}{\|V\|_{1,1}}\Bigg).
    \end{align*}
    Therefore, the multivariate quantum Monte Carlo subroutine (\Cref{fact:multivariate_Monte_Carlo}) returns $P\in\mathbb{R}^k$ and $Q\in(\mathbb{R}^d)^k$ such that, with probability at least $1-\delta$ and for some $\varepsilon_1,\varepsilon_2 >0$ to be determined,
    \begin{align*}
        \left|P_j - \frac{|\mathcal{C}_j^t|}{n} \right| \leq \varepsilon_1 \qquad\text{and}\qquad \Bigg\|Q_j - \sum_{i\in\mathcal{C}_j^t} \frac{v_i}{\|V\|_{1,1}} \Bigg\|_\infty \leq \varepsilon_2 \qquad\forall j\in[k].
    \end{align*}
    Similar to \Cref{thr:quantum_q_means_1}, by a triangle inequality,
    \begin{align*}
        \| c_j^{\ast\:t+1} - c_j^{t+1} \| \leq \frac{\|V\|}{\sqrt{|\mathcal{C}_j^t|}} \frac{2n}{|\mathcal{C}_j^t|}\varepsilon_1 + \frac{2\sqrt{d}\|V\|_{1,1}}{|\mathcal{C}_j^t|}\varepsilon_2.
    \end{align*}
    It suffices to query the unitaries $U_I$ and $\mathcal{B}_I$ a number of $p = \widetilde{O}\big(\frac{\|V\|}{\sqrt{n}}\frac{k^{3/2}}{\varepsilon}\big)$ times within quantum multivariate Monte Carlo subroutine to get $\varepsilon_1 = O\big(\frac{\sqrt{n}}{\|V\|}\frac{\varepsilon}{k^{3/2}}\big)$. By the same toke, it suffices to query the unitaries $U_V$ and $\mathcal{B}_V$ a number of $q = \widetilde{O}\big(\frac{\|V\|_{1,1}}{n}\frac{k\sqrt{d}}{\varepsilon}\big)$ times within quantum multivariate Monte Carlo subroutine to get $\varepsilon_2 = O\big(\frac{n}{\|V\|_{1,1}}\frac{\varepsilon}{k\sqrt{d}} \big)$. This yields $\| c_j^{\ast\:t+1} - c_j^{t+1} \| \leq \varepsilon$ as wanted.

    We now show how to perform the unitaries $U_V$, $U_I$, $\mathcal{B}_V$, $\mathcal{B}_I$. The binary-encoding unitary $\mathcal{B}_V$ is $1$ quantum query ($O(\log{n})$ time), followed by $1$ controlled-SWAP on $kd$ qubits ($O(kd\log(kd))$ time~\cite{simulating2015berry}), while $\mathcal{B}_I$ is simply $1$ controlled-SWAP on $k$ qubits. On the other hand, the probability-distribution-encoding unitaries $U_V$, $U_I$ can be performed via the initial state preparations $|\bar{0}\rangle \mapsto \sum_{(i,l)\in[n]\times[d]} \sqrt{\frac{|V_{li}|}{\|V\|_{1,1}}} |i,l\rangle$ and $|\bar{0}\rangle \mapsto \frac{1}{\sqrt{n}}\sum_{i\in[n]}|i\rangle$, respectively, followed by the mapping $|i\rangle|\bar{0}\rangle \mapsto |i\rangle|\ell_i^t\rangle$. In \Cref{lem:quantum_approximate_inner_product} we show how to implement the mapping $|i\rangle|\bar{0}\rangle \mapsto |i\rangle|\ell_i^t\rangle$, where $\ell_i^t \in \{j\in[k]:\|v_i - c_j^t\|^2 \leq \min_{j'\in[k]}\|v_i - c^t_{j'}\|^2 + \tau\}$, using $\widetilde{O}\big(\frac{\|V\|_F}{\sqrt{n}}\frac{\sqrt{k}\|V\|_{2,\infty}}{\tau}\big)$ quantum queries and $\widetilde{O}\big(\frac{\|V\|_F}{\sqrt{n}}\frac{\sqrt{k}\|V\|_{2,\infty}}{\tau}\log{n}\big)$ time. In summary,
    \begin{enumerate}
        \item $\mathcal{B}_V$ requires $O(1)$ quantum queries and $O(\log(nd) + kd\log(kd))$ time;\footnote{If one has access to a QRAG, then $\mathcal{B}_V$ requires only $O(\log(nd))$ time and the final runtime of \Cref{alg:q-means2} is $\widetilde{O}\big(\big(\sqrt{k}\frac{\|V\|}{\sqrt{n}} + \sqrt{d}\frac{\|V\|_{1,1}}{n} \big)\frac{\|V\|_F\|V\|_{2,\infty}}{\sqrt{n}}\frac{k^{3/2}}{\varepsilon\tau}\log{n} + kd\big)$, where the term $\widetilde{O}(kd)$ comes from \Cref{footnote1}.\label{footnote4}}
        \item $U_V$ requires $\widetilde{O}\big(\frac{\|V\|_F}{\sqrt{n}}\frac{\sqrt{k}\|V\|_{2,\infty}}{\tau}\big)$ quantum queries and $\widetilde{O}\big(\frac{\|V\|_F}{\sqrt{n}}\frac{\sqrt{k}\|V\|_{2,\infty}}{\tau}\log{n}\big)$ time;
        \item $\mathcal{B}_I$ requires no quantum queries and $O(k\log{k})$ time;
        \item $U_I$ requires $\widetilde{O}\big(\frac{\|V\|_F}{\sqrt{n}}\frac{\sqrt{k}\|V\|_{2,\infty}}{\tau}\big)$ quantum queries and $\widetilde{O}\big(\frac{\|V\|_F}{\sqrt{n}}\frac{\sqrt{k}\|V\|_{2,\infty}}{\tau}\log{n}\big)$ time.
    \end{enumerate}
    Collecting all the term, the total number of quantum queries is $\widetilde{O}\big((p+q)\frac{\|V\|_F}{\sqrt{n}}\frac{\sqrt{k}\|V\|_{2,\infty}}{\tau}\big)$, while the overall time complexity is
    \begin{align*}
        \widetilde{O}&\bigg((p+q)\bigg(\frac{\|V\|_F}{\sqrt{n}}\frac{\sqrt{k}\|V\|_{2,\infty}}{\tau}\log{n} + kd\bigg) \bigg) \\
        &= \widetilde{O}\bigg(\bigg(\sqrt{k}\frac{\|V\|}{\sqrt{n}} + \sqrt{d}\frac{\|V\|_{1,1}}{n} \bigg)\bigg(\frac{\|V\|_F\|V\|_{2,\infty}}{\sqrt{n}}\frac{k^{\frac{3}{2}}}{\varepsilon\tau}\log{n} + \frac{k^2d}{\varepsilon}\bigg) \bigg). \qedhere
    \end{align*}
\end{proof}

\begin{lemma}[Approximate quantum cluster assignment]
    \label{lem:quantum_approximate_inner_product}
    Assume quantum query access to matrix $V=[v_1,\dots,v_n]\in\mathbb{R}^{d\times n}$. Let $\delta\in(0,1)$, $\tau>0$, and $0 < \varepsilon \leq \frac{\|V\|_F}{\sqrt{n}}$. Assume quantum query access to centroid matrix $C^t = [c_1^t,\dots,c_k^t] \in \mathbb{R}^{d\times k}$ such that $\|c_j^t - |\mathcal{C}_j^{t-1}|^{-1}\sum_{i\in\mathcal{C}_j^{t-1}} v_i\| \leq \varepsilon$ with $|\mathcal{C}_j^{t-1}| = \Omega(\frac{n}{k})$ for all $j\in[k]$. There is a quantum algorithm that performs the mapping $|i\rangle|\bar{0}\rangle \mapsto |i\rangle|L^t_i\rangle$ such that, upon measuring $|L^t_i\rangle$ on the computational basis, the outcome equals $\ell_i^t \in \{j\in[k] : \|v_i - c_j^t\|^2 \leq \min_{j'\in[k]}\|v_i-c_{j'}^t\|^2 + \tau\}$ with probability at least $1-\delta$. It uses $\widetilde{O}\big(\frac{\|V\|_F}{\sqrt{n}}\frac{\sqrt{k}\|V\|_{2,\infty}}{\tau}\big)$ quantum queries and $\widetilde{O}\big(\frac{\|V\|_F}{\sqrt{n}}\frac{\sqrt{k}\|V\|_{2,\infty}}{\tau}\log{n}\big)$ time, where $\widetilde{O}(\cdot)$ hides $\polylog$ factors in $k$, $d$, $\frac{1}{\delta}$, $\frac{1}{\tau}$, $\frac{\|V\|_F}{\sqrt{n}}$.
\end{lemma}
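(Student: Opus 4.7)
The plan is to mirror the classical cluster-assignment lemma (using $\ell_2$-sampling) but replacing the sampling estimator with quantum amplitude estimation and the arg-min with variable-time quantum minimum finding. First, for a fixed $i$ and each $j\in[k]$, I would construct a quantum subroutine $U_j:|\bar{0}\rangle \mapsto |\widetilde{w}_{ij}\rangle$ producing an estimate of $\|v_i - c_j^t\|^2$ within additive error $\tau/2$, in $t_j = \widetilde{O}(\|v_i\|\|c_j^t\|/\tau)$ quantum queries. Since $\|v_i-c_j^t\|^2 = \|v_i\|^2 + \|c_j^t\|^2 - 2\langle v_i,c_j^t\rangle$ and the two norms are known exactly (Item 2 of \Cref{lem:kp_tree}), the task reduces to estimating the inner product to additive error $\tau/4$. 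Using the state preparation $|i\rangle|\bar{0}\rangle \mapsto \sum_l \frac{V_{li}}{\|v_i\|}|i,l\rangle$ (Item 5b) and its analogue for $c_j^t$, a Hadamard-test circuit together with standard amplitude estimation yields the normalized inner product $\langle v_i/\|v_i\|, c_j^t/\|c_j^t\|\rangle$ to additive error $\eta$ using $O(1/\eta)$ queries; choosing $\eta = \tau/(4\|v_i\|\|c_j^t\|)$ delivers the claimed $t_j$.

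Second, I would feed the collection $\{U_j\}_{j\in[k]}$ into the variable-time quantum minimum finding routine (\Cref{fact:variable_time_minimum_finding}), with per-branch running times $t_j$, all controlled on the input register $|i\rangle$. This produces $|i\rangle|\bar{0}\rangle \mapsto |i\rangle|L_i^t\rangle$ whose measurement outputs $\arg\min_{j\in[k]} \widetilde{w}_{ij}$ with probability at least $1-\delta$; since each $\widetilde{w}_{ij}$ approximates $\|v_i - c_j^t\|^2$ within $\tau/2$, the returned index automatically satisfies the approximate-minimum condition $\|v_i - c_{\ell_i^t}^t\|^2 \leq \min_{j'\in[k]}\|v_i-c_{j'}^t\|^2 + \tau$. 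The overall query complexity comes out to
\[
\widetilde{O}\Big(\sqrt{\textstyle\sum_{j\in[k]} t_j^2}\Big) = \widetilde{O}\Big(\frac{\|v_i\|\,\|C^t\|_F}{\tau}\Big).
\]

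Finally, I would invoke the same $\|C^t\|_F^2 = O(k\|V\|_F^2/n)$ bound that appears in the classical lemma: expand $c_j^t$ around $|\mathcal{C}_j^{t-1}|^{-1}\sum_{i\in\mathcal{C}_j^{t-1}}v_i$, apply the triangle inequality, use Cauchy--Schwarz on each cluster sum, and invoke $\varepsilon \leq \|V\|_F/\sqrt{n}$ together with $|\mathcal{C}_j^{t-1}| = \Omega(n/k)$. Combined with the uniform bound $\|v_i\| \leq \|V\|_{2,\infty}$, this recovers $\widetilde{O}\big(\frac{\|V\|_F}{\sqrt{n}}\frac{\sqrt{k}\|V\|_{2,\infty}}{\tau}\big)$ quantum queries; the time bound picks up the extra $O(\log n)$ factor from the QRAM calls inside each state preparation.

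The main obstacle will be cleanly composing amplitude estimation inside the variable-time min-finding framework: amplitude estimation is inherently a bounded-error procedure, so I need to amplify each $U_j$ to succeed with probability $1 - O(\delta/k)$ (contributing only $\polylog(k/\delta)$ overhead per branch) and then argue that the variable-time min-finding guarantee is preserved under this bounded-error oracle model. A secondary subtlety is that the Hadamard-test measurement probability is only an affine function of the inner product, so I need to confirm that the resolution of amplitude estimation translates directly into the desired additive error on $\langle v_i, c_j^t\rangle$ without blow-up from normalization.
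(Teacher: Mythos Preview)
Your proposal is correct and follows essentially the same approach as the paper: Hadamard/swap-test state preparation from \Cref{lem:kp_tree}(5b) plus amplitude estimation to get $w_{ij}$ with $|w_{ij}-\|v_i-c_j^t\|^2|\le\tau/2$ in $\widetilde{O}(\|v_i\|\|c_j^t\|/\tau)$ queries, then variable-time minimum finding over $j\in[k]$, and finally the same $\|C^t\|_F^2=O(k\|V\|_F^2/n)$ bound. One refinement: the paper handles your ``main obstacle'' not by a union bound over the $k$ branches but by bounding the operator-norm distance $\|U-\widetilde U\|\le\sqrt{2\delta_2}$ between the ideal and bounded-error oracles and propagating this through the $N=O(\sqrt{k}\log\frac{1}{\delta_1})$ oracle calls of minimum finding, giving total failure $\delta_1+N\sqrt{2\delta_2}$; this forces $\delta_2=O(\delta^2/N^2)$ rather than your $O(\delta/k)$, though the difference is absorbed in the $\polylog$ factors.
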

\begin{proof}
    We first describe how to perform the map $|i,j\rangle|\bar{0}\rangle \mapsto |i,j\rangle|w_{ij}\rangle$, where $|w_{ij} - \|v_i - c_j^t\|^2| \leq \frac{\tau}{2}$ with high probability. Recall from \Cref{lem:kp_tree} that we can perform the maps
    \begin{align*}
        \mathcal{O}_{V}: |i\rangle|\bar{0}\rangle \mapsto \sum_{l\in[d]} \frac{(v_i)_l}{\|v_i\|} |i,l\rangle \qquad\text{and}\qquad \mathcal{O}_{C^t}: |j\rangle|\bar{0}\rangle \mapsto \sum_{l\in[d]} \frac{(c_j^t)_l}{\|c_j^t\|} |j,l\rangle
    \end{align*}
    in $O(\log(nd))$ time. Start then with the quantum state $|i,j\rangle\frac{|0\rangle + |1\rangle}{\sqrt{2}}|\bar{0}\rangle$ and perform the above maps controlled on the third register $\frac{|0\rangle + |1\rangle}{\sqrt{2}}$, i.e., perform $\mathcal{O}_{V}$ if the third register is $|0\rangle$ and $\mathcal{O}_{C^t}$ if it is $|1\rangle$. The final state is
    \begin{align*}
        \frac{1}{\sqrt{2}}|i,j\rangle\sum_{l\in[d]}\left(\frac{(v_i)_l}{\|v_i\|} |0,l\rangle + \frac{(c_j^t)_l}{\|c_j^t\|} |1,l\rangle \right).
    \end{align*}
    After applying a Hadamard gate onto the third register, the state becomes
    \begin{align*}
        &|i,j\rangle\sum_{l\in[d]}\left(\frac{1}{2}\left(\frac{(v_i)_l}{\|v_i\|} + \frac{(c_j^t)_l}{\|c_j^t\|}\right) |0,l\rangle + \frac{1}{2}\left(\frac{(v_i)_l}{\|v_i\|} - \frac{(c_j^t)_l}{\|c_j^t\|}\right) |1,l\rangle \right)\\
        =~ &|i,j\rangle(\sqrt{p_{ij}}|0\rangle|\psi_{ij}\rangle + \sqrt{1-p_{ij}}|1\rangle|\phi_{ij}\rangle,
    \end{align*}
    where
    \begin{align*}
        p_{ij} = \frac{1}{4}\sum_{l\in[d]} \left(\frac{(v_i)_l}{\|v_i\|} + \frac{(c_j^t)_l}{\|c_j^t\|}\right)^2 = \frac{1}{2} + \frac{\langle v_i,c_j^t \rangle}{2\|v_i\|\|c_j^t\|}
    \end{align*}
    is the probability of measuring the third register on state $|0\rangle$, and $|\psi_{ij}\rangle$ and $|\phi_{ij}\rangle$ are ``garbage'' normalised states. It is then possible to apply a standard quantum amplitude estimation subroutine~\cite{brassard2002quantum} to obtain a quantum state $|i,j\rangle|\Psi'_{ij}\rangle$ such that, upon measuring onto the computation basis, the outcome $\widetilde{p}_{ij}$ is such that $|\widetilde{p}_{ij} - p_{ij}| \leq \frac{\tau}{4\|v_i\|\|c_j^t\|} \implies |w_{ij} - \|v_i - c_j^t\|^2| \leq \frac{\tau}{2}$ with probability at $1-\delta_2$ for some $\delta_2\in(0,1)$, where $w_{ij} = \|v_i\|^2 + \|c_j^t\|^2 - \|v_i\|\|c_j^t\|(2\widetilde{p}_{ij} - 1)$. It is then straightforward to obtain a new state $|\Psi_{ij}\rangle$ from $|\Psi'_{ij}\rangle$ which returns $w_{ij}$ upon measurement with probability $1-\delta_2$. For each $(i,j)\in[n]\times[k]$, mapping $|i,j\rangle|\bar{0}\rangle \mapsto |i,j\rangle|w_{ij}\rangle$ requires $O\big(\frac{\|v_i\|\|c_j^t\|}{\tau}\log\frac{1}{\delta_2}\big)$ quantum queries and $O\big(\frac{\|v_i\|\|c_j^t\|}{\tau}\log\frac{1}{\delta_2}\log(nd)\big)$ time.

    Fix $i\in[n]$. The mapping $|i,j\rangle|\bar{0}\rangle \mapsto |i,j\rangle|w_{ij}\rangle$ can be viewed as quantum access to the vector $(w_{ij})_{j\in[k]}$. We thus employ the (variable-time) quantum minimum finding subroutine (\Cref{fact:variable_time_minimum_finding}) in order to obtain the map $|i\rangle|\bar{0}\rangle \mapsto |i\rangle|L_i^t\rangle$, where upon measuring $|L_i^t\rangle$ on the computation basis, the outcome equals $\ell_i^t = \arg\min_{j\in[k]}w_{ij} \in \{j\in[k]: \|v_i - c_j^t\|^2 \leq \min_{j'\in[k]}\|v_i - c_{j'}^t\|^2 + \tau\}$ with probability $1-\delta_1$. According to \Cref{fact:variable_time_minimum_finding}, the query complexity of $|i\rangle|\bar{0}\rangle \mapsto |i\rangle|L_i^t\rangle$ is
    \begin{align}\label{eq:classical_approximate_complexities}
        O\bigg(\frac{\|v_i\|}{\tau}\log\frac{1}{\delta_1}\log\frac{1}{\delta_2}\sqrt{\sum_{j\in[k]}\|c_j^t\|^2}  \bigg) = O\bigg(\frac{\|V\|_F}{\sqrt{n}}\frac{\sqrt{k}\|V\|_{2,\infty}}{\tau}\log\frac{1}{\delta_1}\log\frac{1}{\delta_2} \bigg),
    \end{align}
    where we used that $\sum_{j\in[k]} \|c_j^t\|^2 = \|C^t\|_F^2 = O\big(k\frac{\|V\|_F^2}{n} \big)$ as in \Cref{lem:classical_inner_product} and $\|v_i\| \leq \|V\|_{2,\infty}$, while the time complexity is $O(\log(nd))$ times the query complexity.
    
    In order to analyse the success probability (see~\cite[Appendix~A]{chen2023quantum} for a similar argument), on the other hand, first note that we implement the unitary $\widetilde{U} : |i,j\rangle|\bar{0}\rangle \mapsto |i,j\rangle(\sqrt{1-\delta_2}|w_{ij}\rangle + \sqrt{\delta_2}|w_{ij}^\perp\rangle)$, where $|w_{ij}\rangle$ contains the approximation $|w_{ij} - \|v_i - c_j^t\|^2| \leq \frac{\tau}{2}$ and $|w_{ij}^\perp\rangle$ is a normalised quantum state orthogonal to $|w_{ij}\rangle$. Ideally, we would like to implement $U : |i,j\rangle|\bar{0}\rangle \mapsto |i,j\rangle|w_{ij}\rangle$. Also
    \begin{align*}
        \forall |i,j\rangle :\qquad \|(U - \widetilde{U})|i,j\rangle|\bar{0}\rangle\| = \sqrt{(1 - \sqrt{1-\delta_2})^2 + \delta_2} = \sqrt{2-2\sqrt{1-\delta_2}} \leq \sqrt{2\delta_2},
    \end{align*}
    using that $\sqrt{1-\delta_2} \geq 1 - \delta_2$. Since (variable-time) quantum minimum finding does not take into account the action of $U$ onto states of the form $|i,j\rangle|\bar{0}^\perp\rangle$ for $|\bar{0}^\perp\rangle$ orthogonal to $|\bar{0}\rangle$, we can, without of loss of generality, assume that $\|U - \widetilde{U}\| \leq \sqrt{2\delta_2}$. The success probability of (variable-time) quantum minimum finding is $1-\delta_1$ when employing the unitary $U$. However, since it employs $\widetilde{U}$ instead, the success probability decreases by at most the spectral norm of the difference between the ``real'' and the ``ideal'' total unitaries. To be more precise, the ``ideal'' (variable-time) quantum minimum finding is a sequence of gates $\mathcal{A} = U_1E_1U_2 E_2\cdots U_{N}E_N$, where $U_i \in \{U,U^\dagger\}$, $E_i$ is a circuit of elementary gates, and $N$ is the number of queries to $U$, which can be upper-bounded as $O\big(\sqrt{k}\log\frac{1}{\delta_1}\big)$. The ``real'' implementation, on the other hand, is $\widetilde{\mathcal{A}} = \widetilde{U}_1E_1\widetilde{U}_2 E_2\cdots \widetilde{U}_{N}E_N$, where $\widetilde{U}_i \in \{\widetilde{U},\widetilde{U}^\dagger\}$. Then $\|\mathcal{A} - \widetilde{\mathcal{A}}\| \leq N\|U - \widetilde{U}\| \leq N\sqrt{2\delta_2}$. The failure probability is thus $\delta_1 + N\sqrt{2\delta_2}$. By taking $\delta_1 = O(\delta)$ and $\delta_2 = O\big(\frac{\delta^2}{N^2}\big)$, the success probability is $1-\delta$. The final complexities are obtained by replacing $\delta_1$ and $\delta_2$ into \Cref{eq:classical_approximate_complexities}.
\end{proof}

\section{Lower bounds}

In this section we prove query lower bounds to the matrix $V = [v_1,\dots,v_n]\in\mathbb{R}^{d\times n}$ for finding new centroids $c_1,\dots,c_k\in\mathbb{R}^d$ given $k$ clusters $\{\mathcal{C}_j\}_{j\in[k]}$ that form a partition of $[n]$. We note that the task considered here is easier than the one performed by $(\varepsilon,\tau)$-$k$-means, since the clusters $\{\mathcal{C}_j\}_{j\in[k]}$ are part of the input. Nonetheless, query lower bounds for such problem will prove to be tight in most parameters. The main idea is to reduce the problem of approximating $\frac{1}{|\mathcal{C}_j|}\sum_{i\in\mathcal{C}_j} v_i$ for all $j\in[k]$ from the problem of approximating the Hamming weight of some bit-string, whose query complexity is given in the following well-known fact.
\begin{fact}[{\cite[Theorem 1.11]{nayak1999quantum}}]\label{fact:lower_bound_reduction}
    Let $x\in\{0,1\}^n$ be a bit-string with Hamming weight $|x| = \Theta(n)$ accessible through queries. Consider the problem of outputting $w\in[n]$ such that $||x| - w| \leq m$ for a given $0\leq m \leq n/4$. Its randomised classical query complexity is $\Theta(\min\{(n/m)^2,n\})$, while its quantum query complexity is $\Theta(\min\{n/m, n\})$.
\end{fact}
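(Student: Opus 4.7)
The plan is to handle the four cases (classical/quantum, upper/lower) separately, with the $\min$ coming for free by reading all $n$ bits whenever $m$ is so small that exact counting is cheaper than approximation. The two upper bounds are standard concentration/amplitude-estimation arguments, the classical lower bound is a direct two-point hypothesis-testing argument, and the quantum lower bound is the main obstacle and would be handled via the polynomial method of Nayak--Wu.

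For the classical upper bound, I would sample $t = \Theta((n/m)^2)$ indices $i_1,\ldots,i_t$ uniformly with replacement from $[n]$ and return $w = \frac{n}{t}\sum_{j\in[t]} x_{i_j}$. This estimator is unbiased with variance at most $n^2/(4t) = O(m^2)$, so Chebyshev's inequality gives $||x|-w|\leq m$ with constant probability, boosted to high probability by a median-of-means trick. For the quantum upper bound, I would apply quantum amplitude estimation from \cite{brassard2002quantum} to the uniform state $n^{-1/2}\sum_i |i\rangle$ with the projector onto $\{|i\rangle : x_i = 1\}$; after $t$ iterations of the Grover operator it returns $\hat p$ with $|\hat p - |x|/n| \leq O(\sqrt{|x|(n-|x|)}/(nt) + t^{-2})$, and the hypothesis $|x| = \Theta(n)$ collapses this to $O(1/t)$, so $t = \Theta(n/m)$ queries give $|n\hat p - |x||\leq m$.

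For the classical lower bound I would use Yao's minimax principle applied to the two input distributions $D_b$ uniform on $\{x : |x| = \lfloor n/2\rfloor + 3bm\}$ for $b\in\{0,1\}$, each of which any $m$-approximate counter must distinguish with constant advantage. Conditional on any $t$ (possibly adaptive) classical queries, the observed count is hypergeometric with means differing by $\Theta(tm/n)$ and variance $\Theta(t)$, so constant-probability distinguishability requires $tm/n = \Omega(\sqrt{t})$, i.e., $t = \Omega((n/m)^2)$. The quantum lower bound is the main obstacle; I would invoke the polynomial method of Beals--Buhrman--Cleve--Mosca--de Wolf: any $t$-query quantum algorithm's acceptance probability on any fixed output is a multilinear polynomial in $x_1,\ldots,x_n$ of degree $\leq 2t$, which after symmetrization over coordinate permutations collapses to a univariate polynomial $P$ of degree $\leq 2t$ in $|x|\in\{0,\ldots,n\}$. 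The correctness constraint forces $P$ to approximately step from $0$ to $1$ across integer windows of length $\Theta(m)$ near $n/2$, while remaining bounded in $[0,1]$ on $\{0,1,\ldots,n\}$. Applying the extremal polynomial bound of \cite{nayak1999quantum} (built from Paturi's inequality together with Markov-type estimates on polynomials bounded on intervals) then yields $2t = \Omega(n/m)$. The delicate part, and the reason to cite the Nayak--Wu bound wholesale rather than redo it, is identifying the correct polynomial approximation problem and verifying its hypotheses for the symmetrized polynomial arising from the approximate counting task.
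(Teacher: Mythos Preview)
The paper does not prove this statement at all: it is stated as a \emph{Fact} and attributed wholesale to \cite[Theorem~1.11]{nayak1999quantum}, with no argument given. So there is no in-paper proof to compare against.

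Your sketch is a faithful reconstruction of the standard proofs, and in particular of the Nayak--Wu approach for the quantum lower bound. The upper bounds (Chebyshev sampling, amplitude estimation) and the classical two-point lower bound are routine and correct as stated; your treatment of the quantum lower bound via symmetrized polynomials and the Paturi/Markov-type degree bound is exactly the line of argument in \cite{nayak1999quantum}. One small point: for the classical lower bound you phrase the distinguishing argument in terms of hypergeometric means and variances, which is fine heuristically, but to make it rigorous you would want to bound the total variation distance between the two induced distributions on query transcripts (or equivalently use a KL/Hellinger argument), since comparing first two moments alone does not immediately yield an indistinguishability statement. This is standard to patch and does not affect the conclusion.
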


Before presenting our query lower bounds for $(\varepsilon,\tau)$-$k$-means, we shall need the following fact.
\begin{lemma}\label{fact:simple_fact}
    Given $x\in\mathbb{R}^d$ such that $\|x\|_1 \leq \varepsilon$ for $\varepsilon \geq 0$, there is $S\subseteq[d]$ with $|S| \geq \lceil\frac{d}{2}\rceil$ such that $|x_i| \leq \frac{2\varepsilon}{d}$ for all $i\in S$.
\end{lemma}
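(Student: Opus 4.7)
The plan is to prove this by a straightforward counting/Markov-type argument applied to the coordinates of $x$. Let $T := \{i \in [d] : |x_i| > 2\varepsilon/d\}$ be the set of ``large'' coordinates, and set $S := [d] \setminus T$, so that by construction $|x_i| \leq 2\varepsilon/d$ for every $i \in S$. It then remains to show that $|S| \geq \lceil d/2 \rceil$, equivalently $|T| \leq \lfloor d/2 \rfloor$.

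To bound $|T|$, I would simply observe the chain of inequalities
\begin{equation*}
    \varepsilon \;\geq\; \|x\|_1 \;=\; \sum_{i \in [d]} |x_i| \;\geq\; \sum_{i \in T} |x_i| \;>\; |T| \cdot \frac{2\varepsilon}{d},
\end{equation*}
where the strict inequality uses that each $i \in T$ satisfies $|x_i| > 2\varepsilon/d$ (assuming $T$ is nonempty; if $T = \emptyset$ the conclusion is immediate). Rearranging gives $|T| < d/2$, and since $|T|$ is an integer this forces $|T| \leq \lfloor d/2 \rfloor$, hence $|S| = d - |T| \geq d - \lfloor d/2 \rfloor = \lceil d/2 \rceil$, as desired.

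The only edge case worth a brief comment is $\varepsilon = 0$: then $\|x\|_1 = 0$ so $x = 0$ and one may take $S = [d]$, making the bound trivially true; otherwise one divides by $\varepsilon$ freely in the rearrangement above. There is no substantive obstacle here; the argument is a one-line application of Markov's inequality to the counting measure on $[d]$.
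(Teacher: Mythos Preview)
Your proof is correct and is essentially the same pigeonhole/Markov argument as the paper's: the paper sorts the coordinates and takes $S$ to be the smallest $\lceil d/2\rceil$ of them, whereas you define $S$ directly as the set of ``small'' coordinates, but in both cases the key inequality is that the contribution of the $\lfloor d/2\rfloor$ largest entries to $\|x\|_1$ already forces each remaining entry below $2\varepsilon/d$.
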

\begin{proof}
    Arrange the entries of $x$ is descending order, i.e., $|x_{k_1}| \geq |x_{k_2}| \geq \cdots \geq |x_{k_d}|$. Let $S = \{k_{\lfloor\frac{d}{2}\rfloor+1},\dots, k_{d}\}$. Then $\varepsilon \geq \sum_{j=1}^{\lfloor d/2\rfloor} |x_{k_j}| \geq \frac{d}{2} |x_{i}|$ $\forall i\in S$, which implies $|x_i| \leq \frac{2\varepsilon}{d}$ $\forall i\in S$. 
\end{proof}

\begin{theorem}\label{thr:lower_bounds}
    Let $n,k,d\in\mathbb{N}$ and $\varepsilon>0$. With entry-wise query access to $V = [v_1,\dots,v_n]\in\mathbb{R}^{d\times n}$ and $(\|v_i\|)_{i\in[n]}$ and classical description of partition $\{\mathcal{C}_j\}_{j\in[k]}$ of $[n]$ with $|\mathcal{C}_j| = \Omega(\frac{n}{k})$, outputting centroids $c_1,\dots,c_k\in\mathbb{R}^d$ such that $\big\|c_j - \frac{1}{|\mathcal{C}_j|}\sum_{i\in\mathcal{C}_j} v_i \big\| \leq \varepsilon$ for all $j\in[k]$ has randomised query complexity $\Omega\big({\min}\big\{\frac{\|V\|_F^2}{n}\frac{kd}{\varepsilon^2}, nd\big\}\big)$ and quantum query complexity $\Omega\big({\min}\big\{\frac{\|V\|_F}{\sqrt{n}}\frac{kd}{\varepsilon}, nd\big\} \big)$.
\end{theorem}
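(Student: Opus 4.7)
The plan is to reduce from $kd/2$ independent Hamming weight approximation problems (\Cref{fact:lower_bound_reduction}). First I would construct the hard instance, parameterized by $\alpha>0$: fix equal-sized clusters $|\mathcal{C}_j|=n/k$, and for each pair $(j,l)\in[k]\times[d]$ pick an independent bit-string $y^{(j,l)}\in\{0,1\}^{n/k}$ of Hamming weight $\Theta(n/k)$, encoded into $V$ by $V_{li}=\alpha\, y^{(j,l)}_i$ for $i\in\mathcal{C}_j$. Then $\|V\|_F^2=\Theta(\alpha^2 nd)$, so varying $\alpha$ sweeps out all Frobenius norms, and the $l$-th coordinate of the true centroid $\frac{1}{|\mathcal{C}_j|}\sum_{i\in\mathcal{C}_j}v_i$ equals $\frac{k\alpha}{n}|y^{(j,l)}|$.

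Second, I would convert an $\ell_2$-approximation of the centroids into coordinate-wise approximations. For each $j\in[k]$, set $x_j:=c_j-\frac{1}{|\mathcal{C}_j|}\sum_{i\in\mathcal{C}_j}v_i\in\mathbb{R}^d$, so $\|x_j\|\le\varepsilon$, hence $\|x_j\|_1\le\sqrt{d}\,\varepsilon$. Applying \Cref{fact:simple_fact} produces a set $S_j\subseteq[d]$ of size at least $\lceil d/2\rceil$ with $|(x_j)_l|\le \frac{2\varepsilon}{\sqrt{d}}$ for $l\in S_j$. Each such coordinate yields an approximation of $|y^{(j,l)}|$ with additive error $m=O\!\left(\frac{n\varepsilon}{k\alpha\sqrt{d}}\right)$. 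Since $S_j$ is algorithm-chosen, the reduction has to force the algorithm to handle every coordinate: I would use the standard trick of drawing the $y^{(j,l)}$'s independently, so that by averaging over the input distribution, any fixed set of $(j,l)$-indices must be estimated correctly with constant probability.

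Third, the entry-access model queries entries of $V$, and each query $V_{li}$ touches exactly one problem $(j(i),l)$, so the $kd/2$ subproblems are disjoint in their oracle. Invoking \Cref{fact:lower_bound_reduction} with $N=n/k$ and $m$ as above, a single subproblem requires $\Omega(\min\{(N/m)^2,N\})$ classical and $\Omega(\min\{N/m,N\})$ quantum queries; summing via a direct-sum argument yields $\Omega(kd\cdot\min\{(N/m)^2,N\})$ and $\Omega(kd\cdot\min\{N/m,N\})$. Plugging in $N/m=\Theta(\alpha\sqrt{d}/\varepsilon)$ and $\frac{\|V\|_F^2}{n}=\Theta(\alpha^2 d)$ gives the classical bound $\Omega(\min\{\frac{\|V\|_F^2}{n}\frac{kd}{\varepsilon^2},nd\})$ and the quantum bound $\Omega(\min\{\frac{\|V\|_F}{\sqrt{n}}\frac{kd}{\varepsilon},nd\})$.

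The main obstacle is the quantum direct-sum step: a naive union bound over $kd$ subproblems, each solved with failure probability $O(1/kd)$, would inflate the single-problem complexity by $O(\log(kd))$ factors and spoil the tight bound. To avoid this, I would invoke the quantum direct-product-style argument for independent oracles (e.g., a hybrid/adversary argument showing that $kd$ independent copies of a problem of quantum query complexity $T$ require $\Omega(kd\cdot T)$ queries in total, which holds for Hamming-weight estimation because its polynomial degree scales with $N/m$ and composes over disjoint variables). The classical direct sum is straightforward via a pigeonhole/Yao argument on the expected number of queries devoted to each of the $kd$ subproblems.
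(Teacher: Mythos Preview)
Your approach is essentially the paper's: encode independent bit-strings into the entries of $V$, convert the $\ell_2$ centroid guarantee to coordinate-wise guarantees via \Cref{fact:simple_fact}, and finish with \Cref{fact:lower_bound_reduction} plus a direct-sum argument. You even flag the quantum direct-sum subtlety that the paper passes over silently.

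The one genuine gap is the norm oracle. In your instance $\|v_i\|^2 = \alpha^2\sum_{l\in[d]} y^{(j(i),l)}_i$ varies with $i$, so a single norm query touches all $d$ subproblems of that cluster and your ``disjoint in their oracle'' premise no longer holds; simulating one norm query by bit-queries costs $d$ of them, which would lose exactly the factor $d$ you are trying to establish. The paper's construction is engineered precisely to neutralise this: it embeds bit-vectors $w_i$ of \emph{fixed} Hamming weight (first $\lfloor d/2\rfloor$ bits uniformly random, the next $\lfloor d/2\rfloor$ bits their bitwise complement), so that $\|v_i\|$ is constant on each cluster and norm access carries no information. Your instance is repaired by the same device --- e.g., append complementary coordinates $V_{d+l,i}=\alpha\bigl(1-y^{(j,l)}_i\bigr)$ to force $\|v_i\|^2=\alpha^2 d$ --- but as written the reduction is incomplete because you have not argued that the algorithm cannot exploit the norm oracle.
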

\begin{proof}
    Let $\{\mathcal{C}_j\}_{j\in[k]}$ be such that $\mathcal{C}_j = \{(j-1)\frac{n}{k} + 1,(j-1)\frac{n}{k}+2,\dots,j\frac{n}{k}\}$ for all $j\in[k]$. Consider the initial centroids $c_1^0,\dots,c_k^0\in\mathbb{R}^d$ defining $\{\mathcal{C}_j\}_{j\in[k]}$ as $c^0_j = (\frac{j}{k},0,0,\dots,0)$, i.e., its first entry is $\frac{j}{k}$. Now let $\alpha\in\mathbb{R}_+$ be a positive number to be determined later and $W := \{w\in\{0,1\}^d:|w| = \lfloor\frac{d}{2}\rfloor\}$. Note that $\|w\| = \sqrt{\lfloor d/2\rfloor}$ for all $w\in W$. We claim that there is a rotation around the origin $R:\mathbb{R}^d\to\mathbb{R}^d$ such that $(Rw)_1 = 0$ for all $w\in W$, i.e., the first entry of $Rw$ is zero. Let then $V = [v_1,\dots,v_n]\in\mathbb{R}^{d\times n}$ be such that, for each $j\in[k]$, the vectors $\{v_i\}_{i\in \mathcal{C}_j}$ are $v_{i} = \alpha c_j^0 + \alpha R w_i$ (here the multiplication by $\alpha$ is done entry-wise), where $w_i$ is randomly picked from $W$. To be more precise, we pick the first $\lfloor\frac{d}{2}\rfloor$ bits of $w_i$ completely randomly, and the next $\lfloor\frac{d}{2}\rfloor$ bits as the complement of the first half (plus a final $0$ bit if $d$ is odd). This means that the vectors $\{v_i\}_{i\in \mathcal{C}_j}$ belong to the $(d-1)$-sphere of diameter $\Theta(\alpha\sqrt{d})$ centered at $\alpha c_j^0$ and on the hyperplane orthogonal to $c_j^0$. Moreover, by construction, $\|v_i\| = \alpha\sqrt{\frac{j^2}{k^2} + \lfloor\frac{d}{2}\rfloor}$ is constant for all $i\in\mathcal{C}_j$, so access to $\|v_i\|$ does not give any meaningful information about $c_j$. Now, notice that
    \begin{align*}
        \|V\|_F^2 &= \alpha^2\sum_{j\in[k]}\sum_{i\in\mathcal{C}_j} \|c_j^0 + R_jw_i\|^2 \leq \alpha^2\sum_{j\in[k]}\sum_{i\in\mathcal{C}_j} 2(\|c_j^0\|^2 + \|R_jw_i\|^2) \leq 2n \alpha^2(1 + d/2)\\
        \implies \alpha &\geq \frac{\|V\|_F}{\sqrt{2n}}\frac{1}{\sqrt{1+ d/2}} = \Omega\left(\frac{\|V\|_F}{\sqrt{n}}\frac{1}{\sqrt{d}} \right).
    \end{align*}

    Assume we have an algorithm that outputs $c_1,\dots,c_k\in\mathbb{R}^d$ such that $\big\|c_j - \frac{1}{|\mathcal{C}_j|}\sum_{i\in\mathcal{C}_j} v_i \big\| \leq \varepsilon$ for all $j\in[k]$. This allows us to output $\widetilde{w}_j := \frac{|\mathcal{C}_j|}{\alpha}R^{-1}(c_j - \alpha c_j^0) = \frac{n}{\alpha k}R^{-1}(c_j - \alpha c_j^0)$. Consider the first $\lfloor\frac{d}{2}\rfloor$ bits of $\widetilde{w}_j$ and $\sum_{i\in\mathcal{C}_j} w_i$ only. Then
    \begin{align*}
        \sum_{\ell=1}^{\lfloor d/2\rfloor}\Bigg|\widetilde{w}_{j\ell} - \sum_{i\in\mathcal{C}_j} w_{i\ell}\Bigg| \leq \sqrt{\left\lfloor\frac{d}{2}\right\rfloor}\Bigg\|\widetilde{w}_j - \sum_{i\in\mathcal{C}_j} w_i\Bigg\| = \sqrt{\left\lfloor\frac{d}{2}\right\rfloor}\Bigg\|R\widetilde{w}_j - \sum_{i\in\mathcal{C}_j} R w_i\Bigg\| \leq \frac{n\varepsilon}{\alpha k}\sqrt{\left\lfloor\frac{d}{2}\right\rfloor}
    \end{align*}
    for all $j\in[k]$.  According to \Cref{fact:simple_fact}, for each $j\in[k]$ there is $S_j\subseteq[\lfloor\frac{d}{2}\rfloor]$ with $|S_j| \geq \lfloor\frac{d}{4}\rfloor$ such that $|\widetilde{w}_{j\ell} - \sum_{i\in\mathcal{C}_j} w_{i\ell}| \leq \frac{4n\varepsilon}{\alpha k\sqrt{d}}$ for $\ell\in S_j$, i.e., the number $\widetilde{w}_{j\ell}$ approximates $\sum_{i\in\mathcal{C}_j} w_{i\ell}$ up to additive error $\frac{4n\varepsilon}{\alpha k\sqrt{d}}$ for all $\ell\in S_j$ and $j\in[k]$. This means that we can approximate the Hamming weight of $k\lfloor\frac{d}{4}\rfloor$ \emph{independent} bit-strings on $|\mathcal{C}_j| = \frac{n}{k}$ bits up to additive error $\frac{4n\varepsilon}{\alpha k\sqrt{d}}$ (the first $\lfloor\frac{d}{2}\rfloor$ bits of $w_i$ are independent by construction). According to \Cref{fact:lower_bound_reduction}, the randomized and quantum query lower bounds for approximating $k\lfloor\frac{d}{4}\rfloor$ independent Hamming weights on $\frac{n}{k}$ bits each to precision $\frac{4n\varepsilon}{\alpha k\sqrt{d}}$ are, respectively,
    \begin{align*}
        \Omega\bigg(kd\min\bigg\{\frac{n^2}{k^2}\frac{\alpha^2 k^2 d}{n^2 \varepsilon^2}, \frac{n}{k}\bigg\}\bigg) &= \Omega\bigg({\min}\bigg\{\frac{\|V\|_F^2}{n}\frac{kd}{\varepsilon^2}, nd\bigg\} \bigg),\\
        \Omega\bigg(kd\min\bigg\{\frac{n}{k}\frac{\alpha k\sqrt{d}}{n\varepsilon}, \frac{n}{k} \bigg\} \bigg) &= \Omega\bigg({\min}\bigg\{\frac{\|V\|_F}{\sqrt{n}}\frac{kd}{\varepsilon}, nd\bigg\} \bigg). \qedhere
    \end{align*}
\end{proof}

\paragraph{Acknowledgments.}
ET thanks Haotian Jiang for initial help with the analysis. JFD and AL thank Varun Narasimhachar for useful discussions. JFD thanks Miklos Santha for helpful discussions regarding the lower bounds. AC is supported by a
Simons-CIQC postdoctoral fellowship through NSF QLCI Grant No.\ 2016245. JFD and AL acknowledge support from the National Research Foundation, Singapore and A*STAR under its CQT Bridging Grant and its Quantum Engineering Programme under grant NRF2021-QEP2-02-P05. JFD is also supported by ERC grant No.\ 810115-DYNASNET. ET acknowledges support from NSF GRFP (DGE-1762114) and the Miller Institute for Basic Research in Science, University of California Berkeley. Part of this work was done while JFD and AL were visiting the Simons Institute for the Theory of Computing.

\bibliographystyle{plain}
\bibliography{biblio.bib}

\end{document}